\newcommand{\NDOL}{OL$_{\text{nd}}$}
\newcommand{\setN}{\mathbb{N}}
\newcommand{\setZ}{\mathbb{Z}}
\DeclarePairedDelimiter{\denot}{\llbracket}{\rrbracket}
\DeclarePairedDelimiter{\edenot}{\llparenthesis}{\rrparenthesis}
\newcommand{\wlp}{\textbf{wlp}}
\newcommand{\false}{\text{false}}
\newcommand{\true}{\text{true}}
\newcommand{\eqdef}{\triangleq}
\newcommand{\regr}{\mathsf{r}}
\newcommand{\regc}{\mathsf{c}}
\newcommand{\Reg}{\mathsf{RCmd}}
\newcommand{\ACmd}{\mathsf{ACmd}}
\newcommand{\regplus}{\boxplus}
\newcommand{\kstar}{*}
\newcommand{\Var}{\text{Var}}
\newcommand{\AExp}{\mathsf{AExp}}
\newcommand{\BExp}{\mathsf{BExp}}
\newsavebox{\@brx}
\newcommand{\llangle}[1][]{\savebox{\@brx}{\(\m@th{#1\langle}\)}%
  \mathopen{\copy\@brx\mkern2mu\kern-1\wd\@brx\usebox{\@brx}}}
\newcommand{\rrangle}[1][]{\savebox{\@brx}{\(\m@th{#1\rangle}\)}%
  \mathclose{\copy\@brx\mkern2mu\kern-1\wd\@brx\usebox{\@brx}}}
\newcommand{\underexact}[1]{\ensuremath{[#1]}}
\newcommand{\overexact}[1]{\ensuremath{\{#1\}}}
\newcommand{\necessaryexact}[1]{\ensuremath{(#1)}}
\newcommand{\angleexact}[1]{\ensuremath{\llangle #1 \rrangle}}
\newcommand{\undertriple}[3]{\underexact{#1}~#2~\underexact{#3}}
\newcommand{\overtriple}[3]{\overexact{#1}~#2~\overexact{#3}}
\newcommand{\necctriple}[3]{\necessaryexact{#1}~#2~\necessaryexact{#3}}
\newcommand{\angletriple}[3]{\angleexact{#1}~#2~\angleexact{#3}}
\newcommand{\oltriple}[3]{\langle {#1}\rangle~#2~\langle{#3}\rangle}
\newcommand{\lrule}[1]{\ensuremath{\llangle\mathsf{#1}\rrangle}}
\newcommand{\irule}[1]{[{\ensuremath{\mathsf{#1}}}]}
\newcommand{\horule}[1]{\{{\ensuremath{\mathsf{#1}}}\}}
\newcommand{\olrule}[1]{\ensuremath{\langle\mathsf{#1}\rangle}}
\newcommand{\fwsem}[1]{\denot{#1}}
\newcommand{\bwsem}[1]{\denot{\overleftarrow{#1}}}
\newcommand{\Regh}{\mathsf{HRCmd}}
\newcommand{\Cmdh}{\mathsf{HACmd}}
\newcommand{\Loc}{\text{Loc}}
\newcommand{\Val}{\text{Val}}
\newcommand{\Stores}{\text{Store}}
\newcommand{\Heaps}{\text{Heap}}
\newcommand{\dom}{\text{dom}}
\newcommand{\errstate}{\textbf{err}}
\newcommand{\modified}{\text{mod}}
\newcommand{\dotsim}{\dot\sim}
\newcommand{\Asl}{\text{Asl}}
\newcommand{\emp}{\textbf{emp}}
\newcommand{\andsep}{*}
\newcommand{\asldenot}[1]{\{\hspace{-2pt}|#1|\hspace{-2pt}\}}
\newcommand{\fv}{\text{fv}}
\newcommand{\dealloc}{\not\mapsto{}}
\newcommand{\code}[1]{\texttt{#1}}
\newcommand{\svert}{\,\vert\,}
\newcommand{\sdot}{\,.\,}
\newcommand{\proofcase}[1]{\noindent \textbf{Case} #1 \newline}
\newcommand{\odd}{\mathit{odd}}
\newcommand{\even}{\mathit{even}}
\begin{document}
\newtheorem{remark}[theorem]{Remark}

\title{Sufficient Incorrectness Logic: SIL and Separation SIL}

\author{Flavio Ascari}
\orcid{0000-0003-4624-9752}
\affiliation{\department{Dipartimento di Informatica}\institution{Universit\`{a} di Pisa}\streetaddress{Largo B. Pontecorvo 3}\postcode{56127}\city{Pisa}\country{Italy}}
\email{flavio.ascari@phd.unipi.it}
\author{Roberto Bruni}
\orcid{0000-0002-7771-4154}
\affiliation{\department{Dipartimento di Informatica}\institution{Universit\`{a} di Pisa}\streetaddress{Largo B. Pontecorvo 3}\postcode{56127}\city{Pisa}\country{Italy}}
\email{roberto.bruni@unipi.it}
\author{Roberta Gori}
\orcid{0000-0002-7424-9576}
\affiliation{\department{Dipartimento di Informatica}\institution{Universit\`{a} di Pisa}\streetaddress{Largo B. Pontecorvo 3}\postcode{56127}\city{Pisa}\country{Italy}}
\email{roberta.gori@unipi.it}
\author{Francesco Logozzo}
\affiliation{Meta Platforms, Inc.\country{USA}}
\email{logozzo@meta.com}

\begin{abstract}

Sound over-approximation methods have been proved effective for guaranteeing the absence of errors, but inevitably they produce false alarms that can hamper the programmers.
Conversely, under-approximation methods are aimed at bug finding and are free from false alarms.
We introduce Sufficient Incorrectness Logic~(SIL), a new under-approximating, triple-based program logic to reason about program errors.
SIL is designed to set apart the initial states leading to errors.
We prove that SIL is correct and complete for a minimal set of rules, and we study additional rules that can facilitate program analyses.
We formally compare SIL to existing triple-based program logics. 
Incorrectness Logic and SIL both perform under-approximations, but while the former exposes only true errors, the latter locates the set of initial states that lead to such errors.
Hoare Logic performs over-approximations and as such cannot capture the set of initial states leading to errors in \emph{nondeterministic} programs -- for deterministic and terminating programs, Hoare Logic and SIL coincide.
Finally, we instantiate SIL with Separation Logic formulae (Separation SIL) to handle pointers and dynamic allocation and we prove its correctness and, for loop-free programs, also its completeness.
We argue that in some cases Separation SIL can yield more succinct postconditions and provide stronger guarantees than Incorrectness Separation Logic and can support effective backward reasoning.

\end{abstract}

\begin{CCSXML}
<ccs2012>
   <concept>
       <concept_id>10003752.10003790.10002990</concept_id>
       <concept_desc>Theory of computation~Logic and verification</concept_desc>
       <concept_significance>500</concept_significance>
       </concept>
   <concept>
       <concept_id>10003752.10003790.10003792</concept_id>
       <concept_desc>Theory of computation~Proof theory</concept_desc>
       <concept_significance>300</concept_significance>
       </concept>
   <concept>
       <concept_id>10003752.10003790.10011741</concept_id>
       <concept_desc>Theory of computation~Hoare logic</concept_desc>
       <concept_significance>300</concept_significance>
       </concept>
   <concept>
       <concept_id>10003752.10003790.10011742</concept_id>
       <concept_desc>Theory of computation~Separation logic</concept_desc>
       <concept_significance>500</concept_significance>
       </concept>
   <concept>
       <concept_id>10003752.10003790.10003806</concept_id>
       <concept_desc>Theory of computation~Programming logic</concept_desc>
       <concept_significance>300</concept_significance>
       </concept>
 </ccs2012>
\end{CCSXML}

\ccsdesc[500]{Theory of computation~Logic and verification}
\ccsdesc[300]{Theory of computation~Proof theory}
\ccsdesc[300]{Theory of computation~Hoare logic}
\ccsdesc[500]{Theory of computation~Separation logic}
\ccsdesc[300]{Theory of computation~Programming logic}

\keywords{Sufficient Incorrectness Logic, Incorrectness Logic, Necessary Conditions, Outcome Logic}

\maketitle

% !TEX root = ./main.tex

\section{Introduction}
Formal methods aim to provide tools for reasoning and establishing program guarantees.
Historically, research in formal reasoning progressed from manual correctness proofs to effective, automatic methods that improve program reliability and security.
In the late 60s, \citet{Floyd1967Flowcharts} and \citet{hoare69} independently introduced formal systems to reason about programs.
In the 70s/early 80s, the focus was on mechanization, with the introduction of numerous techniques such as predicate transformers \citep{DBLP:journals/cacm/Dijkstra75}, Abstract Interpretation \citep{DBLP:conf/popl/CousotC77}, model checking \citep{DBLP:conf/lop/ClarkeE81}, type inference \citep{DBLP:conf/popl/DamasM82} and mechanized program proofs~\citep{DBLP:conf/eurocal/CoquandH85}.
Those seminal works, in conjunction with the development of automatic and semi-automatic theorem provers (e.g.,~\cite{DBLP:conf/cade/Moura07}) brought impressive wins in proving program correctness of real-world applications.
For instance, the Astrée abstract interpreter automatically proves the absence of runtime errors in millions of lines of safety-critical C~\cite{DBLP:conf/pldi/BlanchetCCFMMMR03}, the SLAM model checker was used to check Windows drivers~\cite{DBLP:conf/cav/BallR01}, CompCert is a certified C compiler developed in Coq~\cite{DBLP:journals/cacm/Leroy09}, and VCC uses the calculus of weakest precondition to verify safety properties of annotated Concurrent C programs~\cite{DBLP:conf/tphol/CohenDHLMSST09}.

Despite the aforementioned successes, effective program correctness methods struggle to reach mainstream adoption.
As program correctness is undecidable, all those methods \emph{over-approximate} programs behaviours.
Over-approximation guarantees soundness (if the program is proved to be correct, no error will arise) but it may introduce many false positives.
False positives are seen as a distraction by professional programmers, who are skeptical of the adoption of tools with a high false positive ratio.
Tools such as Microsoft Prefix~\citep{DBLP:conf/icse/HackettDWY06} or Microsoft CodeContracts~\citep{DBLP:conf/foveoos/FahndrichL10} aim at reducing the number of false positives by using human-provided annotations. 
Nevertheless, in some scenarios, the annotation effort may be over-killing so that even those methods struggle to enjoy universal adoption.

To address the aforementioned issues, researchers have recently invested more efforts in effective \emph{under-approximations}.
The overall goal is to develop \emph{principled} techniques to assist programmers with timely feedback on their code about the presence of true errors, with few or zero false alarms.
For instance, Microsoft Sage combines SMT solving with program executions (which are an under-approximation of the program semantics) to find security bugs~\citep{DBLP:journals/cacm/GodefroidLM12} and the Bounded Model Checking technique unrolls recursion up to a certain depth to find property violations~\citep{DBLP:journals/ac/BiereCCSZ03}.

Recently, \citet{DBLP:journals/pacmpl/OHearn20} introduced Incorrectness Logic (IL).
Intuitively, IL uses under\hyp{}approximations to reduce the problem complexity by dropping disjunctions during the state space exploration.
IL is the theoretical foundation of many successful bug-catching tools \citep{DBLP:conf/cav/RaadBDDOV20,DBLP:journals/pacmpl/RaadBDO22, DBLP:journals/cacm/DistefanoFLO19,DBLP:journals/pacmpl/LeRVBDO22,DBLP:conf/cpp/CarbonneauxZKON22} and the first Workshop on ``Formal Methods for Incorrectness'' affiliated with POPL'24 witnesses the interest around this new line of research.
IL triples resemble Floyd-Hoare logic ones, but they have a different meaning: any error state that satisfies the postcondition can be produced by some input states that satisfy the precondition. Since in this paper we deal with different kinds of program logics, a compact legenda of the different notation used in the literature for the various triples is reported in Fig.~\ref{fig:legenda}.
It is worth noting that in IL nothing is said about \emph{which input states are responsible for a given error}.
This is possibly rooted in the \emph{forward} flavour of the under-approximation, which follows the ordinary direction of code execution.

\begin{figure}[t]
	\centering
	\begin{tabular}{c|@{\qquad}c@{\qquad}c@{\qquad}c@{\qquad}c@{\qquad}c@{\qquad}}
	Logic & HL & IL & OL & NC & SIL 
	\\[4pt] \hline &&&&& \\[-6pt]
	Triples & 
	$\overtriple{P}{\regr}{Q}$ &
	$\undertriple{P}{\regr}{Q}$ &
	$\oltriple{P}{\regr}{Q}$ &
	$\necctriple{P}{\regr}{Q}$ &
	$\angletriple{P}{\regr}{Q}$ 
	\end{tabular}
	\caption{Legenda: the figure shows the different syntax for the various triples used in this paper.}
	\label{fig:legenda}
\end{figure}

\subsection{Backward under-approximation and Lisbon Triples}\label{sec:origin}
At POPL'19 in Lisbon, according to the origins of IL first narrated by~\citet[\S 7]{DBLP:journals/pacmpl/OHearn20} and further discussed in~\citet[\S 8]{DBLP:journals/corr/ZilbersteinDS23}, Derek Dreyer and Ralf Jung suggested Peter O'Hearn to look at bug-finding in terms of a logic for proving the presence of faults. However, the proposed model of triples did not fit well with a key feature of Pulse, a bug-catching tool developed mainly by Jules Villard at Facebook, namely its ability to drop the analysis of some program paths, for which IL provides a sound logical foundation instead. The idea of such `Lisbon' triples is that \emph{for any initial state satisfying the precondition, there exists some trace of execution leading to a final state satisfying the postcondition} and it can be dated back to Hoare’s calculus of possible correctness~\cite{DBLP:journals/jacm/Hoare78}, even if no form of approximation was considered there. This condition aims at finding input states causing bugs: if the postcondition denotes an error, any initial state satisfying the precondition exhibits at least one unwanted behaviour. Lisbon triples were then briefly accounted for in~\citet[\S 5]{DBLP:conf/RelMiCS/MollerOH21} and \citet[\S 3.2]{DBLP:journals/pacmpl/LeRVBDO22}, under the name \emph{backwards under-approximate triples}. Later, the idea of finding sources of error was one of the motivations leading to the more general Outcome Logic (OL)~\cite{DBLP:journals/corr/ZilbersteinDS23}, which is able to reason about multiple executions at the same time via the outcome conjunction~$\oplus$. \citet{DBLP:journals/corr/ZilbersteinDS23} encode Lisbon triples as OL triples matching the specific pattern $\oltriple{P}{\regr}{Q\oplus \top}$. OL can also exploit triples \`a la HL for partial correctness. In fact, OL has been designed to provide a single unified theory that can be used for both correctness and incorrectness reasoning.
A sound and complete proof system for Lisbon triples has been independently proposed in the recent preprint \citep{unter2024}, where they are combined with IL triples for non-termination analysis of programs.

Ideally, given an incorrectness specification, the goal of backward under-approximation would be to report their \emph{weakest preconditions} to the programmer, for discerning all dangerous input states that lead to bugs.
However, all the above proposals exploit logic rules that are designed according to the forward semantics of programs and thus are best suited to infer postconditions starting from some given precondition: in general, their backward analysis may require some ingenious guess of the right predicates to instantiate the consequence rule with.
Here, by exploiting backward program semantics, we design a minimal, sound and complete proof system for Lisbon triples that is aimed at inferring weakest preconditions.

\subsection{Sufficient Incorrectness Logic}
\begin{figure}
\begin{subfigure}[t]{0.48\textwidth}
\begin{lstlisting}[language=C]
// program r42

if (x is even) {
    if (y is odd) {
        z := 42;
    }
}
// assert(z != 42)
\end{lstlisting}
\caption{A deterministic program where SIL is different from IL.}
%The SIL triple $\angletriple{\even(x) \wedge \odd(y)}{\regr 42}{Q}$ denotes that the error state $Q=(z=42)$ is reached from the precondition $\even(x) \wedge \odd(y)$.
\label{fig:simple_example}
\end{subfigure}
\hfill
\begin{subfigure}[t]{0.48\textwidth}
\begin{lstlisting}[language=C]
// program r42nd
x := nondet();
if (x is even) {
    if (y is odd) { 
        z := 42;
    }
}
// assert(z != 42)
\end{lstlisting}
\caption{A nondeterministic program where SIL is different from HL.}\label{fig:simple_example_nondet}
\end{subfigure}
\caption{For those two programs, SIL can prove that $\even(x) \land \odd(y)$ is a sufficient precondition for the assertion violation.}
%\caption{Two sample programs used to compare SIL with existing logics.}
\end{figure}

In this paper, we introduce Sufficient Incorrectness Logic (SIL), a backward\hyp{}oriented under\hyp{}approximation proof system that focuses on \emph{finding the sources of incorrectness rather than just highlighting the presence of bugs}.
SIL formalizes exactly Lisbon triples: a valid SIL triple
\[
\angletriple{P}{\regr}{Q} 
\]
can be interpreted as ``\emph{all input states that satisfy $P$ have at least one execution of the program $\regr$ leading to a state that satisfies $Q$}".
In the absence of nondeterminism, SIL guarantees that a state satisfying the precondition \emph{always} leads to an error.
In the presence of nondeterminism, SIL guarantees that there \emph{exists} an execution that leads to an error.
Sufficient incorrectness preconditions are extremely valuable to programmers, because, by pointing out the sources of errors, they serve as starting point to scope down debugging, fuzzing, and testing in general, as was already observed in \citet[§2.3]{DBLP:journals/corr/ZilbersteinDS23}.
Differently from OL, SIL goals include: (i)~defining a default deduction mechanism that starts from some specification of erroneous outcomes and traces the computation back to some initial states that can produce such errors; (ii)~exhibiting a minimal set of rules that are complete and correct for backward under-approximation; and (iii)~spelling out, a posteriori, a formalization and generalization of the backward analysis step performed by industrial grade analysis tools for security developed at Meta, such as Zoncolan~\citep{DBLP:journals/cacm/DistefanoFLO19}, Mariana Trench~\citep{MarianaTrench}, and Pysa~\citep{Pysa} (see Section~\ref{sec:tool}).

\subsection{Why Sufficient Incorrectness Logic?}
SIL characterizes the initial states leading to an error.
However, one may ask if we do need to develop a new logic and whether existing ones suffice.
This is a legitimate question, and we answer it by comparing SIL with existing program logics.

\subsubsection{SIL and Incorrectness Logic}\label{sec:SILvsIL}
Let us consider the program $\regr 42$ in Figure~\ref{fig:simple_example}, also used by \citet{DBLP:journals/pacmpl/OHearn20} to show the essence of IL triples.
We assume that $Q_{42} \eqdef (z = 42)$ denotes the set of erroneous states.
Any SIL triple provides a sufficient condition for the input state that causes the error. 
For instance, with the deduction system introduced in this paper (but not with IL~\citep[§3.1]{DBLP:journals/pacmpl/OHearn20}) we can prove the triple
\[
\angletriple{\odd(y) \wedge \even(x)}{\regr 42}{Q_{42}}
\]
stating that any state where $x$ is even and $y$ is odd will lead to the error (i.e., $z=42$).
In IL, but \emph{not} in SIL, one can prove that some errors can be reached even when we start in a safe state.
For instance, the following triple holds in IL:
\begin{equation}
\label{eq:il-simple-det}
\undertriple{z=11}{\regr 42}{Q_{42} \wedge \odd(y) \wedge \even(x)}.
\end{equation}
However, that triple is not valid in SIL, because it is not true that for any state where $z=11$ the program will reach an error, e.g., when we start from the state $z\mapsto 11,x\mapsto 0,y\mapsto 0$.

\subsubsection{SIL and Hoare logic}\label{sec:sil-and-hl}
If a program $\regr$ is deterministic and non-divergent, then SIL triples are equivalent to Hoare triples---we only interpret the postcondition in the latter as the error states.
For instance, in the example of Figure~\ref{fig:simple_example}, the program $\regr 42$ is trivially deterministic and non-divergent, so the Hoare triple 
\[
\overtriple{\odd(y) \wedge \even(x)}{\regr 42}{Q_{42}}
\]
is valid (where $Q_{42} \eqdef (z=42)$ as before).
However, this is no longer true in the presence of divergence and nondeterminism.
To illustrate this, consider the program $\mathsf{r42nd}$ in Figure~\ref{fig:simple_example_nondet}, which introduces nondeterminism in the program $\regr 42$ of Figure~\ref{fig:simple_example}.
The SIL triple 
\begin{equation}
\label{eq:sil:nondet}
\angletriple{\odd(y) \wedge \even(x)}{\mathsf{r42nd}}{Q_{42}}
\end{equation}
is still valid: any state satisfying the precondition can lead to the error when an even value is assigned to $x$.
However, the Hoare triple 
\[
\overtriple{\odd(y) \wedge \even(x)}{\mathsf{r42nd}}{Q_{42}}
\] 
is no longer valid, because starting from an initial state where $\mathtt{z} \neq 42$, $\mathtt{x}$ may be assigned an odd value, hence never entering the conditional.
Otherwise stated, in Hoare triples, the postcondition must over-approximate all the possible outcomes, which in the presence of nondeterminism (or divergence) means that it cannot single out the error states, as under-approximating logics can do.

\subsubsection{SIL and Outcome logic}\label{sec:sil-and-ol}
Outcome Logic (OL)~\citep{DBLP:journals/corr/ZilbersteinDS23} extends HL by allowing to express both correctness and incorrectness properties. Although OL has been designed as a general framework that is parametric in a monoid of outcomes, the instance we compare with SIL is based on the powerset monoid, called the nondeterministic instance of OL (\NDOL).
Even though Lisbon triples can be expressed in OL, we show a simple example of a Lisbon triple whose proof in SIL is straightforward but which we were not able to derive in \NDOL.
To understand the example, it is worth noting that in \NDOL{} there are three different forms of disjunctive predicates that coexist with different meanings. To see this, take the two atomic assertions $P_x \eqdef (x = 0)$ and $P_y \eqdef (y = 0)$. A set of states $m$ satisfies their union $P_x \cup P_y$ if $\forall \sigma\in m. (\sigma(x)=0 \vee \sigma(y)=0)$. 
Instead, $m$ satisfies the disjunction $P_x \vee P_y$ if $(\forall \sigma\in m. \sigma(x)=0) \vee (\forall \sigma\in m. \sigma(y)=0)$. Lastly, $m$ satisfies the outcome composition $P_x \oplus P_y$ if $m$ can be decomposed as the union of two non-empty sets $m_x$ and $m_y$ that satisfy $P_x$ and $P_y$ respectively. The difference between $P_x \cup P_y$ and $P_x \vee P_y$ should be clear. If we take $m \eqdef\{ \sigma \mid \sigma(x)=0 \wedge \sigma(y)=1\}$, then $m$ satisfies both $P_x \cup P_y$ and $P_x \vee P_y$, but not $P_x \oplus P_y$.
 
Now take the nondeterministic code
\begin{equation}
	\regr_{xy} \eqdef \code{((y=0)?; x := 0)}\; \regplus\; \code{((x=0)?; y := 0)}\label{eq:sil-and-ol-example-triple}
\end{equation}
and let $Q\eqdef (x=0 \cap y=0)$ be an incorrectness specification. By straightforward application of SIL rules we can derive the Lisbon triple:
\[
 \angletriple{P_x\cup P_y}{\regr_{xy}}{Q}
\]
The corresponding \NDOL{} triple is
\[
 \oltriple{P_x\cup P_y}{\regr_{xy}}{Q\oplus \top}
\]
but, by using the rules available in~\citet{DBLP:journals/corr/ZilbersteinDS23}, to our best efforts, we were only able to prove the \NDOL{} triple:
\[
% \oltriple{P_x}{\regr_{xy}}{Q\oplus \top}
% \qquad
% \oltriple{P_y}{\regr_{xy}}{Q\oplus \top}
% \qquad
 \oltriple{P_x\oplus P_y}{\regr_{xy}}{Q\oplus \top} .
\]

Notably, the simpler Lisbon triples $\angletriple{P_x}{\regr_{xy}}{Q}$ and $\angletriple{P_y}{\regr_{xy}}{Q}$ can be derived in both SIL and \NDOL{} proof systems, but the single triple $\angletriple{P_x\cup P_y}{\regr_{xy}}{Q}$ exposes the weakest SIL precondition for the specification $Q$, which is useful to determine the least general assumptions that may lead to an error. Thus the example suggests that weakest preconditions are not necessarily derivable in \NDOL.
Note that a revised and improved version of OL is currently under development \citep{zilberstein2024relatively}, where completeness is achieved by introducing, among others, a dedicated rule for disjunction and exploiting the equivalence $P_x\cup P_y \equiv P_x \vee P_y \vee (P_x\oplus P_y)$. 

\subsubsection{SIL and Necessary Preconditions}\label{sec:introSILvsNC}
\citet{DBLP:conf/vmcai/CousotCL11,DBLP:conf/vmcai/CousotCFL13} introduced Necessary Conditions~(NC) as a principled foundation for automatic precondition inference.
Intuitively a necessary precondition is such that it removes entry states that inevitably will lead to errors \emph{without} removing any good execution.
Therefore, whereas sufficient conditions (e.g., weakest liberal preconditions) require the caller of a function to supply parameters that will never cause an error, necessary conditions only prevent the invocation of the function with arguments that will definitely lead to some error.
Whilst Necessary Conditions and SIL have overlaps (e.g., they focus on providing conditions for errors) they express two different concepts.

Let us illustrate it with the example of Figure~\ref{fig:simple_example_nondet} with again error states $Q_{42} \eqdef (z=42)$.
The correctness specification is thus $\neg Q_{42}$, that is $z \neq 42$.
A necessary precondition for correctness is also $z \neq 42$ because, no matter which nondeterministic value gets assigned to $\mathtt{x}$, if $z = 42$ on entry, the program will reach an error state.
We denote it with the NC triple 
\begin{equation}
    \label{eq:nondet:nc}
\necctriple{z\neq 42}{\mathsf{r42nd}}{\neg Q_{42}}.
\end{equation}
Please note that the precondition of the SIL triple~(\ref{eq:sil:nondet}) has a non-empty intersection with the necessary condition in~(\ref{eq:nondet:nc}), i.e., the formula $z\neq 42 \wedge \odd(y) \wedge \even(x)$ is satisfiable.
This should not come as a surprise: due to nondeterminism, a state satisfying that formula admits both correct and erroneous executions.

\subsection{A Taxonomy of Triple-Based Program Logics}\label{sec:taxonomy}
\begin{figure}[t]
	\centering
	\begin{tabular}{c|ccc}
		& Forward & & Backward\\[3pt]
		\hline &&& \\[-5pt]
		\quad Over\quad\  &\quad Hoare Logic  & $\xleftrightarrow{\qquad\simeq\qquad}$ & Necessary Conditions \\ [5pt]
		\quad Under\quad\  & \quad Incorrectness Logic & & \textcolor{ACMBlue}{Sufficient Incorrectness Logic}
	\end{tabular}
	\caption{The taxonomy of triple-based logics. The column determines whether it is based on forward or backward semantics. The row determines whether it performs under- or over-approximation. HL and NC are equivalent ($\simeq$). SIL is the new logic we introduce here.}
	\label{fig:square-full}
\end{figure}

Our initial motivation for the work in this paper emerged from the attempt to characterize the validity conditions for existing triple-based program logics as under- or over-approximations of forward or backward semantics. 
This led to the taxonomy of HL, IL and NC depicted in Figure~\ref{fig:square-full}.
In this formalization process, we realized there was a missing combination, the one that originated SIL.

The taxonomy is obtained as follows.
Given a (possibly nondeterministic) program $\regr$ let us denote by $\fwsem{\regr}$ (resp. $\bwsem{\regr}$) its forward (resp. backward) collecting semantics.
The forward semantics $\fwsem{\regr}P$ denotes the set of all possible output states of $\regr$ when execution starts from a state in $P$ (and $\regr$ terminates).
Vice versa, the backward semantics $\bwsem{\regr}Q$ denotes the largest set of input states that can lead to a state in $Q$.

We define the validity of the various logics in terms of forward and backward semantics.

\begin{definition}[Triples validity]
\label{def:validity}
For any program $\regr$ and sets of states $P,Q$ we let the following:

\smallskip

\begin{tabular}{llll}
\textbf{HL triples:} & $\overtriple{P}{\regr}{Q}$ & is valid if & $\fwsem{\regr} P \subseteq Q$; \\
\textbf{IL triples:} & $\undertriple{P}{\regr}{Q}$ & is valid if & $\fwsem{\regr} P \supseteq Q$; \\
\textbf{NC triples:} & $\necctriple{P}{\regr}{Q}$ & is valid if & $\bwsem{\regr}Q \subseteq P$; \\
\textbf{SIL triples:} & $\angletriple{P}{\regr}{Q}$ & is valid if & $\bwsem{\regr}Q \supseteq P$.
\end{tabular} 
\end{definition}

By using the validity conditions in Definition~\ref{def:validity}, we can characterize HL, IL, NC, and SIL according to (i) whether the condition is expressed in terms of forward or backward semantics, and (ii) whether it is an over- or an under-approximation (Figure~\ref{fig:square-full}).
Indeed, in HL and NC it is safe to enlarge the respective target set (i.e., $Q$ and $P$) so we do categorize them as over-approximations.
Similarly, in IL and SIL it is safe to shrink their target sets, so we do classify them as under-approximations.
Furthermore, from the validity conditions, we can immediately derive the consequence rules for each logic: for HL and SIL we are allowed to weaken $Q$ and strengthen $P$, while for IL and NC we can do the opposite, i.e. the direction of the consequence rules are determined by the diagonals in Figure~\ref{fig:square-full}. 
Notwithstanding the fact that NC and IL share the same consequence rule, we will show that NC and IL are not related, whereas NC is tightly connected to HL. 
We obtain that an NC triple $\necctriple{P}{\regr}{Q}$ is valid \emph{if and only if} the HL triple $\overtriple{\lnot P}{\regr}{\lnot Q}$ is valid. 
By duality, one might expect a similar connection to hold between IL and SIL, but we show this is not the case.

Interestingly, IL and SIL share the possibility to drop disjuncts, which was in fact one of the leading motivations for the introduction of IL to increase scalability and make the methods effective: the difference is that IL drops disjuncts in the postconditions, while SIL in the preconditions, a feature that is illustrated in Section~\ref{sec:separation-sil-derivation}.

\subsection{Separation Sufficient Incorrectness Logic}\label{sec:introSSIL}
\begin{figure}[t]
\begin{minipage}[t]{0.4\textwidth}
\begin{lstlisting}[language=C]
// program rclient
x := *v;
push_back(v);
*x := 1;
\end{lstlisting}
\end{minipage}\qquad
\begin{minipage}[t]{0.4\textwidth}
\begin{lstlisting}[language=C]
push_back(v) {
  if (nondet()) {
    free(*v);
    *v := alloc();
  }
}
\end{lstlisting}
\end{minipage}
\caption{An example (from \citet{DBLP:conf/cav/RaadBDDOV20}) where Separation SIL produces a more compact postcondition than Incorrectness Separation Logic.}
\label{fig:separationexample}
\end{figure}

Using classical atomic commands for heap manipulation and a suitable language of assertions, we instantiate SIL to the case of Separation Logic, to derive a novel sound and (relatively) complete logic that can identify the causes of memory errors.
We call that new logic Separation SIL.
Intuitively, Separation SIL borrows the ability to deal with pointers from Separation Logic and combines it with the backward under-approximation principles of SIL.

We exemplify Separation SIL using the motivating example of~\citet{DBLP:conf/cav/RaadBDDOV20} for Incorrectness Separation Logic (ISL).
Consider the program in Figure~\ref{fig:separationexample}.
The authors derive the incorrectness triple
\begin{equation}
\label{eq:separation:example:il}
\undertriple{v \mapsto z \andsep z \mapsto -}{\mathsf{rclient}}{v \mapsto y \andsep y \mapsto - \andsep x \dealloc}
\end{equation}
Using Separation SIL we can derive the triple
\begin{equation}
\label{eq:separation:example:separationsil}
\angletriple{v \mapsto z \andsep z \mapsto - \andsep \true}{\mathsf{rclient}}{x \dealloc \andsep \true}
\end{equation}

The triple~\eqref{eq:separation:example:il} proves the existence of a faulty execution starting from at least one state in the precondition.
However, the Separation SIL triple~\eqref{eq:separation:example:separationsil} has both a more succinct postcondition capturing the error and a stronger guarantee: \emph{every} state in the precondition reaches the error, giving (many) actual witnesses for testing and debugging the code.
In Section~\ref{sec:separation-sil-derivation} we show the derivation and in Section~\ref{sec:separation-sil-principles} comment on how SIL principles apply in it.

\subsection{Contributions}
After comparing our work with well-established results in Section~\ref{sec:relatedwork} and recalling some basic concepts and notation in Section~\ref{sec:background}, we make the following three main contributions:
\begin{itemize}
	\item In Section~\ref{sec:sil}, we introduce SIL, a novel logic for sufficient incorrectness triples. SIL enhances program analysis frameworks with the ability to identify the source of incorrectness. We prove that SIL is correct and complete for a minimal set of rules, and we propose some additional rules for program analysis.
	\item Section~\ref{sec:comparison} gives some insights into the relations among SIL, HL, IL and NC by constructing a taxonomy of program logics. Following the classification, we prove that IL and SIL perform under\hyp{}approximations in different ways, i.e., finding the existence vs. the causes of errors, that SIL and HL coincide for deterministic and terminating programs, and that NC is isomorphic to HL and is not related to IL.
	\item In Section~\ref{sec:separation-sil}, we instantiate SIL to Separation Logic formulae to handle memory errors. We prove that Separation SIL is correct and, for loop\hyp{}free programs, is also complete. By means of a simple single example, we claim that Separation SIL triples can be more convenient than ISL ones for testing and debugging and that our backward-oriented deduction strategy is more natural than OL's.
\end{itemize}

We conclude in Section~\ref{sec:conc} pointing out some future work. Appendix~\ref{sec:proofs} contains proofs and minor technical results.

\section{Related work}\label{sec:relatedwork}
The origins of triples for highlighting sufficient preconditions for incorrectness have already been discussed in Section~\ref{sec:origin}.
In the literature, backwards under\hyp{}approximation triples were mentioned in \citet{DBLP:conf/RelMiCS/MollerOH21,DBLP:journals/pacmpl/LeRVBDO22}, but neither of them fully developed a corresponding program logic. Instead, we develop such a backward\hyp{}oriented, \emph{sound and complete} proof system for the first time, study the properties of the deductive system, compare it with existing program logics, and instantiate it for Separation Logic.
Moreover, \citet[§3.2]{DBLP:journals/pacmpl/LeRVBDO22} introduce the notion of \emph{manifest errors} in the context of IL, but mention that backwards under-approximation is able to characterize them directly. Intuitively, an error is manifest if it happens regardless of the context. Manifest errors can be easily captured in SIL: thanks to the completeness result, a postcondition $Q$ defines a manifest error if and only if the SIL triple $\angletriple{\true}{\regr}{Q}$ is provable.

The idea of tracking the sources of errors was one of the motivations that led to Outcome Logic~\citep{DBLP:journals/corr/ZilbersteinDS23}, which is able to express both SIL and HL triples. Manifest errors can also be characterized with OL triples of the form $\oltriple{\true}{\regr}{Q \oplus \top}$. Some differences between OL and SIL have been already sketched in Section~\ref{sec:sil-and-ol} and we refer to Section~\ref{sec:silvsol} for a more technical comparison. In particular, we argue that SIL proof rules are simpler to use for inferring the source of errors starting from an incorrectness specification.
%Contrary to SIL, weakest sufficient preconditions are not necessarily derivable in \NDOL{}, unless it is revised as suggested in the preprint \citep{zilberstein2024relatively}.

Exact Separation Logic (ESL)~\citep{DBLP:conf/ecoop/MaksimovicCLSG23} combines Separation Logic and Incorrectness Separation Logic to recover the exact semantics of a program and produce function summaries that can be used for both correctness and incorrectness. In the schema in Figure~\ref{fig:square-full}, ESL would be placed between HL and IL.
%This suggests an extension of the schema with logics also in between the four table cells. 
Following this idea, we could position the calculus of possible correctness \citep{DBLP:journals/jacm/Hoare78} between NC and SIL; OL along the diagonal between HL and SIL; and the forward-oriented core set of rules in~\citet[Fig.~2]{unter2024} between IL and SIL.

Using abstract interpretation techniques, \citet{cousotpopl2024} defines a general procedure to derive proof systems starting from the underlying triple validity conditions. Most of the above  logics can be recovered in Cousot's framework by combining a small set of abstractions in different ways, as summarised in \citet[Fig.~3]{cousotpopl2024}.

\citet{DBLP:journals/pacmpl/ZhangK22} propose a calculus for quantitative weakest pre and strongest postconditions, that subsume the Boolean case. 
%Their Boolean weakest (resp. strongest) precondition is equivalent to our backward (resp. forward) semantics. 
Using these, they propose a classification reminiscent of Definition~\ref{def:validity}. They devise notions of total/partial (in)correctness that correspond to the four logics, and draw the same correspondence between HL and NC. However, they neither develop proof systems, nor compare in detail SIL and IL with other logics.

%Algebraic Hoare Logic (AHL) \cite{DBLP:conf/oopsla/CousotCLB12} is another generalization of HL aiming at formalizing Design by Contracts~\cite{DBLP:books/ph/Meyer97} and program refactoring.
%We take inspiration from AHL to model pre/post-conditions in triple-based logic as sets of states---and not assertions in some formal language.
%Our development is different because we want to characterize the set of input states that bring to an error, not to prove that the transformed program is semantically equivalent to (or better than) the original one.

Dynamic Logic (DL) \citep{book:dynamic-logic} is an extension of modal logic which is able to describe program properties.
It is known that HL and IL can be encoded in DL using forward box and backward diamond operator, respectively \citep{DBLP:journals/pacmpl/OHearn20}. Moreover, both SIL and NC can be encoded using forward diamond (the latter via Proposition~\ref{prop:fw-inclusion-negation-bw} and the HL encoding).
Thus, all four conditions in Definition~\ref{def:validity} can be encoded in DL, which can then be used to understand their connections and possibly get new insights.

\section{Background}\label{sec:background}
\subsection{Regular Commands.}\label{sec:reg}
Like \citet{DBLP:journals/pacmpl/OHearn20}, we consider a simple regular command language with an explicit nondeterministic operator $\regplus$.
HL can be translated in this framework as in \citet{DBLP:conf/RelMiCS/MollerOH21}.
As usual, we define arithmetic expressions $\code{a} \in \AExp$ and Boolean expressions $\code{b} \in \BExp$ as follows:
\begin{align*}
	\AExp \ni \code{a} ::= \; n \mid x \mid \code{a} \diamond \code{a} \qquad
	\BExp \ni \code{b} ::= \; \false \mid \lnot \code{b} \mid \code{b} \land \code{b} \mid \code{a} \asymp \code{a}
\end{align*}

\noindent
where $n \in \setZ$ is a natural number, $x$ is a (integer) variable, $\diamond \in \{ +, -, \cdot, \dots \}$ encodes standard arithmetic operations, and $\asymp \in \{ =, \neq, \le, <, \dots \}$ standard comparison operators.

The syntax of regular commands $\regr\in\Reg$ is:
\begin{equation}
	\ACmd \ni \regc ::= \; \code{skip} \mid \code{x := a} \mid \code{b?} \qquad
	\Reg\ni \regr ::= \; \regc \mid \regr;\regr\mid \regr \regplus \regr \mid \regr^\kstar \label{eq:reg-commands-def}
\end{equation}

Regular commands provide a general template which can be instantiated differently by changing the set $\ACmd$ of atomic commands $\regc$. The set $\ACmd$ determines the kind of operations allowed in the language, and we assume it to contain deterministic assignments and boolean guards. 
The command \code{skip} does nothing, while \code{x := a} is the standard deterministic assignment. The semantics of \code{b?} is that of an ``assume" statement: if its input satisfies \code{b} it does nothing, otherwise it diverges. At times, we also use \code{nondet()} to describe either a nondeterministic assignment (\code{x := nondet()}) or boolean expression (\code{nondet()?}).
Sequential composition is written $\regr; \regr$ and $\regr \oplus \regr$ is the nondeterministic choice. The Kleene star $\regr^\kstar$ denotes a nondeterministic iteration, where $\regr$ can be executed any number of time (possibly none) before exiting. 

This formulation can accommodate for a standard imperative programming language \cite{winskel93} by defining conditionals and \code{while}-statements as below:
\begin{align*}
	\code{if (b) \{} \code{r}_1 \code{\} else \{} \code{r}_2 \code{\}}& \ \eqdef\ (\code{b?; r}_1) \regplus ((\lnot \code{b})\code{?; r}_2) \\
	\code{while (b) \{r\}} & \ \eqdef\ (\code{b?; r})^\kstar; (\lnot \code{b})\code{?}
\end{align*}

Given our instantiation of $\ACmd$, we consider a finite set of variables $\Var$ and the set of stores $\Sigma \eqdef \Var \rightarrow \setZ$ that are (total) functions $\sigma$ from $\Var$ to integers. We tacitly assume that $0$ is the default value of uninitialised variables. Given a store $\sigma \in \Sigma$, store update $\sigma[ x \mapsto v ]$ is defined as usual for $x \in \Var$ and $v \in \setZ$. We consider an inductively defined semantics $\edenot{\cdot}$ for arithmetic and boolean expressions such that $\edenot{\code{a}}\sigma \in \setZ$ and $\edenot{\code{b}}\sigma \in \mathbb{B}$ for all $\code{a} \in \AExp$, $\code{b} \in \BExp$ and $\sigma \in \Sigma$. The semantics of atomic commands $\regc \in \ACmd$ and $S \in \wp(\Sigma)$ is defined below:
\[
\edenot{\code{skip}} S \eqdef S \qquad\quad
\edenot{\code{x := a}} S \eqdef \left\lbrace \sigma[x \mapsto \edenot{\code{a}} \sigma] \svert \sigma \in S \right\rbrace \qquad\quad
\edenot{\code{b?}} S \eqdef \left\lbrace \sigma \in S \svert \edenot{\code{b}} \sigma = \textbf{tt} \right\rbrace
\]

The collecting (forward) semantics of regular commands $\fwsem{\cdot} : \Reg \rightarrow \wp(\Sigma) \rightarrow \wp(\Sigma)$ is defined by structural induction as follows:
\begin{equation}
	\fwsem{\regc}S \eqdef \edenot{\regc} S
	\qquad
	\fwsem{\regr_1 ; \regr_2 } S \eqdef \fwsem{\regr_2}\fwsem{\regr_1} S
	\qquad
	\fwsem{\regr_1 \regplus \regr_2} S \eqdef \fwsem{\regr_1} S \cup \fwsem{\regr_2}S
	\qquad
	\fwsem{\regr^\kstar} S \eqdef \bigcup_{n \ge 0} \fwsem{\regr}^n S
	\label{eq:fwsem-definition}
\end{equation}

Roughly, $\fwsem{\regr} S$ is the set of output states reachable from the set of input states $S$. We remark that in general $\regr$ is nondeterministic and possibly non-terminating, so even when $S$ is a singleton, $\fwsem{\regr} S$ may contain none, one or many states. To shorten the notation, we write $\fwsem{\regr}\sigma$ instead of $\fwsem{\regr}\{\sigma\}$.

We define the backward semantics as the opposite relation of the forward semantics, that is
\begin{equation}
	\bwsem{\regr} \sigma' \eqdef \{ \sigma \svert \sigma' \in \fwsem{\regr} \sigma \} \label{eq:bwsem-definition}
\end{equation}
or, equivalently,
\begin{equation}
	\sigma \in \bwsem{\regr} \sigma' \iff \sigma' \in \fwsem{\regr} \sigma.  \label{eq:bwsem-sigma-sigma'}
\end{equation}
We lift the definition of backward semantics to set of states by union as usual.

The backward semantics can also be characterized compositionally, similarly to the forward one:

\begin{lemma}\label{lmm:bwsem-calculus}
	For any commands $\regr,\regr_1,\regr_2\in \Reg$ all of the following equalities hold:
	\[
	\bwsem{\regr_1; \regr_2} = \bwsem{\regr_1} \circ \bwsem{\regr_2} \qquad\qquad
	\bwsem{\regr_1 \regplus \regr_2} = \bwsem{\regr_1} \cup \bwsem{\regr_2} \qquad\qquad
	\bwsem{{\regr^\kstar}} = \bigcup\limits_{n \ge 0} \bwsem{\regr}^n
	\]
\end{lemma}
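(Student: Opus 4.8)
The plan is to prove each of the three equalities pointwise, reducing membership in the backward semantics to membership in the forward semantics via the defining equivalence~\eqref{eq:bwsem-sigma-sigma'}, namely $\sigma \in \bwsem{\regr}\sigma' \iff \sigma' \in \fwsem{\regr}\sigma$, and then unfolding the compositional definition of $\fwsem{\cdot}$ from~\eqref{eq:fwsem-definition}. Since the backward semantics is lifted to sets by union and is therefore determined by its action on singletons, it suffices to fix two arbitrary stores $\sigma,\sigma'$ and show that $\sigma$ belongs to the left-hand side applied to $\sigma'$ exactly when it belongs to the right-hand side applied to $\sigma'$. Before starting I would record the auxiliary fact that $\fwsem{\cdot}$ is additive, i.e. $\fwsem{\regr} S = \bigcup_{\tau \in S}\fwsem{\regr}\tau$; this is immediate by structural induction from~\eqref{eq:fwsem-definition} and the definition of $\edenot{\cdot}$ on atomic commands, and it is what lets me turn an application of $\fwsem{\cdot}$ to a set into an existential over its elements.

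For the choice operator the argument is a direct chain of equivalences: $\sigma \in \bwsem{\regr_1 \regplus \regr_2}\sigma'$ holds iff $\sigma' \in \fwsem{\regr_1}\sigma \cup \fwsem{\regr_2}\sigma$, iff $\sigma' \in \fwsem{\regr_1}\sigma$ or $\sigma' \in \fwsem{\regr_2}\sigma$, iff $\sigma \in \bwsem{\regr_1}\sigma'$ or $\sigma \in \bwsem{\regr_2}\sigma'$, which is membership in $(\bwsem{\regr_1}\cup\bwsem{\regr_2})\sigma'$. For sequential composition I would unfold $\fwsem{\regr_1;\regr_2}\sigma = \fwsem{\regr_2}\fwsem{\regr_1}\sigma$ and apply additivity to the outer application: then $\sigma' \in \fwsem{\regr_2}\fwsem{\regr_1}\sigma$ iff there is an intermediate store $\tau$ with $\tau \in \fwsem{\regr_1}\sigma$ and $\sigma' \in \fwsem{\regr_2}\tau$. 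Rewriting both conjuncts via~\eqref{eq:bwsem-sigma-sigma'} gives $\sigma \in \bwsem{\regr_1}\tau$ and $\tau \in \bwsem{\regr_2}\sigma'$ for some $\tau$, which is exactly $\sigma \in \bwsem{\regr_1}(\bwsem{\regr_2}\sigma') = (\bwsem{\regr_1}\circ\bwsem{\regr_2})\sigma'$. The only point to watch here is the order of composition: the intermediate state is produced by $\regr_1$ and consumed by $\regr_2$ in the forward direction, so backwards $\bwsem{\regr_2}$ must act first, matching the stated equality.

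For the Kleene star I would first reduce to the finite-iteration case. Using $\fwsem{\regr^\kstar}\sigma = \bigcup_{n\ge 0}\fwsem{\regr}^n\sigma$, membership $\sigma \in \bwsem{\regr^\kstar}\sigma'$ is equivalent to $\exists n \ge 0.\ \sigma' \in \fwsem{\regr}^n\sigma$. The core sublemma is that for every $n$ one has $\sigma' \in \fwsem{\regr}^n\sigma \iff \sigma \in \bwsem{\regr}^n\sigma'$, i.e. the $n$-fold backward composition is the converse of the $n$-fold forward iterate; I would prove it by induction on $n$, with the base case $n=0$ trivial (both iterates are the identity, so both sides say $\sigma = \sigma'$) and the inductive step being precisely an application of the sequential-composition case already established, since $\fwsem{\regr}^{n+1} = \fwsem{\regr}\fwsem{\regr}^{n}$. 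Combining the sublemma with the displayed equivalence yields $\sigma \in \bwsem{\regr^\kstar}\sigma'$ iff $\exists n.\ \sigma \in \bwsem{\regr}^n\sigma'$, that is $\sigma \in \bigcup_{n\ge 0}\bwsem{\regr}^n\sigma'$, as required. I do not expect any of the three cases to present a genuine difficulty: the whole lemma is an instance of the general fact that relational converse reverses composition, distributes over union, and commutes with iteration. The only points demanding care are the appeal to additivity of $\fwsem{\cdot}$ to introduce the intermediate store in the sequential case, and keeping the composition order straight there and in the star induction.
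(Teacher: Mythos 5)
Your proposal is correct and follows essentially the same route as the paper's proof: pointwise reduction via the equivalence $\sigma \in \bwsem{\regr}\sigma' \iff \sigma' \in \fwsem{\regr}\sigma$, additivity of $\fwsem{\cdot}$ to introduce the intermediate state in the sequential case, and an induction on $n$ for the Kleene star that reuses the sequential case. The only cosmetic differences are that you state additivity explicitly where the paper uses it silently, and you phrase the star sublemma pointwise with base case $n=0$ where the paper proves $\bwsem{\regr^n}=\bwsem{\regr}^n$ starting from $n=1$.
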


\subsection{Assertion Language}
\label{rem:fo-logic-incompleteness}
The properties of triple-based program logics depend on the expressiveness of the formulae (the assertion language) allowed in pre and postconditions~\cite{DBLP:books/ws/phaunRS01/BlassG01}.
As an example consider HL.
When the assertion language is first-order logic, HL is correct but not complete, because first-order logic is not able to represent all the properties needed to prove completeness, notably loop invariants~\citep[§2.7]{Apt81}.
If the assertion language is not close under conjunctions and disjunctions HL may be incorrect~\citep[Ex.5]{DBLP:conf/oopsla/CousotCLB12}.

To overcome the aforementioned problems, following \citet{DBLP:conf/oopsla/CousotCLB12}, we assume $P$ and $Q$ to be \emph{set} of states instead of logic formulas --- we therefore write, e.g., $\sigma \in P$ to say that the state $\sigma$ satisfies the precondition $P$.

\subsection{Hoare Logic}

\begin{figure}[t]
	\centering
	\begin{framed}
			\(
			\begin{array}{cc}
				\infer[\horule{atom}]
				{\overtriple{P}{\regc}{\fwsem{\regc}P}}
				{}
				\quad &
				\infer[\horule{cons}]
				{\overtriple{P}{\regr}{Q}}
				{P \subseteq P' & \overtriple{P'}{\regr}{Q'}& Q' \subseteq Q}
				\\[7.5pt]
				\infer[\horule{seq}]
				{\overtriple{P}{\regr_1;\regr_2}{Q}}
				{\overtriple{P}{\regr_1}{R} &
					\overtriple{R}{\regr_2}{Q}}
				\qquad &
				\infer[\horule{choice}]
				{\overtriple{P}{\regr_1 \regplus \regr_2}{Q}}
				{\forall i \in \{ 1, 2 \} & \overtriple{P}{\regr_i}{Q}}
				\\[7.5pt]
				\infer[\horule{iter}]
				{\overtriple{P}{\regr^\kstar}{P}}
				{\overtriple{P}{\regr}{P}}
			\end{array}
			\)
	\end{framed}
	\Description{Minimal set of rules for Hoare logic.}
	\caption{Hoare logic for regular commands \cite{DBLP:conf/RelMiCS/MollerOH21}}\label{fig:hl}
\end{figure}

Hoare logic (HL) \cite{hoare69} is a well known triple-based program logic which is able to prove properties about programs. The HL triple 
\[
\overtriple{P}{\regr}{Q}
\]

\noindent 
means that, whenever the execution of $\regr$ begins in a state $\sigma$ satisfying $P$ and it ends in a state $\sigma'$, then $\sigma'$ satisfies $Q$. 
When $Q$ is a correctness specification, then any HL triple $\overtriple{P}{\regr}{Q}$ provides a \emph{sufficient} condition $P$ for the so called partial correctness of the program $\regr$. Formally, this is described by the over-approximation property of postconditions
\[
\fwsem{\regr} P \subseteq Q \tag{HL},
\]

\noindent
so we say that a triple satisfying this inclusion is \emph{valid} in HL. In its original formulation, HL was given for a deterministic while-language and assuming $P$ and $Q$ were formulas in a given logic. Subsequent work generalized it to many other settings, such as nondeterminism~\citep{DBLP:journals/tcs/Apt84} and regular commands \cite{DBLP:conf/RelMiCS/MollerOH21}, resulting in the rules in Figure~\ref{fig:hl}. 
This is just a minimal core of rules which is both correct and complete, and there are many other valid rules.

\begin{theorem}[HL is sound and complete \citep{DBLP:journals/siamcomp/Cook78}]
	A HL triple is provable if and only if it is valid.
\end{theorem}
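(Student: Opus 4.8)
The plan is to establish the two directions separately, leaning on the compositional definition of the forward semantics in~\eqref{eq:fwsem-definition} and, crucially, on the fact that here assertions are arbitrary sets of states rather than formulas, so every strongest postcondition is available as a legal intermediate assertion. Throughout I would use two elementary properties of $\fwsem{\regr}$, both provable by a routine structural induction on $\regr$: it is monotone on $\wp(\Sigma)$, and it is additive (it preserves arbitrary unions, hence $\fwsem{\regr}\bigcup_i S_i = \bigcup_i \fwsem{\regr}S_i$).

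For soundness (provable $\implies$ valid) I would induct on the structure of the derivation and check that each rule preserves the inclusion $\fwsem{\regr}P \subseteq Q$. The base case $\horule{atom}$ is the reflexive inclusion $\fwsem{\regc}P \subseteq \fwsem{\regc}P$. For $\horule{cons}$, $\horule{seq}$ and $\horule{choice}$ the step is a short inclusion chase through~\eqref{eq:fwsem-definition} using monotonicity: e.g. for $\horule{seq}$, from $\fwsem{\regr_1}P \subseteq R$ and $\fwsem{\regr_2}R \subseteq Q$ one gets $\fwsem{\regr_1;\regr_2}P = \fwsem{\regr_2}\fwsem{\regr_1}P \subseteq \fwsem{\regr_2}R \subseteq Q$. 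The only case with an internal induction is $\horule{iter}$: from the premise $\fwsem{\regr}P \subseteq P$ I would show $\fwsem{\regr}^n P \subseteq P$ for every $n$ by induction on $n$ (base $\fwsem{\regr}^0 P = P$, step by monotonicity and the premise), and then conclude $\fwsem{\regr^\kstar}P = \bigcup_{n\ge 0}\fwsem{\regr}^n P \subseteq P$.

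For completeness (valid $\implies$ provable) I would instead induct on the structure of the command $\regr$, building for each valid triple $\overtriple{P}{\regr}{Q}$ a derivation whose intermediate assertions are the exact forward images. For an atomic $\regc$, $\horule{atom}$ proves $\overtriple{P}{\regc}{\fwsem{\regc}P}$ and $\horule{cons}$ weakens the postcondition to $Q$ using validity $\fwsem{\regc}P \subseteq Q$. For $\regr_1;\regr_2$ I would take $R \eqdef \fwsem{\regr_1}P$ as the midpoint: $\overtriple{P}{\regr_1}{R}$ is valid by reflexivity and $\overtriple{R}{\regr_2}{Q}$ is valid since $\fwsem{\regr_2}R = \fwsem{\regr_1;\regr_2}P \subseteq Q$, so both are provable by the induction hypothesis and $\horule{seq}$ combines them. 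For $\regr_1 \regplus \regr_2$, validity forces $\fwsem{\regr_i}P \subseteq Q$ for each $i$, so both component triples are valid, hence provable, and $\horule{choice}$ applies.

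The main obstacle, and the only genuinely interesting case, is the Kleene star. Given $\fwsem{\regr^\kstar}P \subseteq Q$, I would choose the exact invariant $I \eqdef \fwsem{\regr^\kstar}P = \bigcup_{n\ge 0}\fwsem{\regr}^n P$. By additivity, $\fwsem{\regr}I = \bigcup_{n\ge 0}\fwsem{\regr}^{n+1}P \subseteq I$, so $\overtriple{I}{\regr}{I}$ is valid and therefore provable by the induction hypothesis on the strictly smaller command $\regr$; then $\horule{iter}$ yields $\overtriple{I}{\regr^\kstar}{I}$, and $\horule{cons}$ closes the gap using $P \subseteq I$ (since $\fwsem{\regr}^0 P = P$) and $I \subseteq Q$ (validity). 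The decisive point is that this argument requires no expressiveness assumption on the assertion language: precisely because pre- and postconditions are taken to be arbitrary sets of states (Section~\ref{rem:fo-logic-incompleteness}), the fixpoint $I$ is always a legal assertion, which is exactly what sidesteps the classical relative-completeness caveat about loop invariants.
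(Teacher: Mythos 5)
Your proof is correct. Note, however, that the paper does not actually prove this theorem: it is stated as a known result with a citation to Cook, so there is no in-paper argument to compare against. What you have written is the standard semantic soundness/completeness argument for set-valued assertions, and both halves check out: the rule-by-rule inclusion chase for soundness (including the inner induction $\fwsem{\regr}^n P \subseteq P$ for $\horule{iter}$) and the completeness induction on $\regr$ using strongest postconditions as intermediate assertions, with $I \eqdef \fwsem{\regr^\kstar}P$ as the exact invariant, whose closure under $\fwsem{\regr}$ follows from additivity exactly as you say. Your closing observation is also the right one, and it is precisely the point the paper makes in its assertion-language discussion: taking pre/postconditions to be arbitrary subsets of $\Sigma$ is what makes $I$ a legal assertion and dissolves the classical relative-completeness caveat.

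It is worth noting that your argument is the exact mirror image of the proof the paper \emph{does} give for the analogous SIL result (Proposition~\ref{prop:sil-complete}): there, completeness is obtained by first deriving the canonical triple $\angletriple{\bwsem{\regr}Q}{\regr}{Q}$ by structural induction on $\regr$ (using Lemma~\ref{lmm:bwsem-calculus} to decompose $\bwsem{\cdot}$ compositionally) and then invoking the consequence rule once at the end, whereas you build the canonical triple $\overtriple{P}{\regr}{\fwsem{\regr}P}$ forward and weaken the postcondition. The only stylistic difference is that you interleave \lrule{cons}-style weakening inside the induction rather than factoring it out into a single final application; either organization works.
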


\begin{remark}[On the meaning of ``forward'']
Although the validity condition (HL) is expressed in terms of the forward semantics and thus classifies HL as a forward over-approximation (according to our terminology), the reader should not be misled to think the adjective \emph{forward} refers to direction of the deduction process, which is of course totally independent from our classification.
For example, it is well known that HL is related to Dijkstra's weakest liberal precondition~\cite{DBLP:journals/cacm/Dijkstra75} for a given postcondition $Q$: a triple $\overtriple{P}{\regr}{Q}$ is valid if and only if $P \subseteq \wlp[\regr](Q)$.
In this case, given the correctness specification $Q$, HL triples can be used for backward program analysis to find the weakest precondition for program correctness.
\end{remark}

\subsection{Incorrectness Logic}

\begin{figure}[t]
	\centering
	\begin{framed}
		\(
		\begin{array}{cc}
			\infer[\mbox{[}\mathsf{atom} \mbox{]}]
			{\undertriple{P}{\regc}{\fwsem{\regc}P}}
			{}
			\quad &
			\infer[\mbox{[}\mathsf{cons} \mbox{]}]
			{\undertriple{P}{\regr}{Q}}
			{P \supseteq P' & \undertriple{P'}{\regr}{Q'}& Q' \supseteq Q}
			\\[7.5pt]
			\infer[\mbox{[}\mathsf{seq} \mbox{]}]
			{\undertriple{P}{\regr_1;\regr_2}{Q}}
			{\undertriple{P}{\regr_1}{R} &
				\undertriple{R}{\regr_2}{Q}}
			\qquad &
			\infer[\mbox{[}\mathsf{choice} \mbox{]}]
			{\undertriple{P}{\regr_1 \regplus \regr_2}{Q_1 \cup Q_2}}
			{\forall i \in \{ 1, 2 \} & \undertriple{P}{\regr_i}{Q_i}}
			\\[7.5pt]
			\infer[\mbox{[}\mathsf{iter} \mbox{]}]
			{\undertriple{P_0}{\regr^\kstar}{\bigcup\limits_{n \ge 0} P_n}}
			{\forall n \ge 0 \sdot \undertriple{P_n}{\regr}{P_{n+1}}}
			&
		\end{array}
		\)
	\end{framed}
	\Description{Minimal set of rules for Incorrectness Logic.}
	\caption{Minimal Incorrectness Logic for regular commands \cite{DBLP:conf/RelMiCS/MollerOH21}}\label{fig:reverse-hl}
\end{figure}

Dually to HL, Incorrectness Logic \cite{DBLP:journals/pacmpl/OHearn20} was introduced as a formalism for under\hyp{}approximation with the idea of finding true bugs in the code. The IL triple 
\[\undertriple{P}{\regr}{Q}\]

\noindent
means that all the states in $Q$ are reachable from states in $P$. This is described by the property 
\[
\fwsem{\regr} P \supseteq Q, \tag{IL}
\]

\noindent
which characterises \emph{valid} IL triples. In other words, any valid IL triple $\undertriple{P}{\regr}{Q}$ states that for any state $\sigma' \in Q$ there is a state $\sigma \in P$ such that $\sigma'$ is reachable from $\sigma$. This means that if $\sigma'\in Q$ is an ``error" state, the triple guarantees it is a true alarm of the analysis. The rules, inspired by previous work on reverse Hoare logic~\cite{de-vries-koutavas-reverse-hoare-logic} and here simplified to not separate correct and erroneous termination states, are shown in Figure~\ref{fig:reverse-hl}. Just like the HL rules in Figure~\ref{fig:hl}, this is a minimal set of correct and complete rules for IL.

\begin{theorem}[IL is sound and complete \citep{DBLP:journals/pacmpl/OHearn20}]
	An IL triple is provable if and only if it is valid.
\end{theorem}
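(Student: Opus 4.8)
The plan is to prove the two implications separately, following the standard template for such results (which mirrors the proof for HL). For \textbf{soundness} (provable $\Rightarrow$ valid) I would induct on the structure of the derivation, checking that each rule of Figure~\ref{fig:reverse-hl} preserves the validity condition $\fwsem{\regr}P \supseteq Q$. The base case $\irule{atom}$ is immediate, since $\fwsem{\regc}P \supseteq \fwsem{\regc}P$ holds with equality. For $\irule{cons}$, $\irule{seq}$ and $\irule{choice}$ the key ingredient is the \emph{monotonicity} of the forward semantics $\fwsem{\regr}$, inherited from its definition in~\eqref{eq:fwsem-definition}: e.g. for $\irule{seq}$ the premises give $\fwsem{\regr_1}P \supseteq R$ and $\fwsem{\regr_2}R \supseteq Q$, whence $\fwsem{\regr_1;\regr_2}P = \fwsem{\regr_2}\fwsem{\regr_1}P \supseteq \fwsem{\regr_2}R \supseteq Q$, and for $\irule{choice}$ one uses $\fwsem{\regr_1\regplus\regr_2}P = \fwsem{\regr_1}P \cup \fwsem{\regr_2}P \supseteq Q_1 \cup Q_2$. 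For $\irule{iter}$, the premises $\fwsem{\regr}P_n \supseteq P_{n+1}$ yield by a side induction $\fwsem{\regr}^n P_0 \supseteq P_n$, and hence $\fwsem{\regr^\kstar}P_0 = \bigcup_{n \ge 0}\fwsem{\regr}^n P_0 \supseteq \bigcup_{n \ge 0} P_n$.

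For \textbf{completeness} (valid $\Rightarrow$ provable) the crux is a single lemma: for every command $\regr$ and precondition $P$, the \emph{exact} triple $\undertriple{P}{\regr}{\fwsem{\regr}P}$ is provable. I would prove this by structural induction on $\regr$. The atomic case is exactly $\irule{atom}$. For sequencing, the induction hypotheses yield provable triples $\undertriple{P}{\regr_1}{\fwsem{\regr_1}P}$ and $\undertriple{\fwsem{\regr_1}P}{\regr_2}{\fwsem{\regr_2}\fwsem{\regr_1}P}$, and $\irule{seq}$ with $R \eqdef \fwsem{\regr_1}P$ combines them into $\undertriple{P}{\regr_1;\regr_2}{\fwsem{\regr_1;\regr_2}P}$. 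The choice case is analogous, using $\irule{choice}$ together with $\fwsem{\regr_1\regplus\regr_2}P = \fwsem{\regr_1}P \cup \fwsem{\regr_2}P$. For the star, I would instantiate the $\irule{iter}$ rule with the chain $P_n \eqdef \fwsem{\regr}^n P$: the induction hypothesis (applied to the body $\regr$ and precondition $P_n$) gives provability of each $\undertriple{P_n}{\regr}{P_{n+1}}$, since $\fwsem{\regr}P_n = P_{n+1}$, so $\irule{iter}$ yields $\undertriple{P_0}{\regr^\kstar}{\bigcup_{n \ge 0}P_n}$, and $\bigcup_{n \ge 0}\fwsem{\regr}^n P = \fwsem{\regr^\kstar}P$ closes the case.

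Finally, given any valid triple $\undertriple{P}{\regr}{Q}$, validity means $\fwsem{\regr}P \supseteq Q$; since the exact triple $\undertriple{P}{\regr}{\fwsem{\regr}P}$ is provable by the lemma, a single application of $\irule{cons}$ — which in IL permits \emph{shrinking} the postcondition, taking $Q' \eqdef \fwsem{\regr}P \supseteq Q$ and leaving $P$ unchanged — derives $\undertriple{P}{\regr}{Q}$.

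I expect the main obstacle to be the loop case, in both directions. On the soundness side it requires the auxiliary induction establishing $\fwsem{\regr}^n P_0 \supseteq P_n$ and then commuting the infinite union past $\fwsem{\regr^\kstar}$; on the completeness side it requires choosing the witnessing chain $P_n = \fwsem{\regr}^n P$ so that the family-indexed premise of $\irule{iter}$ is exactly what the induction hypothesis supplies. Everything else reduces to monotonicity of $\fwsem{\cdot}$ and the compositional identities for the forward semantics recorded in~\eqref{eq:fwsem-definition}.
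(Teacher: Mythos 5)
Your proposal is correct, and it is essentially the canonical argument: the paper itself does not reprove this theorem (it is cited from O'Hearn), but your strategy --- soundness by induction on the derivation using monotonicity of $\fwsem{\cdot}$, and completeness via the exact triple $\undertriple{P}{\regr}{\fwsem{\regr}P}$ proved by structural induction on $\regr$ followed by one application of $\irule{cons}$ --- is precisely the mirror image of the paper's own proof of the dual SIL result (Propositions~\ref{prop:sil-correct} and~\ref{prop:sil-complete}), including the choice of the witnessing chain $P_n = \fwsem{\regr}^n P$ for the $\irule{iter}$ case.
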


\begin{example}
	HL and IL aim at addressing different properties. Even the simple, deterministic, terminating program $\regr 42$ of Figure~\ref{fig:simple_example} can be used to show their difference.
	Recall that  $Q_{42} \eqdef (z = 42)$ denotes the set of erroneous states, i.e., $Q_{42}$ is an incorrectness specification.
	The valid HL triple 
	\[
	\overtriple{ \odd(y) \wedge \even(x)}{\regr 42}{Q_{42}}
	\]
	
	\noindent
	identifies input states that will surely end up in an  error state, while the triple
	\[
	\overtriple{ \neg Q_{42} \wedge(\odd(x) \vee \even(y))}{\regr 42}{\neg Q_{42}}
	\]
	
	\noindent
	 characterize input states that will not produce any error. 
	On the other hand, the valid IL triple 
	\[
	\undertriple{\true}{\regr 42}{Q_{42} \wedge \odd(y) \wedge \even(x) }
	\]
	
	\noindent
	expresses the fact that error states in $Q_{42}$ are reachable by some initial state. Similarly, also the IL triple 
	\[
	\undertriple{\true}{\regr 42}{\neg Q_{42} \wedge(\odd(x)\vee \even(y)) }
	\]
	
	\noindent
	is valid since the postcondition $\neg Q_{42}$ can be reached only when the path conditions to reach the assignment are not satisfied.
	\qed
\end{example}

\subsection{Necessary Conditions}
For contract inference, \citet{DBLP:conf/vmcai/CousotCFL13,DBLP:conf/vmcai/CousotCL11} introduced the notion of Necessary Conditions (NC).
The goal was to relax the burden on programmers: while sufficient conditions require the caller of a function to supply parameters that will never cause an error, NC only prevents the invocation of the function with arguments that will inevitably lead to some error.
Intuitively, for $Q$ the set of good final states, the NC triple 
\[\necctriple{P}{\regr}{Q}\]

\noindent
means that any state $\sigma \in P$ admits at least one non-erroneous execution of the program $\regr$.
Recently, the same concept has been applied to the context of security \citep{DBLP:journals/pacmpl/MackayEND22}.

Following the original formulation~\cite{DBLP:conf/vmcai/CousotCFL13}, we can partition the traces of a nondeterministic execution starting from a memory $\sigma$ in three different sets: $\mathcal{T}(\sigma)$, those without errors, $\mathcal{E}(\sigma)$, those with an error, and $\mathcal{I}(\sigma)$, those which do not terminate.
A sufficient precondition $\overline{P}$ is such that $(\sigma\models \overline{P}) \implies (\mathcal{E}(\sigma) = \emptyset)$, that is $\overline{P}$ excludes all error traces.
Instead, a necessary precondition $\underline{P}$ is a formula such that $(\mathcal{T}(\sigma) \neq \emptyset \lor \mathcal{I}(\sigma) \neq \emptyset) \implies (\sigma \models \underline{P})$, which is equivalent to
\[
(\sigma \not\models \underline{P}) \implies (\mathcal{T}(\sigma) \cup \mathcal{I}(\sigma) = \emptyset) .
\]

\noindent In other words, a necessary precondition \emph{rules out no good run}: when it is violated by the input state, the program has only erroneous executions. 
Note that we consider infinite traces as good traces. We do this by analogy with sufficient preconditions, where bad traces are only those which end in a bad state.

\begin{example}\label{ex:nc-running}
	Consider again the nondeterministic program $\mathsf{r42nd}$ of Figure~\ref{fig:simple_example_nondet}, and the correctness specification $\neg Q_{42} = (z \neq 42)$. Then
	\begin{center}
		\begin{tabular}{c|cc}
			& $\mathcal{T}(\sigma)$ & $\mathcal{E}(\sigma)$ \\
			\hline
			$\sigma(z)\neq 42$ & $\neq \emptyset$ & $\neq \emptyset$ \\
			$\sigma(z)=42$ & $\emptyset$ & $\neq \emptyset$ \\
		\end{tabular}
	\end{center}
	The weakest sufficient precondition for this program is $\overline{P} = (z\neq 42 \wedge \even(y))$ because no input state $\sigma$ that violates $\overline{P}$ is such that $\mathcal{E}(\sigma) = \emptyset$.
	On the contrary, we have, e.g., that $\underline{P}= (z\neq 42)$ is a necessary precondition, while $(z > 42)$ is not, because it excludes some good runs.
	\qed
\end{example}

\section{Sufficient Incorrectness Logic}\label{sec:sil}
In this section, we explain the rationale for SIL, give a minimal sound and complete set of rules and present some additional rules that can speed up program analysis.

\subsection{Rationale}

As outlined in the Introduction, the main motivation for studying sufficient incorrectness conditions is to enhance program analysis frameworks with the ability to identify the source of incorrectness and provide programmers with evidence of the inputs that can cause the errors. This is different from IL, which can be used to determine the presence of bugs, but cannot precisely identify their sources. SIL is also different in spirit from HL and NC, that aim to prove the absence of bugs or prevent them.
Roughly, the SIL triple 
\[\angletriple{P}{\regr}{Q}\]

\noindent
requires that for all $\sigma \in P$, there exists at least one $\sigma' \in Q$ such that $\sigma \in \bwsem{\regr} \sigma'$ or, equivalently, $\sigma' \in \fwsem{\regr} \sigma$.
In other words, any SIL triple $\angletriple{P}{\regr}{Q}$ asserts that ``\emph{all states in $P$ have at least one execution leading to a state in $Q$}". 
More concisely, this property amounts to the validity condition
\[
\bwsem{\regr} Q \supseteq P. \tag{SIL}
\]
Please note that, in the presence of nondeterminism, states in $P$ are required to have one execution leading to $Q$, not necessarily all of them.
We make this equivalence formal by means of the following proposition.

\begin{proposition}[Characterization of SIL validity]\label{prop:sil-validity-characterization}
	For any $\regr \in \Reg$, $P, Q \subseteq \Sigma$ we have
	\[
	\bwsem{\regr} Q \supseteq P \iff \forall \sigma \in P \sdot \exists \sigma' \in Q \sdot \sigma' \in \fwsem{\regr} \sigma
	\]
\end{proposition}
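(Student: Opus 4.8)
The plan is to prove this by a direct unfolding of definitions, since the statement is essentially the defining property of the set-lifted backward semantics made explicit. The only piece of bookkeeping I need to pin down first is the lifting of $\bwsem{\regr}$ from single states to sets, which the text introduces informally as ``by union as usual''. I would make it precise as
\[
\bwsem{\regr} Q = \bigcup_{\sigma' \in Q} \bwsem{\regr} \sigma' .
\]
Combining this with the definition~\eqref{eq:bwsem-definition}, namely $\bwsem{\regr}\sigma' = \{\sigma \svert \sigma' \in \fwsem{\regr}\sigma\}$, I obtain the membership characterization
\[
\sigma \in \bwsem{\regr} Q \iff \exists \sigma' \in Q \sdot \sigma \in \bwsem{\regr}\sigma' \iff \exists \sigma' \in Q \sdot \sigma' \in \fwsem{\regr}\sigma ,
\]
where the last step is exactly the equivalence~\eqref{eq:bwsem-sigma-sigma'}.

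With this in hand, the proof is a one-line rewriting of the set inclusion as a pointwise statement. First I would expand $\bwsem{\regr} Q \supseteq P$ into its definitional meaning, that every element of $P$ lies in $\bwsem{\regr} Q$, i.e. $\forall \sigma \in P \sdot \sigma \in \bwsem{\regr} Q$. Then I would substitute the membership characterization just derived, turning $\sigma \in \bwsem{\regr} Q$ into $\exists \sigma' \in Q \sdot \sigma' \in \fwsem{\regr}\sigma$. This yields precisely the right-hand side $\forall \sigma \in P \sdot \exists \sigma' \in Q \sdot \sigma' \in \fwsem{\regr}\sigma$, and since every step is an ``if and only if'', both directions are established simultaneously.

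I do not expect any genuine obstacle here: the proposition is definitional, and its content is entirely about unwinding the quantifier hidden inside the union defining $\bwsem{\regr} Q$. The single point worth stating carefully is the set-lifting equation above, because the body of the paper only describes it in prose; once that is written out, the equivalence between an existentially quantified membership and membership in a union of backward images is immediate, and the forward/backward swap is handled by the already-established~\eqref{eq:bwsem-sigma-sigma'}. I would therefore keep the argument to the short chain of biconditionals sketched above rather than splitting it into separate $(\Rightarrow)$ and $(\Leftarrow)$ cases.
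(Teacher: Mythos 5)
Your proposal is correct and follows essentially the same route as the paper's own proof: both unfold the union defining $\bwsem{\regr} Q$ into the membership characterization $\sigma \in \bwsem{\regr} Q \iff \exists \sigma' \in Q \sdot \sigma' \in \fwsem{\regr}\sigma$ via~\eqref{eq:bwsem-sigma-sigma'}, and then rewrite the inclusion $P \subseteq \bwsem{\regr} Q$ pointwise as a chain of biconditionals. No gap.
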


A convenient way to exploit SIL is to assume that the analysis takes as input the incorrectness specification $Q$, which represents the set of erroneous final states.
Then, any valid SIL triple $\angletriple{P}{\regr}{Q}$ yields a precondition which surely captures erroneous executions. In this sense, $P$ can be considered a sufficient condition for incorrectness. This is dual to the interpretation of IL, where, for a given precondition $P$, any IL triple $\undertriple{P}{\regr}{Q}$ yields a set $Q$ of final states which are for sure reachable, so that any error state in $Q$ is a true bug reachable from some input in $P$.

\subsection{Proof System}\label{sec:SILrules}
\begin{figure}[t]
	\centering
	\begin{framed}
		\(
		\begin{array}{cc}
			\infer[\lrule{atom}]
			{\angletriple{\bwsem{\regc}Q}{\regc}{Q}}
			{}
			\quad &
			\infer[\lrule{cons}]
			{\angletriple{P}{\regr}{Q}}
			{P \subseteq P' & \angletriple{P'}{\regr}{Q'}& Q' \subseteq Q}
			\\[7.5pt]
			\infer[\lrule{seq}]
			{\angletriple{P}{\regr_1;\regr_2}{Q}}
			{\angletriple{P}{\regr_1}{R} &
				\angletriple{R}{\regr_2}{Q}}
			\qquad &
			\infer[\lrule{choice}]
			{\angletriple{P_1 \cup P_2}{\regr_1 \regplus \regr_2}{Q}}
			{\forall i \in \{ 1, 2 \} & \angletriple{P_i}{\regr_i}{Q}}
			\\[7.5pt]
			\infer[\lrule{iter}]
			{\angletriple{\bigcup\limits_{n \ge 0} Q_n}{\regr^\kstar}{Q_0}}
			{\forall n \ge 0 \sdot \angletriple{Q_{n+1}}{\regr}{Q_n}}
		\end{array}
		\)
	\end{framed}
	\Description{Minimal set of rules for Sufficient Incorrectness Logic.}
	\caption{Sufficient Incorrectness Logic}\label{fig:sil}
\end{figure}

We present the proof rules for SIL in Figure~\ref{fig:sil}. This set of rules is minimal, correct and complete (Section~\ref{sec:sil-sound-and-complete}); additional valid rules are discussed in Section~\ref{sec:SILmorerules}.

For the \code{skip} atomic command, all the states in $Q$ will reach $Q$ itself.
For assignments, the sufficient precondition is given by the backward semantics applied to $Q$ (cf.~(\ref{eq:bwsem-definition})).
For Boolean guards, the set of initial states which can lead to $Q$ are all those states in $Q$ that also satisfies the guard.
These three cases are summarized in the rule \lrule{atom}.

If we know that all states in $R$ have an execution of $\regr_2$ ending in $Q$ and all states in $P$ have an execution of $\regr_1$ ending in $R$, we can deduce that all states in $P$ have an execution of $\regr_1; \regr_2$ ending in $Q$. This is captured by rule \lrule{seq}.

If all states in $P_1$ have an execution of $\regr_1$ ending in $Q$, they also have an execution of $\regr_1 \regplus \regr_2$ ending in $Q$ since its semantics is a superset of that of $\regr_1$ (cf.~(\ref{eq:fwsem-definition})). The reasoning for $P_2$ is analogous, thus yielding the rule \lrule{choice}. This rule is reminiscent of the equation for conditionals in the calculus of possible correctness \citep{DBLP:journals/jacm/Hoare78}.

For iterations, if the command $\regr$ is never executed the precondition is trivially the postcondition $Q_0$. If the command is executed $n > 0$ times, then the precondition is the union of the preconditions of every iteration step.
This is formalized by rule \lrule{iter}.
We remark that this rule first appeared in \citet[§5]{DBLP:conf/RelMiCS/MollerOH21}.

If all the states in $P'$ have an execution leading to a state in $Q'$, then any subset $P$ of $P'$ will lead to $Q'$ as well.
Similarly, every superset $Q$ of $Q'$ is reachable by executions starting in $P$.
Those two observations lead to \lrule{cons}, which is the key rule of SIL as it enables the under-approximation of the precondition.

\subsection{Correctness and Completeness}\label{sec:sil-sound-and-complete}
SIL is both correct and complete.
Correctness can be proved by induction on the derivation tree of a triple. Intuitively, if the premises of a rule are valid, then its consequence is valid as well, as we briefly observed in the previous section.
To prove completeness, we rely on the fact that rules other than \lrule{cons} are exact, that is, if their premises satisfy the equality $\bwsem{\regr} Q = P$, their conclusion does as well. Using this, we prove the triple $\angletriple{\bwsem{\regr}Q}{\regr}{Q}$ for any $\regr$ and $Q$. Then we conclude using \lrule{cons} to get a proof of $\angletriple{P}{\regr}{Q}$ for any $P \subseteq \bwsem{\regr}Q$.
This is formalized by:

\begin{theorem}[SIL is sound and complete]\label{thm:sil-sound-complete}
	A SIL triple is provable if and only if it is valid.
\end{theorem}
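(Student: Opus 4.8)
The plan is to prove the two implications separately and by different inductions: \emph{soundness} (provable $\Rightarrow$ valid) by induction on the derivation tree, and \emph{completeness} (valid $\Rightarrow$ provable) by structural induction on the command. For soundness I would check that each rule propagates the validity condition $\bwsem{\regr} Q \supseteq P$ from its premises to its conclusion. The rule \lrule{atom} is valid trivially, its precondition being exactly $\bwsem{\regc} Q$. For \lrule{cons}, monotonicity of $\bwsem{\regr}$ combined with $P \subseteq P'$ and $Q' \subseteq Q$ gives $\bwsem{\regr} Q \supseteq \bwsem{\regr} Q' \supseteq P' \supseteq P$. The cases \lrule{seq} and \lrule{choice} reduce to the compositional identities of Lemma~\ref{lmm:bwsem-calculus}: for \lrule{seq} one has $\bwsem{\regr_1;\regr_2} Q = \bwsem{\regr_1}(\bwsem{\regr_2} Q) \supseteq \bwsem{\regr_1} R \supseteq P$, and for \lrule{choice}, $\bwsem{\regr_1 \regplus \regr_2} Q = \bwsem{\regr_1} Q \cup \bwsem{\regr_2} Q \supseteq P_1 \cup P_2$. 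The rule \lrule{iter} needs a little more: from the premises $\bwsem{\regr} Q_n \supseteq Q_{n+1}$ a secondary induction on $n$ yields $Q_n \subseteq \bwsem{\regr}^n Q_0$, whence $\bigcup_{n} Q_n \subseteq \bigcup_{n} \bwsem{\regr}^n Q_0 = \bwsem{\regr^\kstar} Q_0$.

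For completeness I would follow the strategy announced before the statement: first establish the \emph{exact} triple $\angletriple{\bwsem{\regr}Q}{\regr}{Q}$ for every $\regr$ and $Q$, and then recover an arbitrary valid triple $\angletriple{P}{\regr}{Q}$ (for which $P \subseteq \bwsem{\regr}Q$ by the validity condition) through a single application of \lrule{cons} with $P' = \bwsem{\regr}Q$ and $Q' = Q$. The exact triple is the heart of the matter, and I would prove it by structural induction on $\regr$, using at each step the matching identity of Lemma~\ref{lmm:bwsem-calculus} to witness that every rule except \lrule{cons} is \emph{exact}, i.e. sends premises of the form $\bwsem{\regr}Q = P$ to a conclusion of the same form.

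Concretely, the base case $\regr = \regc$ is just \lrule{atom}. For $\regr_1;\regr_2$, applying the induction hypothesis to $\regr_2$ with postcondition $Q$ and to $\regr_1$ with postcondition $R = \bwsem{\regr_2} Q$, then combining via \lrule{seq}, produces the precondition $\bwsem{\regr_1} R = \bwsem{\regr_1;\regr_2} Q$. For $\regr_1 \regplus \regr_2$, the two branch hypotheses combine via \lrule{choice} into the precondition $\bwsem{\regr_1} Q \cup \bwsem{\regr_2} Q = \bwsem{\regr_1 \regplus \regr_2} Q$. For the Kleene star $\regr^\kstar$, I would instantiate the family demanded by \lrule{iter} as $Q_n = \bwsem{\regr}^n Q$, so that $Q_0 = Q$ and $Q_{n+1} = \bwsem{\regr}(Q_n)$. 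The induction hypothesis for the strict subterm $\regr$, applied to each postcondition $Q_n$, delivers precisely the premises $\angletriple{Q_{n+1}}{\regr}{Q_n}$, and the conclusion of \lrule{iter} then has precondition $\bigcup_{n \ge 0} Q_n = \bigcup_{n \ge 0} \bwsem{\regr}^n Q = \bwsem{\regr^\kstar} Q$, once more by Lemma~\ref{lmm:bwsem-calculus}.

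The main obstacle, present in both directions, is the Kleene-star case: it forces us to manage the countable family of iteration preconditions and to verify, through the backward-semantics calculus, that their union coincides with $\bwsem{\regr^\kstar} Q$ (and, in the soundness direction, the auxiliary inclusion $Q_n \subseteq \bwsem{\regr}^n Q_0$). Everything else is routine monotonicity of $\bwsem{\cdot}$ together with the three compositional identities of Lemma~\ref{lmm:bwsem-calculus}.
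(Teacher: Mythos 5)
Your proposal is correct and follows essentially the same route as the paper: soundness by induction on the derivation tree (you supply the rule-by-rule details, including the auxiliary induction $Q_n \subseteq \bwsem{\regr}^n Q_0$ for \lrule{iter}, which the paper leaves implicit), and completeness by first deriving the exact triple $\angletriple{\bwsem{\regr}Q}{\regr}{Q}$ via structural induction on $\regr$ with the same instantiations (in particular $Q_n = \bwsem{\regr}^n Q$ for the Kleene star) and then closing with a single use of \lrule{cons}. No gaps.
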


\subsection{Additional Rules for Program Analysis}\label{sec:SILmorerules}
\begin{figure}[t]
	\centering
	\begin{framed}
		\(
		\begin{array}{cc}
			\infer[\lrule{empty}]
			{\angletriple{\emptyset}{\regr}{Q}}
			{}
			\qquad &
			\infer[\lrule{disj}]
			{\angletriple{P_1 \cup P_2}{\regr}{Q_1 \cup Q_2}}
			{\angletriple{P_1}{\regr}{Q_1} & \angletriple{P_2}{\regr}{Q_2} }
			\\[7.5pt]
			\infer[\lrule{iter0}]
			{\angletriple{Q}{\regr^\kstar}{Q}}
			{}
			\qquad &
			\infer[\lrule{unroll}]
			{\angletriple{P}{\regr^\kstar}{Q}}
			{\angletriple{P}{\regr^\kstar; \regr}{Q}}
			\\[7.5pt]
			\infer[\lrule{unroll\mbox{-}split}]
			{\angletriple{P \cup Q_2}{\regr^\kstar}{Q_1 \cup Q_2}}
			{\angletriple{P}{\regr^\kstar; \regr}{Q_1}}
		\end{array}
		\)
	\end{framed}
	\Description{Additional valid rules for SIL. While not required for completeness, these rules can help shortening proofs.}
	\caption{Additional rules for SIL}\label{fig:sil-extra}
\end{figure}

SIL proof system in Figure~\ref{fig:sil} is deliberately minimal: if we remove any rule it is no longer complete. However, there are many other valid rules which can be useful for program analysis and cannot be derived from the five we presented. Some of them are in Figure~\ref{fig:sil-extra}.

Rule \lrule{empty} is used to drop paths backward, just like IL can drop them forward (and analogous axiom $\undertriple{P}{\regr}{\emptyset}$ is valid for IL). Particularly, this allows to ignore one of the two branches with \lrule{choice}, or to stop the backward iteration of \lrule{iter} without covering all the infinite iterations.
An example of such an application is the derived rule \lrule{iter0}: it can be proved from rules \lrule{iter} and \lrule{empty} by taking $Q_0 = Q$ and $Q_n = \emptyset$ for $n \ge 1$. This rule corresponds to not entering the iteration at all. It subsumes HL's rule \horule{iter}, which is based on loop invariants: those are a correct but not complete reasoning tool for under-approximation \citep{DBLP:journals/pacmpl/OHearn20}.

\begin{figure}
\begin{subfigure}[t]{0.43\textwidth}
\begin{lstlisting}[language=C]
// program rshortloop0

n := nondet();
while(n > 0) {
    x := x + n;
    n := nondet();
}
// assert(x != 2000000)
\end{lstlisting}
\caption{An example to illustrate the use of \lrule{iter0}.}
\label{fig:rshortloop0}
\end{subfigure}
\hfill
\begin{subfigure}[t]{0.44\textwidth}
\begin{lstlisting}[language=C]
// program rloop0
x := 0;
n := nondet();
while(n > 0) {
    x := x + n;
    n := nondet();
}
// assert(x != 2000000)
\end{lstlisting}
\caption{An example to illustrate the use of \lrule{unroll}.}\label{fig:rloop0}
\end{subfigure}
\caption{Two sample programs adapted from \citet[\S 6.1]{DBLP:journals/pacmpl/OHearn20}.}
\end{figure}

\begin{example}\label{ex:sil-derivation-1}
	To show the use of \lrule{iter0}, we consider the program $\mathsf{rshortloop0}$ in Figure~\ref{fig:rshortloop0}.
	It is a slight variation of the program presented as ``loop0" in \citet[\S 6.1]{DBLP:journals/pacmpl/OHearn20}, where the final error states were $Q_{2M} \eqdef (x = 2 000 000)$. We can write this program in the syntax of regular commands as
	\[
	\mathsf{rshortloop0} \eqdef \code{n := nondet(); }(\code{(n > 0)?; x := x + n; n := nondet()})^{\kstar}; \code{(n <= 0)?}
	\]
	
	To prove a SIL triple for $\mathsf{rshortloop0}$, we let 
	\[
	\regr_w \eqdef \code{(n > 0)?; x := x + n; n := nondet()}
	\]
	
	\noindent
	be the body of the loop and $R_{2M} \eqdef (x = 2000000 \land n \le 0) = (Q_{2M} \land n \le 0)$. We perform the following derivation:
	
	{
	\footnotesize
	\[
	\infer[\lrule{seq}]
	{\angletriple{Q_{2M}}{\mathsf{rshortloop0}}{Q_{2M}}}
	{
		\infer[\lrule{atom}]{\angletriple{Q_{2M}}{\code{n := nondet()}}{R_{2M}}}{}
		&
		\infer[\lrule{seq}]{\angletriple{R_{2M}}{(\regr_w)^{\kstar}; \code{(n <= 0)?}}{Q_{2M}}}{
			\infer[\lrule{iter0}]{\angletriple{R_{2M}}{(\regr_w)^{\kstar}}{R_{2M}}}{}
			&
			\infer[\lrule{atom}]{\angletriple{R_{2M}}{\code{(n <= 0)?}}{Q_{2M}}}{}
		}
	}
	\]
	}
	
	It is worth noticing that we use the rule \lrule{iter0} to bypass the iteration: this is possible because we do not need to enter the loop to have a path reaching the entry point of the program.
	\qed
\end{example}

Rule \lrule{unroll} allows to unroll a loop once. Subsequent applications of this rule allow to simulate (backward) a finite number of iterations, and then rule \lrule{iter0} can be used to ignore the remaining ones. This is on par with IL ability to unroll a loop a finite number of times to find some postcondition. Please note that rules \lrule{iter0} and \lrule{unroll} are analogous to rules \irule{iter0} and \irule{unroll} (there called \emph{Iterate non-zero}) of IL \citep{DBLP:journals/pacmpl/OHearn20,DBLP:conf/RelMiCS/MollerOH21}.

Rule \lrule{disj} allows to split the analysis and join the results, just like HL and IL.
However, while a corresponding rule \lrule{conj} which perform intersection is sound for HL, it is unsound for both IL and SIL (see Example~\ref{ex:il-no-strongest-pre}).

The last rule \lrule{unroll\mbox{-}split} is derived from \lrule{unroll} and \lrule{disj}. The intuition behind this rule is to split the postcondition $Q$ in two parts, $Q_1$ which goes through the unrolled loop and $Q_2$ which instead skips the loop entirely; the results are then joined. This rule is analogous to rule \olrule{Induction} of OL \citep{DBLP:journals/corr/ZilbersteinDS23}.

\begin{example}\label{ex:sil-derivation-2}
	To illustrate the use of \lrule{unroll}, let us
	consider the full version of ``loop0" from \citet[\S 6.1]{DBLP:journals/pacmpl/OHearn20}, which is reported in Figure~\ref{fig:rloop0}.
	We can translate it in the language of regular commands by letting
	\[
	\mathsf{rloop0} \eqdef \code{x := 0;} \mathsf{rshortloop0}
	\]
	where $\mathsf{rshortloop0}$ is the one defined in Example~\ref{ex:sil-derivation-1}. Final error states are those in $Q_{2M}$.
	To prove a triple for $\mathsf{rloop0}$, we can no longer ignore the loop with \lrule{iter0} because the initial assignment \code{x := 0} would yield the precondition $\false$ on $Q_{2M}$ if we wanted to extend the proof of $\angletriple{Q_{2M}}{\mathsf{rshortloop0}}{Q_{2M}}$ from the previous example. In this case, we need to perform at least one iteration, and we do so using \lrule{unroll}.
	
\begin{figure}[t]
	\centering
	\footnotesize
	\[
		\infer[]
		{(*)}
		{
			\infer[\lrule{seq}]{\angletriple{T_{2M}}{\regr_w^{\kstar}; \code{(n <= 0)?}}{Q_{2M}}}{
				\infer[\lrule{unroll}]{\angletriple{T_{2M}}{\regr_w^{\kstar}}{R_{2M}}}{
					\infer[\lrule{seq}]{\angletriple{T_{2M}}{\regr_w^{\kstar}; \regr_w}{R_{2M}}}{
						\infer[\lrule{iter0}]{\angletriple{T_{2M}}{\regr_w^{\kstar}}{T_{2M}}}{}
						&
						\infer{\angletriple{T_{2M}}{\regr_w}{R_{2M}}}{\vdots}
					}
				}
				&
				\infer[\lrule{atom}]{\angletriple{R_{2M}}{\code{(n <= 0)?}}{Q_{2M}}}{}
			}
		}
	\]

	\[
		\infer[\lrule{seq}]
		{\angletriple{\true}{\mathsf{rloop0}}{Q_{2M}}}
		{
			\infer[\lrule{seq}]{\angletriple{\true}{\code{x := 0; n := nondet()}}{T_{2M}}}{
				\infer[\lrule{atom}]{\angletriple{\true}{\code{x := 0}}{x \le 2000000}}{}
				&
				\infer[\lrule{atom}]{\angletriple{x \le 2000000}{\code{n := nondet()}}{T_{2M}}}{}
			}
			&
			(*)
		}
	\]
	\caption{Derivation of the SIL triple $\angletriple{\true}{\mathsf{rloop0}}{Q_{2M}}$ for Example~\ref{ex:sil-derivation-2}.}
	\label{fig:sil-example-derivation-2}
\end{figure}
	
	We let $\regr_w$ and $R_{2M}$ be as in Example~\ref{ex:sil-derivation-1}, and we define $T_{2M} \eqdef (x + n = 2000000 \land n > 0)$. We show the derivation in Figure~\ref{fig:sil-example-derivation-2}, where we omit the proof of $\angletriple{T_{2M}}{\regr_w}{R_{2M}}$ since it is a straightforward application of \lrule{seq} and \lrule{atom}.
	Please note that the pattern we used to prove $\angletriple{T_{2M}}{\regr_w^{\kstar}}{R_{2M}}$ given the proof of $\angletriple{T_{2M}}{\regr_w}{R_{2M}}$ is general and does not depend on our specific proof. It corresponds to a single loop unrolling: if we know that one execution of the loop body satisfies some given pre and postconditions, the same holds for its Kleene iteration, because of the under-approximation. This is not a new result: it was already observed for IL.
	\qed
\end{example}

It is worth noticing that a version of the logic where \lrule{unroll} and \lrule{disj} replace \lrule{iter} is not complete. The reason is that \lrule{unroll} is not able to perform an infinite amount of iterations with a finite derivation tree, thus preventing it from proving executions which require an infinite amount of iterations to reach the fixpoint.

\subsection{SIL and Outcome Logic}\label{sec:silvsol}
As discussed in the Introduction, OL already recognized the importance of locating the source of errors, and showed that any SIL triple $\angletriple{P}{\regr}{Q}$ is valid if and only if the corresponding \NDOL{} (the nondeterministic instance of OL, see Section~\ref{sec:sil-and-ol}) triple $\oltriple{P}{\regr}{Q \oplus \top}$ is valid.
However, the two approaches differ in the proof systems: in fact, many SIL rules cannot be derived from \NDOL{} rules.

The first thing to observe is that \NDOL{} assertions use a higher-level language, which describes sets of subsets of $\Sigma$, while SIL assertions are just subsets of $\Sigma$. In the \NDOL{} assertion language, subsets of $\Sigma$ (i.e., SIL assertions) are called \emph{atomic assertions}, which are then combined with higher level operators to denote sets of sets of states. 
To see this, take the two atomic assertions $P_x \eqdef (x = 0)$ and $P_y \eqdef (y = 0)$.  In SIL, their conjunction $P_x \cap P_y$ is satisfied by any state $\sigma$ such that $\sigma(x)=\sigma(y)=0$, while in \NDOL{} the compound expression $P_x \wedge P_y$ is satisfied by any set of states $m$ such that  $\forall\sigma\in m. \sigma(x)=\sigma(y)=0$. 
This means that the correspondence between SIL and \NDOL{} triples can only be drawn when the latter is in the specific form $\oltriple{P}{\regr}{Q \oplus \top}$ with $P$, $Q$ atomic assertions, while in general a proof in OL involves assertions not in that form. Thus, there is no simple way to translate a proof in \NDOL{} to a proof in SIL.
As discussed in Section~\ref{sec:sil-and-ol}, disjunction further distinguishes the assertion languages: the union $P_x \cup P_y$ is the only kind of disjunction available in SIL, while \NDOL{} has also $\lor$ and $\oplus$.

Some SIL rules can be derived from OL ones. For instance, rule \lrule{seq} can be derived from \olrule{Seq}, and \lrule{atom} is a combination of \olrule{One}, \olrule{Assign} and \olrule{Assume}.
However, we could not derive three key rules of SIL, namely \lrule{choice}, \lrule{disj} and \lrule{iter}.
All these rules depend on the union of atomic assertions, which cannot be mentioned in OL because it may not exist in the outcome monoid. 

Extending the syntax of outcome predicates and exploiting some rules that are much more similar to ours, a preprint~\citep{zilberstein2024relatively} presents a revised version of OL and recovers completeness for Lisbon triples. 
However, since it mixes forward and backward-oriented rules, mixes over- and under-approximations and addresses more general (weighted) settings, 
we claim that the direct use of SIL rules is more straightforward and convenient.
This advantage is more evident when we instantiate SIL for Separation Logic: in the small axioms for atomic commands, SIL can exploit an assertion language that is expressive enough to guarantee relative completeness, whereas the corresponding instance of OL \citep{zilberstein2023outcome} is not complete.

\subsection{Industrial Applications}\label{sec:tool}
SIL represents \emph{a posteriori} formalization and theoretical justification of the parallel, modular, and compositional static analysis implemented in industrial grade static analyzers for security developed and used at Meta, such as Zoncolan~\cite{DBLP:journals/cacm/DistefanoFLO19}, Mariana Trench~\cite{MarianaTrench}, and Pysa~\cite{Pysa}.
Those tools automatically find more than $50$\% of the security bugs in the Meta family of apps and many SEVs ~\citep[Fig.~5]{DBLP:journals/cacm/DistefanoFLO19}.

In order to scale up to hundreds of millions of lines of code, static analyses need to be parallelizable and henceforth modular and compositional.
Modularity implies that the analysis can infer \emph{meaningful} information without full knowledge of the global program.
Compositionality means that the~\emph{results} of analyzing modules are good enough that one does not lose information in using the inferred triple instead of inlining the code. 

The analysis implemented in the aformentioned tools is a modular backward analysis that determines which input states for a function will lead to a security error, likewise the SIL rules described in this paper.
In particular, the analysis infers sufficient incorrectness preconditions (modularity) for callees that can be used by the callers (compositionality) to generate their incorrectness preconditions. 
When the propagation of the inferred sufficient precondition reaches an attacker\hyp{}controlled input, the analyzers check if that input is included in the propagated error condition.
If it is the case, then it emits an error. 
The function analyses are parallelized and a strategy similar to the iteration rule of Figure~\ref{fig:sil} is used to compute the fixpoint in presence of mutually recursive functions.

\section{Relations Among Logics}\label{sec:comparison}
We follow the two-dimensional scheme in Figure~\ref{fig:square-full} to carry out an exhaustive comparison among the four validity conditions. While the duality between HL and IL was crystal clear from the introduction of IL in~\citet{DBLP:journals/pacmpl/OHearn20}, to the best of our knowledge its relations with NC has not been explored yet. Pursuing such formal comparison leads to some surprising results:
\begin{itemize}
	\item Although NC and IL share some similarities, not only we show that they are not comparable, but we prove a bijective correspondence between NC and HL that exploits double negation. 
	\item While the classification in Figure~\ref{fig:square-full} and the previous item suggests that the proof strategy used to define the bijective correspondence between NC and HL can be extended to relate IL and SIL, we show that this is not the case: IL and SIL are not comparable.
	\item More in general, each one of the conditions (and the corresponding logics) focus on different aspects of programs: none of them is better than the other; instead, each one has its precise application.
\end{itemize}

Next, we discuss the classification scheme, we deepen the comparison between NC and other logics, we show how SIL and IL differ, and we extend the discussion along other axes of comparison.

\subsection{Taxonomy}

The validity of HL and IL triples correspond to the constraints $\fwsem{\regr} P \subseteq Q$ and $\fwsem{\regr} P \supseteq Q$, respectively.
Therefore, their validity is expressed using the forward semantics. The difference between the two inequalities resides in the direction of the inclusion: the postcondition is an over-approximation of the forward semantics in HL, while it is an under-approximation in IL.
Thus we can classify HL and IL along the first column of Figure~\ref{fig:square-full}.
Dually, the validity condition for SIL triples, namely $\bwsem{\regr} Q \supseteq P$, can be placed in the second column, since it is expressed in terms of backward semantics, and in the second row, because the precondition is an under-approximation of the backward semantics.

The fourth inequality $\bwsem{\regr} Q \subseteq P$ defines over-approximations of backward semantics, and is dual to both SIL and HL. It turns out that $\bwsem{\regr} Q \subseteq P$ is the validity condition for NC.
Indeed, a precondition $P$ is necessary for $Q$ if every state which can reach a state in $Q$ is in $P$. If $Q$ describes good final states, then $\bwsem{\regr} Q$ defines all states $\sigma$ which can reach a good state, so every necessary precondition $\underline{P}$ must contain at least all states in $\bwsem{\regr} Q$.
More formally, given a final state $\sigma' \in Q$, all traces starting in $\sigma$ and ending in $\sigma'$ are in $\mathcal{T}(\sigma)$. This means that every state $\sigma \in \bwsem{\regr} \sigma'$ has a trace in $\mathcal{T}(\sigma)$ (the one ending in $\sigma'$), so it must belong to a necessary precondition $\underline{P}$ as well.
This is captured by the following:

\begin{proposition}[NC as backward over-approximation]\label{prop:nc}
Given a correctness postcondition $Q$ for the program $\regr$, any possible necessary precondition $\underline{P}$ for $Q$ satisfies:
\[
\bwsem{\regr} Q \subseteq \underline{P} \tag{NC}.
\]
\end{proposition}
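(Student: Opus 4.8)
The plan is to unfold the two notions involved---the set-theoretic backward semantics $\bwsem{\regr}Q$ and the trace-based definition of a necessary precondition recalled just before the statement---and to show that membership in the former forces membership in the latter. Concretely, I would establish the inclusion $\bwsem{\regr}Q \subseteq \underline{P}$ pointwise: fix an arbitrary $\sigma \in \bwsem{\regr}Q$ and prove $\sigma \in \underline{P}$ for every necessary precondition $\underline{P}$.

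First I would use the definition of backward semantics lifted to sets, namely that $\sigma \in \bwsem{\regr}Q$ holds iff there exists $\sigma' \in Q$ with $\sigma' \in \fwsem{\regr}\sigma$ (by~\eqref{eq:bwsem-definition} and~\eqref{eq:bwsem-sigma-sigma'}). Thus from $\sigma \in \bwsem{\regr}Q$ I obtain a final state $\sigma' \in Q$ reachable from $\sigma$. The next step is to reconcile this collecting-semantics fact with the trace partition $\mathcal{T}(\sigma)$, $\mathcal{E}(\sigma)$, $\mathcal{I}(\sigma)$: the witness $\sigma' \in \fwsem{\regr}\sigma$ corresponds to a terminating execution of $\regr$ from $\sigma$ ending in $\sigma'$. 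Since $Q$ is a correctness (good-states) specification and $\sigma' \in Q$, this execution terminates in a non-erroneous state, hence by definition it belongs to $\mathcal{T}(\sigma)$, the set of error-free traces, so $\mathcal{T}(\sigma) \neq \emptyset$.

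Finally I would invoke the defining property of necessary preconditions, $(\mathcal{T}(\sigma) \neq \emptyset \lor \mathcal{I}(\sigma) \neq \emptyset) \implies (\sigma \models \underline{P})$. Having already established $\mathcal{T}(\sigma) \neq \emptyset$, the antecedent holds, so $\sigma \models \underline{P}$, i.e.\ $\sigma \in \underline{P}$. As $\sigma$ was arbitrary in $\bwsem{\regr}Q$, this yields the claimed inclusion.

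The main obstacle is conceptual rather than computational: it lies in bridging the two semantic levels---the collecting semantics $\fwsem{\cdot}$, which records only terminating output states, and the trace-based vocabulary $(\mathcal{T},\mathcal{E},\mathcal{I})$ used to define necessary preconditions. The crux is to justify that a terminating run ending in a good state $\sigma' \in Q$ really is an error-free trace in $\mathcal{T}(\sigma)$; this is exactly where the assumption that $Q$ denotes the good final states is needed, together with the convention that erroneous behaviour is signalled by ending in a bad state. Once this identification is in place, the inclusion follows immediately.
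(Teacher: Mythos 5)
Your argument is correct and coincides with the paper's own justification, which appears as the informal paragraph immediately preceding the proposition: a state $\sigma \in \bwsem{\regr}Q$ has a terminating run ending in some $\sigma' \in Q$, that run lies in $\mathcal{T}(\sigma)$, and the defining implication $(\mathcal{T}(\sigma) \neq \emptyset \lor \mathcal{I}(\sigma) \neq \emptyset) \implies (\sigma \models \underline{P})$ then forces $\sigma \in \underline{P}$. Your explicit attention to bridging the collecting semantics with the trace partition is exactly the step the paper also relies on, so the two proofs are essentially identical.
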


An analogous characterization appeared in \citet[§6.3]{DBLP:journals/pacmpl/ZhangK22} in terms of weakest precondition.
We will use (NC) to understand how NC relates to IL (Section~\ref{sec:nc-il}) and HL (Section~\ref{sec:nc-hl}).

\subsection{Necessary Preconditions and Incorrectness Logic}\label{sec:nc-il}
Sufficient preconditions are properties implying the weakest liberal precondition: $\overline{P}$ is sufficient for a postcondition $Q$ if and only if $\overline{P} \implies \wlp[\regr](Q)$, which in turn is equivalent to validity of the HL triple $\overtriple{\overline{P}}{\regr}{Q}$. Necessary and sufficient preconditions are dual, and so are IL and HL. Moreover, NC and IL enjoy the same consequence rule: both can strengthen the postcondition and weaken the precondition. This suggest that there may be a relation between NC and IL. However, the following example shows this is not the case.

\begin{example}
Let us consider once again the simple deterministic program $\regr 42$ of Figure~\ref{fig:simple_example}, with $Q_{42} \eqdef (z=42)$.
Let $Q'_{42} \eqdef (Q_{42} \wedge \odd(y) \wedge \even(x))$ and $P_{11} \eqdef (z = 11)$. Then IL triple~(\ref{eq:il-simple-det}) is exactly $\undertriple{P_{11}}{\regr 42}{Q'_{42}}$, so we know this is valid.
However, we observe that the NC triple $\necctriple{P_{11}}{\regr 42}{Q'_{42}}$ is not valid because the state $\sigma$ such that $\sigma(x) = 0$, $\sigma(y) = 1$, $\sigma(z) = 10$ has an execution leading to $Q'_{42}$ but is not in $P_{11}$.
Moreover, take for instance $\underline{P} = (\odd(y) \land \even(x))$, which makes the NC triple $\necctriple{\underline{P}}{\regr 42}{Q'_{42}}$ valid (any state \emph{not} satisfying $\underline{P}$ has either an even $y$ or an odd $x$, and those variables are not changed by the program). Then it is clear that $P_{11} \not\implies \underline{P}$. This shows that not only IL triples do not yield NC triples, but also that in general there are NC preconditions which are not implied by IL preconditions.

Conversely, consider $\neg Q_{42} = (z \neq 42)$. The NC triple $\necctriple{\true}{\regr 42}{\neg Q_{42}}$ is clearly valid, but the IL triple $\undertriple{\true}{\regr 42}{\neg Q_{42}}$ is not: for instance, the final state $\sigma'$ such that $\sigma'(x) = \sigma'(y) = \sigma'(z) = 11$ is not reachable from any initial state. It follows that the IL triple $\undertriple{P}{\regr 42}{\neg Q_{42}}$ is not valid for any $P$.
\qed
\end{example}

Given a valid NC triple $\necctriple{\underline{P}}{\regr}{Q}$ and a valid IL triple $\undertriple{P}{\regr}{Q}$, we are guaranteed that $P \cap \underline{P} \neq \emptyset$, that is there are states which satisfy both $P$ and $\underline{P}$. However, in general neither $P \subseteq \underline{P}$ nor $\underline{P} \subseteq P$ hold.
The difference between NC and IL is striking when we look at their quantified definitions:
\begin{align*}
\forall \sigma' \in Q \sdot \forall \sigma\in \bwsem{\regr}\sigma' \sdot\sigma\in P 
\tag{NC$^{\forall}$} \\
\forall \sigma' \in Q \sdot \exists \sigma\in \bwsem{\regr}\sigma' \sdot \sigma \in P 
\tag{IL$^{\exists}$}
\end{align*}

\noindent
While (NC$^{\forall}$) universally quantifies on initial states (\emph{all} initial states with a good run must satisfy the precondition), (IL$^{\exists}$) existentially quantifies on them.
We observe that, whenever $\regr$ is reversible (i.e., $\fwsem{\regr}$ is injective) any valid IL triple is also a valid NC triple.

\subsection{Necessary Preconditions and Hoare Logic}\label{sec:nc-hl}
It turns out that NC is strongly connected to weakest liberal preconditions and thus to HL.
In fact, let $Q$ be a postcondition: a finite trace is in $\mathcal{T}(\sigma)$ if its final state satisfies $Q$ and in $\mathcal{E}(\sigma)$ otherwise.
In general, a necessary precondition has no relationship with $\wlp[\regr](Q)$. However, if we consider $\lnot Q$ instead of $Q$, we observe that ``erroneous" executions becomes those in $\mathcal{T}(\sigma)$ and ``correct" ones those in $\mathcal{E}(\sigma)$. This means that
\[
(\mathcal{T}(\sigma) = \emptyset) \iff \sigma \in \wlp[\regr](\lnot Q),
\]
from which we derive 
\[\lnot \underline{P} \implies \wlp[\regr](\lnot Q)\] 
or, equivalently,
\[\lnot \wlp[\regr](\lnot Q) \implies \underline{P} .\]

\begin{example}
Building on Example~\ref{ex:nc-running}, the correctness specification for $\mathsf{r42nd}$ is $\neg Q_{42} = (z\neq 42)$. We have that $\wlp[\mathsf{r42nd}](\lnot \lnot Q_{42}) = \wlp[\mathsf{r42nd}](Q_{42}) = Q_{42}$, because if initially $z \neq 42$ then there is the possibility that $x$ is assigned an odd value and $z$ is not updated.
Therefore, a condition $P$ is implied by $\lnot \wlp[\regr](\lnot\lnot Q_{42}) = \lnot Q_{42}$ if and only if it is necessary. For instance, $(z\neq 42 \vee \odd(y))$ is necessary, while $(z > 42)$ is not.
\qed
\end{example}

The next bijection establishes the connection between NC and HL:

\begin{proposition}[Bijection between NC and HL]\label{prop:fw-inclusion-negation-bw}
	For any  $\regr\in \Reg$ and $P,Q\subseteq \Sigma$ we have:
	\[\fwsem{\regr} P \subseteq Q \iff \bwsem{\regr} (\lnot Q) \subseteq \lnot P .\]
\end{proposition}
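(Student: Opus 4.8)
The plan is to prove the equivalence purely semantically, with no induction on the structure of $\regr$: I would unfold both set inclusions into quantified statements over pairs of states and observe that one is literally the contrapositive of the other. Throughout, $\lnot P$ and $\lnot Q$ denote the complements $\Sigma\setminus P$ and $\Sigma\setminus Q$, so that $\sigma\notin P$ and $\sigma\in\lnot P$ are interchangeable.

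First I would use the additivity of the forward semantics (immediate from~\eqref{eq:fwsem-definition}, since each clause preserves unions, whence $\fwsem{\regr}P=\bigcup_{\sigma\in P}\fwsem{\regr}\sigma$) to rewrite the left-hand side pointwise:
\[
\fwsem{\regr}P\subseteq Q \iff \forall\sigma\in P\sdot\forall\sigma'\sdot(\sigma'\in\fwsem{\regr}\sigma \implies \sigma'\in Q).
\]
Reading this as a single implication over all pairs $(\sigma,\sigma')$ and contraposing the inner implication turns it into
\[
\forall\sigma\sdot\forall\sigma'\sdot(\sigma'\notin Q \wedge \sigma'\in\fwsem{\regr}\sigma \implies \sigma\notin P).
\]
Next I would rewrite the right-hand side the same way. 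By the lifting of the backward semantics to sets by union, $\sigma\in\bwsem{\regr}(\lnot Q)$ holds iff there is some $\sigma'\in\lnot Q$ with $\sigma\in\bwsem{\regr}\sigma'$, and by~\eqref{eq:bwsem-sigma-sigma'} this is $\sigma'\in\fwsem{\regr}\sigma$. Hence
\[
\bwsem{\regr}(\lnot Q)\subseteq\lnot P \iff \forall\sigma\sdot\big(\exists\sigma'\in\lnot Q\sdot\sigma'\in\fwsem{\regr}\sigma\big)\implies\sigma\in\lnot P,
\]
and turning the guarded existential into a universal over the implication (using $(\exists x\sdot\Phi(x))\implies\Psi \equiv \forall x\sdot(\Phi(x)\implies\Psi)$, valid since $\Psi$ does not mention $\sigma'$) gives exactly the contraposed statement obtained for the left-hand side. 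The two sides therefore coincide.

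I do not expect any real obstacle here: the statement is a clean instance of the duality $\forall\lnot=\lnot\exists$ combined with the definitional symmetry $\sigma'\in\fwsem{\regr}\sigma \iff \sigma\in\bwsem{\regr}\sigma'$ from~\eqref{eq:bwsem-sigma-sigma'}. The only points requiring care are bookkeeping: making the implicit universal quantifier over $\sigma'$ explicit before contraposing, so that the existential appearing on the backward side lines up correctly, and keeping $\lnot$ fixed as complementation within the ambient state space $\Sigma$. Because the argument never inspects the shape of $\regr$, it applies uniformly to nondeterministic and nonterminating programs alike, which is precisely why the bijection between NC and HL can later be read off from it.
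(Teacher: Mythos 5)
Your proof is correct and is essentially the same argument as the paper's: both rest on the definitional symmetry $\sigma'\in\fwsem{\regr}\sigma \iff \sigma\in\bwsem{\regr}\sigma'$ from~\eqref{eq:bwsem-sigma-sigma'} together with additivity of the two semantics, differing only in presentation (you contrapose a single quantified implication to get both directions at once, whereas the paper argues one inclusion elementwise and declares the converse analogous).
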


The proposition means that a necessary precondition is just the negation of a sufficient precondition for the negated post.
This was also observed using weakest (liberal) preconditions in \citet[Theorem~5.4]{DBLP:journals/pacmpl/ZhangK22}.

\subsection{Sufficient Incorrectness Logic and Hoare Logic}
In general, HL and SIL are different logics, but they coincide whenever the program $\regr$ is deterministic and terminates for every input (cf. Section~\ref{sec:sil-and-hl}). This is formally captured by:

\begin{proposition}[SIL vs HL]\label{prop:sil-hl-deterministic-terminating}
	For any $\regr\in \Reg$ and $P, Q \subseteq \Sigma$ we have:
	\begin{itemize}
		\item if $\regr$ is deterministic, $\bwsem{\regr} Q \supseteq P \implies \fwsem{\regr} P \subseteq Q$
		\item if $\regr$ is terminating, $\fwsem{\regr} P \subseteq Q \implies \bwsem{\regr} Q \supseteq P$
	\end{itemize}
\end{proposition}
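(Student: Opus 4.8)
The plan is to prove the two implications separately, unwinding each side through the pointwise characterizations of forward and backward semantics, namely the equivalence $\sigma' \in \fwsem{\regr}\sigma \iff \sigma \in \bwsem{\regr}\sigma'$ from~\eqref{eq:bwsem-sigma-sigma'} and the characterization of SIL validity in Proposition~\ref{prop:sil-validity-characterization}. Recall that $\bwsem{\regr}Q \supseteq P$ reads ``for all $\sigma \in P$ there exists $\sigma' \in Q$ with $\sigma' \in \fwsem{\regr}\sigma$'', whereas $\fwsem{\regr}P \subseteq Q$ reads ``for all $\sigma \in P$ and all $\sigma' \in \fwsem{\regr}\sigma$ we have $\sigma' \in Q$''. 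The whole proof is essentially the interplay between these existential and universal quantifiers over the set $\fwsem{\regr}\sigma$ of reachable states, and the two hypotheses (determinism, totality) control the size of this set.

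For the first bullet, assuming $\regr$ deterministic, I would fix $\sigma \in P$ and take any $\sigma' \in \fwsem{\regr}\sigma$; I must show $\sigma' \in Q$. From the SIL hypothesis $\bwsem{\regr}Q \supseteq P$ and Proposition~\ref{prop:sil-validity-characterization}, there exists some $\sigma'' \in Q$ with $\sigma'' \in \fwsem{\regr}\sigma$. The key step is that determinism forces $\fwsem{\regr}\sigma$ to be either empty or a singleton, so $\sigma'$ and $\sigma''$ must coincide, giving $\sigma' = \sigma'' \in Q$ as desired. The only subtlety is that ``deterministic'' must be formalized as: for every $\sigma$, the set $\fwsem{\regr}\sigma$ has at most one element. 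Since the SIL hypothesis already guarantees this set is nonempty for $\sigma \in P$, it is exactly a singleton, and uniqueness closes the case.

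For the second bullet, assuming $\regr$ terminating, I would fix $\sigma \in P$ and aim to produce a witness $\sigma' \in Q$ with $\sigma' \in \fwsem{\regr}\sigma$, so as to establish $\bwsem{\regr}Q \supseteq P$ via Proposition~\ref{prop:sil-validity-characterization}. Here termination is formalized as: $\fwsem{\regr}\sigma \neq \emptyset$ for every $\sigma$ (every input has at least one terminating execution). So pick any $\sigma' \in \fwsem{\regr}\sigma$, which exists by termination; then the HL hypothesis $\fwsem{\regr}P \subseteq Q$ immediately gives $\sigma' \in Q$, and this $\sigma'$ is the required witness.

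I expect both directions to be short once the quantifier structure is laid bare; the main conceptual obstacle is less a calculation than pinning down the right formalizations of ``deterministic'' and ``terminating'' in terms of the cardinality of $\fwsem{\regr}\sigma$ (at most one element, respectively at least one element), since the proposition is stated semantically rather than syntactically. A secondary point worth a sentence is to note why the two hypotheses are genuinely needed: without determinism the forward set may contain a bad state alongside the good witness (breaking the first bullet, as the nondeterministic program $\mathsf{r42nd}$ illustrates), and without termination the forward set may be empty so that no witness exists (breaking the second bullet). These counterexamples are already discussed informally in Section~\ref{sec:sil-and-hl}, so I would only gesture at them rather than develop them here.
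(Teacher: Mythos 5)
Your proposal is correct and follows essentially the same argument as the paper: for the first bullet, use determinism to force $\fwsem{\regr}\sigma$ to be a singleton so the SIL witness $\sigma''\in Q$ must equal the arbitrary $\sigma'\in\fwsem{\regr}\sigma$, and for the second, use termination to extract a witness $\sigma'\in\fwsem{\regr}\sigma$ which the HL inclusion places in $Q$. The formalizations you pin down (at most one element for determinism, nonempty for termination) are exactly the ones the paper's proof relies on.
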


\subsection{Sufficient Incorrectness Logic and Incorrectness Logic}

In Figure~\ref{fig:square-full}, we highlight the fact that (HL) and (NC) are isomorphic (Proposition~\ref{prop:fw-inclusion-negation-bw}). It is natural to ask if there is a similar connection between (IL) and (SIL). The next example answers negatively.

\begin{example}\label{ex:il-sil-incomparable}
	Since IL and SIL enjoy different consequence rules, neither of the two can imply the other with the same $P$ and $Q$. For negated $P$ and $Q$, consider the simple program below
\begin{equation*}
	\mathsf{r1} \eqdef x := 1
\end{equation*}

	\noindent and the two sets of states $P_{\geq 0} \eqdef (x \geq 0)$ and $Q_{1} \eqdef (x = 1)$.
	Both the SIL triple $\angletriple{P_{\geq 0}}{\mathsf{r1}}{Q_{1}}$ and the IL triple $\undertriple{P_{\geq 0}}{\mathsf{r1}}{Q_{1}}$ are valid.
	However, neither $\undertriple{\lnot P_{\geq 0}}{\mathsf{r1}}{\lnot Q_{1}}$ nor $\angletriple{\lnot P_{\geq 0}}{\mathsf{r1}}{\lnot Q_{1}}$ are valid.
	So neither (IL) implies negated (SIL) nor the other way around.
	\qed
\end{example}

To gain some insights on why (SIL) and (IL) are not equivalent,
given a regular command $\regr$, we define the set of states that only diverges $D_{\regr}$ and the set of unreachable states $U_{\regr}$:
\[
D_{\regr} \eqdef \{ \sigma \svert \fwsem{\regr}\sigma= \emptyset \}
\qquad
U_{\regr} \eqdef \{ \sigma' \svert \sigma' \not\in\fwsem{\regr} \Sigma \} =  \{ \sigma' \svert \bwsem{\regr}\sigma' =\emptyset \}.
\]
In a sense, $U_{\regr}$ is the set of states which ``diverge" going backward.

\begin{lemma}\label{lmm:CC-1-monotone}
	For any regular command $\regr\in\Reg$ and sets of states $P,Q\subseteq\Sigma$ it holds that:
	\begin{enumerate}
		\item $\bwsem{\regr} \fwsem{\regr} P \supseteq P \setminus D_{\regr}$;
		\item $\fwsem{\regr} \bwsem{\regr} Q \supseteq Q \setminus U_{\regr}$.
	\end{enumerate}
\end{lemma}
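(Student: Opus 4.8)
The plan is to prove both inclusions by a direct element chase, exploiting the defining duality \eqref{eq:bwsem-sigma-sigma'} between forward and backward semantics together with the additivity of the two collecting maps. The two parts are perfectly dual: part (2) is obtained from part (1) by interchanging the roles of $\fwsem{\regr}$ and $\bwsem{\regr}$ (and correspondingly of $P, D_{\regr}$ with $Q, U_{\regr}$), so I would prove (1) in full and then invoke this symmetry for (2).

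First I would record the additivity I rely on. By definition $\bwsem{\regr} Q = \bigcup_{\sigma' \in Q} \bwsem{\regr} \sigma'$, and a routine structural induction on $\regr$ shows the forward collecting semantics is likewise additive, so $\fwsem{\regr} S = \bigcup_{\sigma \in S} \fwsem{\regr} \sigma$. In particular both maps are monotone, which yields the two containments $\fwsem{\regr} \sigma \subseteq \fwsem{\regr} P$ whenever $\sigma \in P$, and $\bwsem{\regr} \sigma' \subseteq \bwsem{\regr} Q$ whenever $\sigma' \in Q$, that drive the argument.

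For part (1), fix $\sigma \in P \setminus D_{\regr}$. Since $\sigma \notin D_{\regr}$, by definition $\fwsem{\regr} \sigma \neq \emptyset$, so I can pick a witness $\sigma' \in \fwsem{\regr} \sigma$. Monotonicity together with $\sigma \in P$ gives $\sigma' \in \fwsem{\regr} P$. Reading $\sigma' \in \fwsem{\regr} \sigma$ through \eqref{eq:bwsem-sigma-sigma'} yields $\sigma \in \bwsem{\regr} \sigma'$, and since $\sigma' \in \fwsem{\regr} P$, monotonicity of $\bwsem{\regr}$ gives $\bwsem{\regr} \sigma' \subseteq \bwsem{\regr} \fwsem{\regr} P$, hence $\sigma \in \bwsem{\regr} \fwsem{\regr} P$. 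As $\sigma$ was arbitrary, $P \setminus D_{\regr} \subseteq \bwsem{\regr} \fwsem{\regr} P$. Part (2) is the mirror image: fix $\sigma' \in Q \setminus U_{\regr}$, use $\sigma' \notin U_{\regr}$ to get $\bwsem{\regr} \sigma' \neq \emptyset$ and pick $\sigma \in \bwsem{\regr} \sigma'$, then run the same chain with the two semantics swapped.

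There is no genuine obstacle here; the proof is a short chase. The one point that deserves attention, and the conceptual reason for the lemma's exact shape, is that subtracting $D_{\regr}$ (respectively $U_{\regr}$) is precisely what guarantees the existential witness exists: a state with empty forward image would be lost when composing $\bwsem{\regr} \circ \fwsem{\regr}$, and dually an unreachable state (empty backward image) would be lost under $\fwsem{\regr} \circ \bwsem{\regr}$. Removing exactly those states is what lets the round-trip composition recover the original state, and this is what I would emphasize rather than any calculation.
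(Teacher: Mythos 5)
Your proof is correct and follows essentially the same element-chasing argument as the paper: pick $\sigma'\in\fwsem{\regr}\sigma$ using $\sigma\notin D_{\regr}$, note $\sigma'\in\fwsem{\regr}P$, and conclude $\sigma\in\bwsem{\regr}\sigma'\subseteq\bwsem{\regr}\fwsem{\regr}P$ via~(\ref{eq:bwsem-sigma-sigma'}) and additivity, with part~(2) handled by the symmetric argument. No issues.
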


Lemma~\ref{lmm:CC-1-monotone}
highlights the asymmetry between over and under-approximation: the composition of a function with its inverse is increasing (but for non-terminating states).
This explains why (HL) and (NC) are related while (IL) and (SIL) are not: on the over-approximating side, $P \setminus D_{\regr} \subseteq \bwsem{\regr} \fwsem{\regr} P$ can be further exploited if we know $\fwsem{\regr} P \subseteq Q$ via (HL), but it cannot when $\fwsem{\regr} P \supseteq Q$ via (IL). 

The preprint~\citep{unter2024} introduces a forward-oriented proof system with a core set of rules that are sound for both IL and SIL. It also becomes complete for IL, resp. SIL, when augmented with the corresponding consequence rule.

\subsection{Comparison Among the Rules}\label{sec:duality}

We recall the rules of SIL, HL and IL in Figure~\ref{fig:HLvsILvsSIL}, so to emphasize the similarities and differences among them.
The $\mathsf{atom}$ and $\mathsf{iter}$ rules show that HL and IL exploit the forward semantics, while SIL the backward one.
Furthermore, the rule \horule{iter} of HL says that any invariant is acceptable, not necessarily the minimal one, so that HL relies on over-approximation.
This is confirmed by the row for rules $\mathsf{cons}$ and $\mathsf{empty}$, where on the contrary IL and SIL are shown to rely on under-approximation.
The consequence rule is the key rule of all the logics because it allows to generalize a proof by weakening/strengthening the two conditions $P$ and $Q$ involved. The direction of rules \lrule{cons} of SIL and \horule{cons} of HL is the same and it is exactly the opposite of the direction of rule \irule{cons} of IL and NC, which coincides. So the different consequence rules follow the diagonals of Figure~\ref{fig:square-full}.
The rows for rules $\mathsf{seq}$ and $\mathsf{disj}$ show that in all cases triples can be composed sequentially and additively.
Rules $\mathsf{iter0}$, $\mathsf{unroll}$ and $\mathsf{unroll\mbox{-}split}$ are a prerogative of under-approximation: they are the same for SIL and IL, but they are unsound for HL.

\begin{figure}
	\resizebox{\textwidth}{!}{
		\begin{tabular}{c|ccc}
			Rule & SIL & HL & IL \\
			\hline \\
			$\mathsf{atom}$
			& \infer[]{\angletriple{\bwsem{\regc}Q}{\regc}{Q}}{}
			& \infer[]{\overtriple{P}{\regc}{\fwsem{\regc}P}}{}
			& \infer[]{\undertriple{P}{\regc}{\fwsem{\regc}P}}{}
			\\[1em]
			$\mathsf{cons}$
			& \quad\textcolor{ACMPurple}{\infer[]{\angletriple{P}{\regr}{Q}}{P \subseteq P' & \angletriple{P'}{\regr}{Q'}& Q' \subseteq Q}}\quad
			& \quad\textcolor{ACMPurple}{\infer[]{\overtriple{P}{\regr}{Q}}{P \subseteq P' & \overtriple{P'}{\regr}{Q'}& Q' \subseteq Q}}\quad
			& \quad\infer[]{\undertriple{P}{\regr}{Q}}{P \supseteq P' & \undertriple{P'}{\regr}{Q'}& Q' \supseteq Q}\quad
			\\[1em]
			$\mathsf{seq}$
			& \textcolor{ACMPurple}{\infer[]{\angletriple{P}{\regr_1;\regr_2}{Q}}{\angletriple{P}{\regr_1}{R} & \angletriple{R}{\regr_2}{Q}}}
			& \textcolor{ACMPurple}{\infer[]{\overtriple{P}{\regr_1;\regr_2}{Q}}{\overtriple{P}{\regr_1}{R} & \overtriple{R}{\regr_2}{Q}}}
			& \textcolor{ACMPurple}{\infer[]{\undertriple{P}{\regr_1;\regr_2}{Q}}{\undertriple{P}{\regr_1}{R} & \undertriple{R}{\regr_2}{Q}}}
			\\[1em]
			$\mathsf{choice}$
			& \infer[]{\angletriple{P_1 \cup P_2}{\regr_1 \regplus \regr_2}{Q}}{\forall i \in \{ 1, 2 \} & \angletriple{P_i}{\regr_i}{Q}}
			& \infer[]{\overtriple{P}{\regr_1 \regplus \regr_2}{Q}}{\forall i \in \{ 1, 2 \} & \overtriple{P}{\regr_i}{Q}}
			& \infer[]{\undertriple{P}{\regr_1 \regplus \regr_2}{Q_1 \cup Q_2}}{\forall i \in \{ 1, 2 \} & \undertriple{P}{\regr_i}{Q_i}}
			\\[1em]
			$\mathsf{iter}$
			& \infer[]{\angletriple{\bigcup\limits_{n \ge 0} Q_n}{\regr^\kstar}{Q_0}}{\forall n \ge 0 \sdot \angletriple{Q_{n+1}}{\regr}{Q_n}}
			& \infer[]{\overtriple{P}{\regr^\kstar}{P}}{\overtriple{P}{\regr}{P}}
			& \infer[]{\undertriple{P_0}{\regr^\kstar}{\bigcup\limits_{n \ge 0} P_n}}{\forall n \ge 0 \sdot \undertriple{P_n}{\regr}{P_{n+1}}}
			\\[1.5em]
			$\mathsf{empty}$
			& \textcolor{ACMPurple}{\infer[]{\angletriple{\emptyset}{\regr}{Q}}{}}
			& \textcolor{ACMPurple}{\infer[]{\overtriple{\emptyset}{\regr}{Q}}{}}
			& \infer[]{\undertriple{P}{\regr}{\emptyset}}{}
			\\[1em]
			$\mathsf{disj}$
			& \textcolor{ACMPurple}{\infer[]{\angletriple{P_1 \cup P_2}{\regr}{Q_1 \cup Q_2}}{\angletriple{P_1}{\regr}{Q_1} & \angletriple{P_2}{\regr}{Q_2} } }
			& \textcolor{ACMPurple}{\infer[]{\overtriple{P_1 \cup P_2}{\regr}{Q_1 \cup Q_2}}{\overtriple{P_1}{\regr}{Q_1} & \overtriple{P_2}{\regr}{Q_2} } }
			& \textcolor{ACMPurple}{\infer[]{\undertriple{P_1 \cup P_2}{\regr}{Q_1 \cup Q_2}}{\undertriple{P_1}{\regr}{Q_1} & \undertriple{P_2}{\regr}{Q_2} } }
			\\[1.5em]
			$\mathsf{iter0}$
			& \textcolor{ACMPurple}{\infer[]{\angletriple{Q}{\regr^\kstar}{Q}}{}}
			& unsound
			& \textcolor{ACMPurple}{\infer[]{\undertriple{P}{\regr^\kstar}{P}}{}}
			\\[1em]
			$\mathsf{unroll}$
			& \textcolor{ACMPurple}{\infer[]{\angletriple{P}{\regr^\kstar}{Q}}{\angletriple{P}{\regr^\kstar; \regr}{Q}}}
			& unsound
			& \textcolor{ACMPurple}{\infer[]{\undertriple{P}{\regr^\kstar}{Q}}{\undertriple{P}{\regr^\kstar; \regr}{Q}}}
			\\[1em]
			$\mathsf{unroll\mbox{-}split}$
			& \textcolor{ACMPurple}{\infer[]{\angletriple{P \cup Q_2}{\regr^\kstar}{Q_1 \cup Q_2}}{\angletriple{P}{\regr^\kstar; \regr}{Q_1}}}
			& unsound
			& \textcolor{ACMPurple}{\infer[]{\undertriple{P_1 \cup P_2}{\regr^\kstar}{Q \cup P_2}}{\undertriple{P_1}{\regr^\kstar; \regr}{Q}}}
		\end{tabular}
	}
	\caption{Comparison of SIL, HL and IL rules. We highlight in \textcolor{ACMPurple}{purple} when SIL rules are the same as HL or IL.}\label{fig:HLvsILvsSIL}
\end{figure}

\subsection{Weakest/Strongest Conditions}
Depending on the way in which program analysis is conducted, one can be interested in deriving either the most general or most specific hypotheses under which the reasoning can take place.
For instance, given a correctness program specification $Q$ one is typically interested in finding the weakest liberal preconditions that make $Q$ satisfied, i.e., to impose the minimal constraint on the input that guarantee program correctness.
Conversely, to infer necessary conditions we can be interested in devising the strongest hypotheses under which some correct run is possible. 

To investigate the existence of weakest/strongest pre and postconditions, it is convenient to take into account the consequence rules of the four kinds of triples.
The consequence rule of each logic explains how pre and postconditions can be weakened/strengthened.
The concrete semantics is trivially a strongest (HL and NC) or weakest (IL and SIL) condition for the target property ($P$ computing backward and $Q$ forward). 
It turns out that having a strongest/weakest condition on the ``source" property is a prerogative of over-approximation, i.e., that over and under-approximation are not fully dual theories.
\begin{proposition}[Existence of weakest conditions]\label{prop:weakest-pre-existence1}
	For any regular command $\regr\in\Reg$:
	\begin{itemize}
		\item given $Q$, there exists a weakest $P$ such that $\fwsem{\regr} P \subseteq Q$ (HL);
		\item given $P$, there exists a weakest $Q$ such that $\bwsem{\regr} Q \subseteq P$ (NC).
	\end{itemize}
\end{proposition}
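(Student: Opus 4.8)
The plan is to exhibit the weakest condition explicitly in each case and verify it works, relying on the single structural fact that both collecting semantics preserve arbitrary unions. For the forward semantics this follows by structural induction on~$\regr$ from~(\ref{eq:fwsem-definition}): the atomic cases preserve unions (\code{skip} is the identity, \code{x := a} is a direct image, \code{b?} is intersection with a fixed set), sequential composition preserves unions because it is function composition, nondeterministic choice is a pointwise union of two union-preserving maps, and the Kleene star $\bigcup_{n \ge 0}\fwsem{\regr}^n$ is a countable union of union-preserving maps, hence again union-preserving. For the backward semantics, union preservation is immediate since it is lifted to sets by union (and Lemma~\ref{lmm:bwsem-calculus} shows it is moreover compositional). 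Thus $\fwsem{\regr}$ and $\bwsem{\regr}$ are lower adjoints of Galois connections on the complete lattice $(\wp(\Sigma),\subseteq)$, and the two claims are precisely the statement that the corresponding upper adjoints exist.

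Concretely, for the HL case I would fix $Q$ and set
\[
P^\ast \eqdef \bigcup \{ P' \svert \fwsem{\regr} P' \subseteq Q \} = \{ \sigma \svert \fwsem{\regr}\sigma \subseteq Q \},
\]
the two descriptions coinciding because $\fwsem{\regr}$ preserves unions. I would then check two things. First, $P^\ast$ is admissible: by union preservation $\fwsem{\regr} P^\ast = \bigcup_{\sigma \in P^\ast} \fwsem{\regr}\sigma \subseteq Q$, since every $\sigma \in P^\ast$ satisfies $\fwsem{\regr}\sigma \subseteq Q$ by definition. Second, $P^\ast$ is weakest: if $\fwsem{\regr} P \subseteq Q$ then, by monotonicity, $\fwsem{\regr}\sigma \subseteq \fwsem{\regr} P \subseteq Q$ for every $\sigma \in P$, so $P \subseteq P^\ast$. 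Hence $P^\ast$ is the largest --- i.e.\ weakest under $\subseteq$ --- admissible precondition, and it coincides with $\wlp[\regr](Q)$. The NC case is completely symmetric: fixing $P$, I would take $Q^\ast \eqdef \bigcup\{Q' \svert \bwsem{\regr} Q' \subseteq P\} = \{\sigma' \svert \bwsem{\regr}\sigma' \subseteq P\}$ and verify admissibility and maximality by the same two steps, now invoking union preservation and monotonicity of $\bwsem{\regr}$.

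The argument is routine once union preservation is in place, so the only real care is needed there: the point to get right is that the semantics preserve \emph{arbitrary}, not merely finite, unions --- the Kleene-star case being the one where infinitary preservation is used --- since this is exactly what guarantees the upper adjoint, and hence the weakest condition, exists as a genuine element of $\wp(\Sigma)$ rather than merely as a directed supremum. A second, easily overlooked, subtlety is the orientation of ``weakest'': since weaker conditions are larger sets, the weakest admissible condition is the \emph{join} of all admissible ones, which is why join preservation (rather than meet preservation) is the relevant property. No obstacle beyond these bookkeeping points arises, and the same asymmetry explains the contrast with IL and SIL, where the reversed inclusions $\fwsem{\regr}P \supseteq Q$ and $\bwsem{\regr}Q \supseteq P$ need not admit a weakest solution because the relevant direction of preservation fails.
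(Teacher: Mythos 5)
Your proposal is correct and follows essentially the same route as the paper's proof: both take the union of all admissible conditions and use additivity (preservation of arbitrary unions) of $\fwsem{\regr}$, respectively $\bwsem{\regr}$, to conclude that this union is itself admissible and hence the weakest such condition. Your additional material --- the structural induction establishing additivity, the pointwise description of the weakest precondition, and the Galois-connection framing --- is a more explicit justification of what the paper asserts ``by definition,'' not a different argument.
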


\begin{proposition}[Non-existence of strongest conditions]\label{prop:weakest-pre-existence2}
	For any regular command $\regr\in\Reg$:
	\begin{itemize}
		\item for some $Q$, there is no strongest $P$ such that $\fwsem{\regr} P \supseteq Q$ (IL);
		\item for some $P$, there is no strongest $Q$ such that $\bwsem{\regr} Q \supseteq P$ (SIL).
	\end{itemize}
\end{proposition}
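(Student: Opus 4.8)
The plan is to prove both clauses by exhibiting explicit counterexamples, since each only asserts that \emph{some} command together with \emph{some} condition admits no strongest dual condition. The guiding observation is that ``strongest'' means \emph{least with respect to $\subseteq$}: for the IL clause the valid preconditions $\{P \svert \fwsem{\regr}P \supseteq Q\}$ form an upward-closed family (the condition is monotone in $P$), and likewise the valid postconditions $\{Q \svert \bwsem{\regr}Q \supseteq P\}$ for the SIL clause, so asking for a strongest condition is asking for a \emph{minimum} of such a family. The reason a minimum can fail is exactly dual to the situation in Proposition~\ref{prop:weakest-pre-existence1}: there the weakest condition is obtained as a \emph{union} of all valid conditions, which stays valid because $\fwsem{\regr}$ and $\bwsem{\regr}$ preserve unions; but these maps do \emph{not} preserve intersections, so the analogous attempt to build a strongest condition as an intersection breaks down. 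Concretely, I would pick commands for which two valid witnesses exist whose intersection is \emph{not} valid, which already forbids a least element.

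For the IL clause, I would work in the instance with a single program variable $x$, so a store is determined by $\sigma(x) \in \setZ$, and take $\regr_0 \eqdef x := 0$ together with $Q \eqdef \{\sigma_0\}$, where $\sigma_0$ is the unique store with $\sigma_0(x)=0$. By the semantics of assignment, $\fwsem{\regr_0}P = \{\sigma_0\}$ for every nonempty $P$ and $\fwsem{\regr_0}\emptyset = \emptyset$, so $\fwsem{\regr_0}P \supseteq Q$ holds if and only if $P \neq \emptyset$. Hence every singleton is a valid precondition, and a strongest $P^{*}$ would have to satisfy $P^{*} \subseteq \{\sigma\}$ for every store $\sigma$, forcing $P^{*} \subseteq \bigcap_{\sigma}\{\sigma\} = \emptyset$; but $\emptyset$ is not valid, a contradiction. (With several variables one simply replaces $\{\sigma_0\}$ by a single target state $\tau$ with $\tau(x)=0$: the stores $\sigma$ with $\sigma[x\mapsto 0]=\tau$ still give at least two incomparable valid singletons, and the same argument applies.)

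For the SIL clause, in the same single-variable instance I would take the nondeterministic command $\regr_{01} \eqdef (x := 0) \regplus (x := 1)$ and the precondition $P \eqdef \Sigma$, writing $\sigma_0,\sigma_1$ for the stores with $\sigma_0(x)=0$ and $\sigma_1(x)=1$. Since $\fwsem{\regr_{01}}\sigma = \{\sigma_0,\sigma_1\}$ for every $\sigma$, unfolding~\eqref{eq:bwsem-definition} gives $\bwsem{\regr_{01}}Q = \Sigma$ when $Q \cap \{\sigma_0,\sigma_1\} \neq \emptyset$ and $\bwsem{\regr_{01}}Q = \emptyset$ otherwise. Thus $\bwsem{\regr_{01}}Q \supseteq P = \Sigma$ holds exactly when $Q$ meets $\{\sigma_0,\sigma_1\}$, so $\{\sigma_0\}$ and $\{\sigma_1\}$ are both valid postconditions; a strongest $Q^{*}$ would satisfy $Q^{*} \subseteq \{\sigma_0\}$ and $Q^{*} \subseteq \{\sigma_1\}$, hence $Q^{*} = \emptyset$, which is not valid. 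So no strongest postcondition exists for this $P$.

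I do not expect a genuine obstacle beyond choosing the witnesses well: the only things to check carefully are the semantic computations of $\fwsem{\regr_0}$ and $\bwsem{\regr_{01}}$ on the chosen examples, and the fact that the two minimal witnesses are incomparable (equivalently, that their intersection fails the validity condition). The point worth emphasizing is the asymmetry already recorded around Lemma~\ref{lmm:CC-1-monotone}: union-preservation of the forward and backward semantics is what yields weakest conditions under (HL) and (NC), whereas the lack of intersection-preservation is precisely what prevents strongest conditions under (IL) and (SIL), so the two approximation directions are not perfectly dual.
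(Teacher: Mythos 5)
Your proposal is correct and takes essentially the same route as the paper, which proves the proposition via the counterexamples of Example~\ref{ex:il-no-strongest-pre}: a constant assignment with two incomparable minimal valid preconditions for IL, and a nondeterministic assignment with two incomparable minimal valid postconditions for SIL, in both cases noting that their intersection $\emptyset$ fails validity. Your witnesses ($x:=0$ with singleton target, and $(x:=0)\regplus(x:=1)$ in place of \code{x := nondet()}) differ only cosmetically, and your framing via failure of intersection-preservation matches the paper's own discussion surrounding the unsound $\mathsf{conj}$ rule.
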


The reason why strongest conditions may not exist for IL and SIL is that the collecting semantics (both forward and backward) is additive but not co-additive. In other words, rule \horule{disj} is sound for all triples, while a dual rule for conjunction such as
\[
\infer[\{\mathsf{conj}\}]
{\overtriple{P_1 \cap P_2}{\regr}{Q_1 \cap Q_2}}
{\overtriple{P_1}{\regr}{Q_1} & \overtriple{P_2}{\regr}{Q_2} }
\]
is valid for HL and NC but neither for IL nor SIL.
So, for instance, given $Q$ and two HL preconditions $P_1$ and $P_2$ ($\overtriple{P_1}{\regr}{Q}$ and $\overtriple{P_2}{\regr}{Q}$) also their union is a precondition for $Q$, ie. $\overtriple{P_1 \cup P_2}{\regr}{Q}$, which can be proved using \horule{disj}. However, given two IL triples $\undertriple{P_1}{\regr}{Q}$ and $\undertriple{P_2}{\regr}{Q}$, in general $\fwsem{\regr} (P_1 \cap P_2) \nsupseteq Q$ in which case $\undertriple{P_1 \cap P_2}{\regr}{Q}$ is not valid.

\begin{example}\label{ex:il-no-strongest-pre}
	Consider again the program $\mathsf{r1}$ of Example~\ref{ex:il-sil-incomparable}.
	We can prove the two IL triples $\undertriple{x = 0}{\mathsf{r1}}{x = 1}$ and $\undertriple{x = 10}{\mathsf{r1}}{x = 1}$, but their intersection is $\undertriple{\emptyset}{\mathsf{r1}}{x = 1}$, which is not a valid IL triple.
	
	For SIL, consider the program
	\[
	\mathsf{rnd} \eqdef \code{x := nondet()}
	\]
	For precondition $P_1 \eqdef (x = 1)$ we can prove both $\angletriple{P_1}{\mathsf{rnd}}{x = 0}$ and $\angletriple{P_1}{\mathsf{rnd}}{x = 10}$, that are incomparable, and again are both minimal because $\emptyset$ is not a valid postcondition.
	\qed
\end{example}

\subsection{Termination and Reachability}

Termination and reachability are two sides of the same coin when switching from forward to backward reasoning, and over- and under-approximation behave differently with respect to this notion.

For HL, given the definition of collecting semantics, we can only distinguish a precondition which always causes divergence: if $Q$ is empty, all states in the precondition must always diverge. However, if just one state in $P$ has one terminating computation, its final state must be in $Q$, so we do not know any more whether states in $P$ diverge or not. Moreover, because of the over-approximation, a non empty $Q$ does not mean there truly are finite executions, as those may be introduced by the approximation.
Dually, NC cannot say much about reachability of $Q$ unless $P$ is empty, in which case $Q$ is unreachable.

On the contrary, under-approximation has much stronger guarantees on divergence/reachability. Any IL triple $\undertriple{P}{\regr}{Q}$ ensures that all states in $Q$ are reachable from states in $P$, which means in particular that every state in $Q$ is reachable. Dually, a SIL triple $\angletriple{P}{\regr}{Q}$ means that all states in $P$ have a convergent computation (which ends in a state in $Q$). This observation motivates the choice of a forward (resp. backward) rule for iteration in IL (resp. SIL): a backward (resp. forward) rule would need to prove reachability of all points in the postcondition (resp. precondition). Instead, the forward rule of IL (resp. backward rule of SIL) ensures reachability (resp. termination) by construction, as it builds $Q$ (resp. $P$) only with points which are known to be reachable (resp. terminating) by executing the loop.

\section{Separation Sufficient Incorrectness Logic}\label{sec:separation-sil}
We instantiate SIL to handle pointers and dynamic memory allocation, introducing Separation SIL. The goal of Separation SIL is to identify the causes of memory errors: it takes the backward under-approximation principles of SIL and combines it with the ability to deal with pointers from Separation Logic (SL) \citep{DBLP:conf/lics/Reynolds02,DBLP:conf/csl/OHearnRY01}.

\subsection{Heap Regular Commands}
We denote by $\Regh$ the set of all heap regular commands obtained by plugging the following definition of heap atomic commands in~(\ref{eq:reg-commands-def}) (in \textcolor{ACMBlue}{blue} the new primitives):
\[
\Cmdh \ni \regc ::= \; \code{skip} \mid \code{x := a} \mid \code{b?} \mid \textcolor{ACMBlue}{\code{x := alloc()} \mid \code{free(x)} \mid \code{x := [y]} \mid \code{[x] := y}} 
\]

The primitive \code{alloc()} allocates a new memory location containing a nondeterministic value, \code{free} deallocates memory and \code{[$\cdot$]} is the dereferencing operator. Please note that the syntax only allows to allocate, free and dereference (both for reading and writing) only single variables. Particularly, arithmetic $\code{a} \in \AExp$ and Boolean expressions $\code{b} \in \BExp$ cannot dereference a variable: to use a value from the heap, the value must be loaded in a variable beforehand.

Given a heap command $\regr \in \Regh$, we let $\fv(\regr) \subseteq \Var$ as the set of (free) variables appearing in $\regr$. We also define the set $\modified(\regr) \subseteq \Var$ of variables modified by $\regr$ inductively by
\begin{align*}
	&\modified(\code{skip}) = \emptyset \qquad &&\modified(\code{x := a}) = \{ \code{x} \} \\
	&\modified(\code{b?}) = \emptyset \qquad &&\modified(\code{x := alloc()}) = \{ \code{x} \} \\
	&\modified(\code{free(x)}) = \emptyset \qquad &&\modified(\code{x := [y]}) = \{ \code{x} \} \\
	&\modified(\code{[x] := y}) = \emptyset &&\modified(\regr_1; \regr_2) = \modified(\regr_1) \cup \modified(\regr_2) \\
	&\modified(\regr_1 \regplus \regr_2) = \modified(\regr_1) \cup \modified(\regr_2) &&\modified(\regr^{\kstar}) = \modified(\regr)
\end{align*}

Please note that \code{free(x)} and \code{[x] := y} do not modify \code{x}: this is because they only modify the value \emph{pointed by} \code{x}, not the actual value of \code{x} (the memory address itself).

\subsection{Assertion Language}
Our assertion language for pre and postconditions is derived from both SL and Incorrectness Separation Logic (ISL):
\[
\Asl \ni p, q, t ::= \false \mid \lnot p \mid p \land q \mid \exists x . p \mid a \asymp a \mid \emp \mid x \mapsto \code{a} \mid x \dealloc \mid p \andsep q
\]
In the above productions, $\asymp \in \{ =, \neq, \le, <, \dots \}$ encodes standard comparison operators, $x \in \Var$ is a generic variable and $\code{a} \in \AExp$ is an arithmetic expression. The first five constructs describe standard first order logic. The others describe heaps and come from Separation Logic, with the exception of $x \dealloc$, which was introduced by \citet{DBLP:conf/cav/RaadBDDOV20}. 

The constant $\emp$ denotes an empty heap.
The assertion $x \mapsto a$ stands for an heap with a single memory cell pointed by $x$ and whose content is $a$, while $x \dealloc$ describes that $x$ points to a memory cell that was previously deallocated. 
The separating conjunction $p \andsep q$ describes an heap which can be divided in two disjoint sub-heaps, one satisfying $p$ and the other $q$.
We let $x \mapsto - \eqdef \exists v. x \mapsto v$ describe that $x$ is allocated but we do not care about its exact value.
Given a formula $p \in \Asl$, we call $\fv(p) \subseteq \Var$ the set of its free variables.

\subsection{Proof System}\label{sec:sil-proof-system}

\begin{figure}[t]
	\centering
	\begin{framed}
	\hspace*{-0.6em}
	\(
	\begin{array}{cc}
		\infer[\lrule{skip}]
		{\angletriple{\emp}{\code{skip}}{\emp}}
		{}
		\; &
		\infer[\lrule{assign}]
		{\angletriple{q[a / x]}{\code{x := a}}{q}}
		{}
		\\[7.5pt]
		\infer[\lrule{assume}]
		{\angletriple{q \land b}{\code{b?}}{q}}
		{}
		\; &
		\infer[\lrule{alloc}]
		{\angletriple{\emp}{\code{x := alloc()}}{x \mapsto v}}
		{}
		\\[7.5pt]
		\infer[\lrule{free}]
		{\angletriple{x \mapsto -}{\code{free(x)}}{x \dealloc}}
		{}
		\; &
		\infer[\lrule{load}]
		{\angletriple{y \mapsto a \andsep q[a / x]}{\code{x := [y]}}{y \mapsto a \andsep q}}
		{x \notin \fv(a)}
		\\[7.5pt]
		\infer[\lrule{store}]
		{\angletriple{x \mapsto -}{\code{[x] := y}}{x \mapsto y}}
		{}
		\\[7.5pt]
		\hline\hline & \\[-2pt]
		\infer[\lrule{exists}]
		{\angletriple{\exists x. p}{\regr}{\exists x. q}}
		{\angletriple{p}{\regr}{q} & x \notin \fv(\regr)}
		\; &
		\infer[\lrule{frame}]
		{\angletriple{p \andsep t}{\regr}{q \andsep t}}
		{\angletriple{p}{\regr}{q} & \fv(t) \cap \modified(\regr) = \emptyset}
		\\[7.5pt]
		\hline\hline & \\[-2pt]
		\infer[\lrule{cons}]
		{\angletriple{p}{\regr}{q}}
		{p \Rightarrow p' & \angletriple{p'}{\regr}{q'}& q' \Rightarrow q}
		\; &
		\infer[\lrule{seq}]
		{\angletriple{p}{\regr_1; \regr_2}{q}}
		{\angletriple{p}{\regr_1}{t} & \angletriple{t}{\regr_2}{q}}
		\\[7.5pt]
		\infer[\lrule{choice}]
		{\angletriple{p_1 \lor p_2}{\regr_1 \regplus \regr_2}{q}}
		{\angletriple{p_1}{\regr_1}{q} & \angletriple{p_2}{\regr_2}{q}}
		\; &
		\infer[\lrule{iter}]
		{\angletriple{\exists n. q(n)}{\regr^\kstar}{q(0)}}
		{\forall n \ge 0 \;\; \angletriple{q(n+1)}{\regr}{q(n)}}
		\\[7.5pt]
		\hline\hline & \\[-2pt]
		\infer[\lrule{empty}]
		{\angletriple{\false}{\regr}{q}}
		{}
		\; &
		\infer[\lrule{disj}]
		{\angletriple{p_1 \lor p_2}{\regr}{q_1 \lor q_2}}
		{\angletriple{p_1}{\regr}{q_1} & \angletriple{p_2}{\regr}{q_2}}
		\\[7.5pt]
		\infer[\lrule{iter0}]
		{\angletriple{q}{\regr^{\kstar}}{q}}
		{}
		\; &
		\infer[\lrule{unroll}]
		{\angletriple{p}{\regr^{\kstar}}{q}}
		{\angletriple{p}{\regr^{\kstar}; \regr}{q}}
	\end{array}
	\)
	\end{framed}
	\caption{Proof rules for Separation SIL. The first group replaces SIL rule \lrule{atom}, the second includes rules peculiar of SL, the third includes rule from SIL core set, and the fourth includes additional SIL rules.}
	\label{fig:separation-sil}
\end{figure}

We present the rules of Separation SIL in Figure~\ref{fig:separation-sil}. We define the capture-avoiding substitution as usual: $q[a / x]$ is the formula obtained replacing all free occurrences of $x$ in $q$ with the expression $a$.
Unlike ISL, we do not distinguish between correct and erroneous termination -- the goal of SIL is to trace back the causes of errors, not to follow the flow of a program after an error has occurred.

We split the rules of Separation SIL in four groups (Figure~\ref{fig:separation-sil}). The first group includes new rules of Separation SIL. The second one includes rules borrowed from SL. The third and fourth ones are from SIL.

The first group gives the rules for atomic commands $\regc \in \Cmdh$, hence replacing the SIL rule \lrule{atom}.
Rule \lrule{skip} does not specify anything about its pre and postconditions, because whatever is true before and after the \code{skip} can be added with \lrule{frame}.
Rule \lrule{assign} is Hoare's backward assignment rule~\citep{hoare69}.
Rule \lrule{assume} conjoins the assertion \code{b} to the postcondition, because only states satisfying the Boolean guard have an execution.
Rule \lrule{alloc} allocates a new memory location for $x$. The premise is empty: if the previous content of $x$ is needed, \lrule{cons} can be used to introduce a constraint $x = x'$ in the premise. A frame can then reference $x'$ to talk about the previous content of $x$.
Rule \lrule{free} requires $x$ to be allocated before freeing it.
Rule \lrule{load} is similar to rule \lrule{assign}, with the addition of the (disjoint) $y \mapsto a$ to make sure that $y$ is allocated.
Rule \lrule{store} requires that $x$ is allocated, and updates the value it points to.
All these rules are local: they only specify pre and postconditions for the modified part of the heap, since anything else can be added via rule \lrule{frame}.

The rule \lrule{exists} allows to ``hide" local variables.
The rule \lrule{frame} is typical of separation logics \citep{DBLP:conf/lics/Reynolds02,DBLP:conf/cav/RaadBDDOV20}: it allows to add a frame around a derivation, plugging the proof for a small portion of a program inside a larger heap.
In the third group, we collected the core set presented in Figure~\ref{fig:sil}. The only notable difference is in rule \lrule{iter}, where Separation SIL uses a predicate $q(n)$ parametrized by the natural number $n \in \setN$ and the precondition $\exists n. q(n)$ in the conclusion of the rule. This is a logical replacement for the infinite union used in SIL rule \lrule{iter}.
In the fourth group, we instantiated the additional rules of Figure~\ref{fig:sil-extra}.

\subsection{Soundness}
\begin{figure}[t]
	\begin{subfigure}{\textwidth}
		\centering
		\begin{align*}
			\edenot{\code{skip}} (s, h) &\eqdef \{ (s, h) \} \\
			\edenot{\code{x := a}} (s, h) &\eqdef \left\lbrace (s[x \mapsto \edenot{\code{a}} s], h) \right\rbrace \\
			\edenot{\code{b?}} (s, h) &\eqdef \begin{cases*}
				\left\lbrace (s, h) \right\rbrace &if $\edenot{\code{b}} s = \code{tt}$\\
				\emptyset &otherwise
			\end{cases*} \\
			\edenot{\code{x := alloc()}} (s, h) &\eqdef \left\lbrace (s[x \mapsto l], h[l \mapsto v]) \svert v \in \Val, \mathit{avail}(l) \right\rbrace \\
			\edenot{\code{free(x)}} (s, h) &\eqdef \begin{cases*}
				\left\lbrace (s, h[s(x) \mapsto \bot]) \right\rbrace &if $h(s(x)) \in \Val$\\
				\left\lbrace \errstate \right\rbrace &otherwise
			\end{cases*} \\
			\edenot{\code{x := [y]}} (s, h) &\eqdef \begin{cases*}
				\left\lbrace (s[x \mapsto h(s(y))], h) \right\rbrace &if $h(s(y)) \in \Val$\\
				\left\lbrace \errstate \right\rbrace &otherwise
			\end{cases*} \\
			\edenot{\code{[x] := y}} (s, h) &\eqdef \begin{cases*}
				\left\lbrace (s, h[s(x) \mapsto s(y)]) \right\rbrace &if $h(s(x)) \in \Val$\\
				\left\lbrace \errstate \right\rbrace &otherwise
			\end{cases*}
		\end{align*}
		\caption{Semantics of heap atomic commands, where $\mathit{avail}(l) \eqdef (l \notin \dom(h) \lor h(l) = \bot)$.}
		\label{fig:ssil-model:commands}
	\end{subfigure}
	\begin{subfigure}{\textwidth}
		\centering
		\begin{align*}
			&\asldenot{\false} \eqdef \emptyset &&\asldenot{\lnot p} \eqdef \Sigma \setminus \asldenot{p} \\
			&\asldenot{p \land q} \eqdef \asldenot{p} \cap \asldenot{q} &&\asldenot{\exists x. p} \eqdef \{ (s, h) \svert \exists v \in \Val \sdot (s[x \mapsto v], h) \in \asldenot{p} \} \\
			&\asldenot{a_1 \asymp a_2} \eqdef \{ (s, h) \svert \edenot{a_1} s \asymp \edenot{a_2} s \} && \asldenot{\emp} \eqdef \{ (s, []) \} \\
			&\asldenot{x \mapsto a} \eqdef \{ (s, [s(x) \mapsto \edenot{a} s]) \} &&\asldenot{x \dealloc} \eqdef \{ (s, [s(x) \mapsto \bot]) \} \\
			&\rlap{$\asldenot{p \andsep q} \eqdef \{ (s, h_p \bullet h_q) \svert (s, h_p) \in \asldenot{p}, (s,h_q) \in \asldenot{q}, h_p \perp h_q \}$} &&
		\end{align*}
		\caption{Semantics of the assertion language.}
		\label{fig:ssil-model:assertions}
	\end{subfigure}
	\caption{All the ingredients to prove soundness of Separation SIL.}
\end{figure}

To prove soundness of Separation SIL, we first define a denotational semantics for heap regular commands.
We consider a finite set of variables $\Var$ and an infinite set $\Loc$ of memory locations; we define the set of values $\Val \eqdef \setZ \uplus \Loc$ where $\uplus$ is disjoint union.
Stores $s \in \Stores$ are (total) functions from variables to values (so either integers or memory locations). A heap $h \in \Heaps$ is a partial function $h: \Loc \rightharpoonup \Val \uplus \{ \bot \}$. If $h(l) = v \in \Val$, location $l$ is allocated and holds value $v$, if $l \notin \dom(h)$ then it is not allocated. The special value $\bot$ describes a deallocated memory location: if $h(l) = \bot$, that location was previously allocated and then deallocated.
As notation, we use $s[x \mapsto v]$ for function update. For heaps specifically, $[]$ is the empty heap, and $[l \mapsto v]$ is a shortcut for $[][l \mapsto v]$, that is the heap defined only on $l$ and associating to it value $v$.

We consider as states $\sigma$ pairs of a store and a heap, plus the special state \errstate{} representing the occurrence of an error: letting $\Sigma = \Stores \times \Heaps$, states are taken from $\Sigma_e = \Sigma \uplus \{ \errstate \}$. The denotational semantics of atomic commands $\edenot{\cdot}: \Cmdh \rightarrow \wp(\Sigma_e) \rightarrow \wp(\Sigma_e)$ is in Figure~\ref{fig:ssil-model:commands}. To simplify the presentation, we define it as $\edenot{\cdot}: \Cmdh \rightarrow \Sigma \rightarrow \wp(\Sigma_e)$, we let $\edenot{\regc} \errstate = \{ \errstate \}$, and we lift it to set of states by union.
Please note that, since arithmetic expressions $\code{a}$ and Boolean expressions $\code{b}$ cannot contain any dereferencing, their evaluation only depends on the store and not on the heap.

We define the forward collecting semantics of heap commands, $\fwsem{\cdot}: \Regh \rightarrow \wp(\Sigma_e) \rightarrow \wp(\Sigma_e)$, similarly to~(\ref{eq:fwsem-definition}) using the different semantics of atomic commands for $\regc \in \Cmdh$.

The semantics $\asldenot{\cdot}$ of a formula $p \in \Asl$ is a set of states in $\Sigma$.
As notation, we write $h_1 \perp h_2$ when $\dom(h_1) \cap \dom(h_2) = \emptyset$, and we say the two heaps are disjoint. For two disjoint heaps $h_1 \perp h_2$, we define the $\bullet$ operation as the merge of the two: $h_1 \bullet h_2$ is defined on $\dom(h_1) \cup \dom(h_2)$, and its value on $l$ is either $h_1(l)$ or $h_2(l)$ (the only one defined).
The full definition of $\asldenot{\cdot}$ is given in Figure~\ref{fig:ssil-model:assertions}.

Just like SIL, we define validity of a Separation SIL triple $\angletriple{p}{\regr}{q}$ by the condition $\bwsem{\regr} \asldenot{q} \supseteq \asldenot{p}$.
To prove soundness of Separation SIL, we rely on a stronger lemma, whose proof is by induction on the derivation tree.

\begin{lemma}\label{lmm:separation-sil-stronger-sound}
	Let $p, q, t \in \Asl$ and $\regr \in \Regh$. If $\angletriple{p}{\regr}{q}$ is provable and $\fv(t) \cap \modified(\regr) = \emptyset$ then: 
\[\bwsem{\regr} \asldenot{q \andsep t} \supseteq \asldenot{p \andsep t} .\]
\end{lemma}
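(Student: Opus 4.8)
The plan is to argue by induction on the derivation tree of $\angletriple{p}{\regr}{q}$ in the system of Figure~\ref{fig:separation-sil}, keeping the frame $t$ \emph{universally quantified} in the induction hypothesis: at each node I may assume the statement for the immediate sub-derivations and for \emph{every} admissible frame, i.e. every $t'$ with $\fv(t') \cap \modified(\regr') = \emptyset$ for the relevant premise command $\regr'$. Taking $t = \emp$ collapses $p \andsep \emp$ to $p$ and $q \andsep \emp$ to $q$, so the lemma really is a strengthening of plain soundness $\bwsem{\regr}\asldenot{q} \supseteq \asldenot{p}$, and the extra generality is exactly what makes the frame rule go through. Before the case analysis I would record a few routine facts about $\asldenot{\cdot}$: a substitution lemma $\asldenot{q[a/x]} = \{(s,h) \svert (s[x \mapsto \edenot{a}\,s], h) \in \asldenot{q}\}$; a \emph{frame-insensitivity} lemma stating that if $x \notin \fv(t)$ then membership of $(s,h)$ in $\asldenot{t}$ does not depend on $s(x)$; distributivity of $\andsep$ over disjunction and over countable unions, i.e. $\asldenot{(p_1 \lor p_2) \andsep t} = \asldenot{p_1 \andsep t} \cup \asldenot{p_2 \andsep t}$ and $\asldenot{(\exists n. q(n)) \andsep t} = \bigcup_{n} \asldenot{q(n) \andsep t}$; associativity and $\Rightarrow$-monotonicity of $\andsep$; and $\asldenot{\false \andsep t} = \emptyset$. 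I also use that $\bwsem{\regr}$ is monotone and additive and the compositional laws of Lemma~\ref{lmm:bwsem-calculus}, which hold verbatim for $\Regh$ since the backward semantics is the converse of the forward one.

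For the base cases (the first group of rules, for atomic $\regc \in \Cmdh$) the recipe is uniform: given $(s,h) \in \asldenot{p \andsep t}$, split $h$ according to the separating conjunction, exhibit a concrete forward successor in $\fwsem{\regc}(s,h)$, and re-split its heap to land in $\asldenot{q \andsep t}$. For \lrule{assign} and \lrule{load} this is the substitution lemma together with frame-insensitivity (using $x \notin \fv(t)$, which holds because $\modified(\regc) = \{x\}$); for \lrule{assert} it is immediate since the guard only restricts the state set; for \lrule{store} and \lrule{free} one notes that $s(x) \in \dom(h)$ because the $x \mapsto -$ conjunct occupies that cell, and that the update $h[s(x) \mapsto \cdot]$ leaves the $t$-portion untouched (here $\modified(\regc) = \emptyset$, so $t$ may even mention $x$); and for \lrule{alloc} I choose a location $l \notin \dom(h)$ (available since $\Loc$ is infinite) and an arbitrary value, treating $x = x'$ as a pure assertion so that $(x = x') \andsep t$ coincides with $(x = x') \land t$.

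For the structural rules I would chain the appropriate law of Lemma~\ref{lmm:bwsem-calculus} with the hypothesis applied to \emph{enlarged} frames. The characteristic case is \lrule{frame}: to prove the claim for $\angletriple{p \andsep t_0}{\regr}{q \andsep t_0}$ with an external frame $t$, I rewrite $(p \andsep t_0) \andsep t \equiv p \andsep (t_0 \andsep t)$ by associativity and apply the hypothesis to the premise $\angletriple{p}{\regr}{q}$ with the single frame $t_0 \andsep t$, whose free variables avoid $\modified(\regr)$ because both $\fv(t_0)$ and $\fv(t)$ do. The remaining rules are bookkeeping: \lrule{cons} uses monotonicity of $\andsep$ and of $\bwsem{\regr}$; \lrule{seq} composes the two hypotheses through $\bwsem{\regr_1;\regr_2} = \bwsem{\regr_1} \circ \bwsem{\regr_2}$, with the same intermediate frame $t$ admissible since $\modified(\regr_1), \modified(\regr_2) \subseteq \modified(\regr_1;\regr_2)$; \lrule{choice} and \lrule{disj} combine additivity of $\bwsem{\cdot}$ with distributivity of $\andsep$ over $\lor$; \lrule{empty} is trivial via $\asldenot{\false \andsep t} = \emptyset$; \lrule{iter0} and \lrule{unroll} use $\bwsem{\regr^\kstar} = \bigcup_{m \ge 0} \bwsem{\regr}^m \supseteq \mathrm{id}$ and $\bwsem{\regr^\kstar} \circ \bwsem{\regr} = \bigcup_{m \ge 1} \bwsem{\regr}^m \subseteq \bwsem{\regr^\kstar}$. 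For \lrule{iter} I would run a nested induction on the iteration count $m$ proving $\bwsem{\regr}^m \asldenot{q(0) \andsep t} \supseteq \asldenot{q(m) \andsep t}$, the step invoking the outer hypothesis on $\angletriple{q(m+1)}{\regr}{q(m)}$ with frame $t$; taking the union over $m$ and distributing $\andsep$ over it closes the case.

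The step I expect to be the main obstacle is \lrule{exists}. After $\alpha$-renaming so that $x \notin \fv(t)$ I can push the quantifier out, $(\exists x. p) \andsep t \equiv \exists x.(p \andsep t)$, and apply the hypothesis for each witness $v$ of $x$; but to convert a backward witness for $(s[x \mapsto v], h)$ into one for $(s,h)$ I need an auxiliary \emph{locality} lemma: if $x \notin \fv(\regr)$ and $(s', h') \in \fwsem{\regr}(s[x \mapsto v], h)$, then $(s'[x \mapsto s(x)], h') \in \fwsem{\regr}(s,h)$, i.e. the computation is parametric in the value of a variable the command neither reads nor writes. This lemma is itself proved by induction on $\regr$ and is the genuinely delicate ingredient; once it is in hand, the existential in the postcondition absorbs the altered value of $x$ and the case goes through.
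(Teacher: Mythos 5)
Your proposal follows essentially the same route as the paper: induction on the derivation tree with the frame $t$ universally quantified in the induction hypothesis (so that \lrule{frame} is handled by merging $t_0 \andsep t$ into a single frame), the pointwise reformulation via Proposition~\ref{prop:sil-validity-characterization}, and auxiliary facts that coincide with Proposition~\ref{prop:sl-sat} (your substitution and frame-insensitivity lemmas) and Lemma~\ref{lmm:bwsem-calculus} for the structural cases. The one place you go beyond the paper is \lrule{exists}, where your explicit locality lemma (adjusting the output store by $s'[x \mapsto s(x)]$ and letting the existential absorb the change) makes precise a step the paper's proof asserts without justification; this is a welcome refinement rather than a divergence.
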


Then we can prove soundness of the proof system by taking $t = \emp$ and using $p \andsep \emp \equiv p$.

\begin{corollary}[Separation SIL is sound]\label{th:separation-sil-sound}
	If a Separation SIL triple is provable then it is valid.
\end{corollary}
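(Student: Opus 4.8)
The plan is to obtain the corollary as an immediate specialization of the stronger Lemma~\ref{lmm:separation-sil-stronger-sound}, which has been deliberately phrased with an explicit frame precisely so that the inductive soundness argument closes; the frame-free statement is then recovered at the very end.

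First I would recall the definition of validity: a Separation SIL triple $\angletriple{p}{\regr}{q}$ is valid exactly when $\bwsem{\regr} \asldenot{q} \supseteq \asldenot{p}$. So, assuming $\angletriple{p}{\regr}{q}$ is provable, the goal is just to produce this one backward-semantics inclusion. I would instantiate Lemma~\ref{lmm:separation-sil-stronger-sound} with the frame $t \eqdef \emp$. Its side condition $\fv(t) \cap \modified(\regr) = \emptyset$ holds trivially, since $\emp$ has no free variables and thus $\fv(\emp) = \emptyset$. The lemma then delivers $\bwsem{\regr} \asldenot{q \andsep \emp} \supseteq \asldenot{p \andsep \emp}$.

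To finish, I would invoke the semantic identity $\asldenot{r \andsep \emp} = \asldenot{r}$ for every $r \in \Asl$ (this is the denotational content of the equivalence $r \andsep \emp \iff r$). It follows directly from the clause for $\andsep$ in Figure~\ref{fig:ssil-model:assertions} together with the fact that the empty heap $[]$ is a unit for $\bullet$, so any $h$ decomposes uniquely as $h \bullet []$ with the empty part satisfying $\emp$. Rewriting both sides of the inclusion with this identity yields $\bwsem{\regr} \asldenot{q} \supseteq \asldenot{p}$, which is precisely validity. This closes the corollary.

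I expect no genuine obstacle in the corollary itself: it is a one-line specialization. The real work sits in Lemma~\ref{lmm:separation-sil-stronger-sound}, which I am permitted to assume. The reason soundness is stated with an arbitrary disjoint frame $t$ (rather than proving $\bwsem{\regr}\asldenot{q} \supseteq \asldenot{p}$ directly by induction) is that rule \lrule{frame} only lets one attach a frame \emph{to the conclusion} of a subderivation, so a naive induction would lack a hypothesis strong enough to reinsert the frame; carrying $t$ through every case—the atomic rules, \lrule{cons}, \lrule{seq}, \lrule{choice}, the infinitary \lrule{iter}, and especially the \lrule{frame} step and its interaction with $\bwsem{\cdot}$—is exactly what makes the induction go through. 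Taking $t=\emp$ at the end then harmlessly erases the frame.
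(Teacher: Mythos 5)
Your proposal is correct and follows exactly the paper's route: the paper likewise obtains the corollary by instantiating Lemma~\ref{lmm:separation-sil-stronger-sound} with $t = \emp$ and using $p \andsep \emp \iff p$ (semantically, $\asldenot{r \andsep \emp} = \asldenot{r}$ since $[]$ is a unit for $\bullet$). Your additional remarks on why the lemma carries an explicit frame are accurate but not part of the corollary's proof itself.
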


The presented proof system is not complete. However, we will show how to recover relative completeness~\citep[\S 4.3]{DBLP:journals/fac/AptO19} in Section~\ref{sec:separation-sil-completeness}.

\subsection{Example of SIL Derivation}\label{sec:separation-sil-derivation}
We now discuss in full detail the example in \citet{DBLP:conf/cav/RaadBDDOV20} (cf. Figure~\ref{fig:separationexample}) to show how Separation SIL can infer preconditions ensuring that a provided error can happen.

Our syntax does not support functions, so we assume \code{push\_back} to be an inlined macro.
As we cannot free and allocate $*v$ directly (just like ISL), we introduce an intermediate variable $y$. Thus we rewrite the program in Figure~\ref{fig:separationexample} as
\begin{equation*}
	\mathsf{rclient} \eqdef x := [v];\ (\regr_{b} \regplus \code{skip}) \qquad\qquad
	\regr_{b} \eqdef y := [v];\ \text{free}(y);\ y := \text{alloc}();\ [v] := y
\end{equation*}

\noindent To find the cause of errors, we do not include the last assignment \code{*x := 1} in $\mathsf{rclient}$: we know that whenever the postcondition $x \dealloc$ is satisfied, an error occurs after $\mathsf{rclient}$, and that is everything we need to find its source.
Any valid Separation SIL triple for $x \dealloc$ identifies a precondition such that any state satisfying it has a faulty execution.

\citet{DBLP:conf/cav/RaadBDDOV20} derives the Incorrectness Separation Logic triple below, which proves the existence of a faulty execution starting from at least one state in the precondition.
\[
\undertriple{v \mapsto z \andsep z \mapsto -}{\mathsf{rclient}}{v \mapsto y \andsep y \mapsto - \andsep x \dealloc}.
\]

Separation SIL can do more: it can prove the triple
\[
\angletriple{v \mapsto z \andsep z \mapsto - \andsep \true}{\mathsf{rclient}}{x \dealloc \andsep \true}
\]
which has both a more succinct postcondition capturing the error and a stronger guarantee: \emph{every} state in the precondition reaches the error, hence it gives (many) actual witnesses for testing and debugging purposes.
Moreover, Separation SIL proof system guides the crafting of the precondition if the proof is done from the error postcondition (e.g., the pointer deallocated right before its dereference) backward.

\begin{figure}[t]
	\begin{subfigure}[b]{\textwidth}
		\footnotesize
		\begin{align*}
			&\angleexact{\true \andsep v \mapsto z \andsep z \mapsto - \andsep (x = z \lor x \dealloc)} \\
			&\quad y := [v]; \\
			& \textcolor{gray}{\angleexact{\true \andsep \underline{v \mapsto z \andsep y \mapsto - \andsep (x = y \lor x \dealloc)}}} \\
			&\angleexact{\true \andsep v \mapsto - \andsep y \mapsto - \andsep (x = y \lor x \dealloc)} \\
			&\quad \text{free}(y); \\
			& \textcolor{gray}{\angleexact{\true \andsep v \mapsto - \andsep \underline{y \dealloc} \andsep (x = y \lor x \dealloc)}}\\
			&\angleexact{x \dealloc \andsep v \mapsto - \andsep \emp \andsep \true} \\
			&\quad y := \text{alloc}(); \\
			& \textcolor{gray}{\angleexact{x \dealloc \andsep v \mapsto - \andsep \underline{y \mapsto y'} \andsep \true}} \\
			&\angleexact{x \dealloc \andsep v \mapsto - \andsep \true} \\
			&\quad [v] := y \\
			& \textcolor{gray}{\angleexact{x \dealloc \andsep \underline{v \mapsto y} \andsep \true}} \\
			&\angleexact{x \dealloc \andsep \true}
		\end{align*}
		\caption{Derivation of the triple $\angletriple{t}{\regr_{b}}{q}$, linearized. We write in \textcolor{gray}{grey} the strengthened conditions obtained using \lrule{cons}, and \underline{underline} the postcondition of the rule for the current atomic command. Everything else is a frame shared between pre and post, using \lrule{frame}.}
		\label{fig:ssil-derivation:sub1}
	\end{subfigure}
	\begin{subfigure}[b]{\textwidth}
		\centering
		\footnotesize
		\[
		\infer[\lrule{seq}]
		{\angletriple{p}{\mathsf{rclient}}{q}}
		{
			\infer[\lrule{cons}]{\angletriple{p}{x := [v]}{t \lor q}}
			{
				\infer[\lrule{load}]{\angletriple{p}{x := [v]}{t}}{}
			}
			&
			\infer[\lrule{choice}]{\angletriple{t \lor q}{\regr_{b} \regplus \code{skip}}{q}}{
				\infer[]{\angletriple{t}{\regr_{b}}{q}}{\vdots} &
				\infer[\lrule{frame}]{\angletriple{q}{\code{skip}}{q}}{
					\infer[\lrule{skip}]{\angletriple{\emp}{\code{skip}}{\emp}}{}
				}
			}
		}
		\]
		\caption{Derivation of the Separation SIL triple $\angletriple{p}{\mathsf{rclient}}{q}$ exploiting the sub-derivation in Figure~\ref{fig:ssil-derivation:sub1}.}
		\label{fig:ssil-derivation:sub2}
	\end{subfigure}
	\caption{The full derivation of $\angletriple{p}{\mathsf{rclient}}{q}$, split in two parts.}
	\label{fig:ssil-derivation}
\end{figure}

Let us fix the following assertions:
\begin{align*}
p: &(v \mapsto z \andsep z \mapsto - \andsep \true), \\
q: & (x \dealloc \andsep \true), \\
t: & (v \mapsto z \andsep z \mapsto - \andsep (x = z \lor x \dealloc) \andsep \true).
\end{align*}

To prove the Separation SIL triple $\angletriple{p}{\mathsf{rclient}}{q}$, we first prove $\angletriple{t}{\regr_b}{q}$. The derivation of this triple is in Figure~\ref{fig:ssil-derivation:sub1}.
This derivation is better understood if read from bottom to top, since we start from the postcondition and look for a suitable precondition to apply the rule for each one of the four atomic commands. In all cases, we start with a postcondition, then strengthen it to be able to apply the right rule: this usually means adding some constraint on the shape of the heap.
In particular, to apply the rule \lrule{free} we need $y$ to be deallocated, and this can happen in two different ways: either if $y = x$, since $x$ is deallocated; or if $y$ is a new name. This is captured by the disjunction $x = y \lor x \dealloc$.

Using the derivation in Figure~\ref{fig:ssil-derivation:sub1}, we complete the proof as shown in Figure~\ref{fig:ssil-derivation:sub2}.
We can apply \lrule{load} to prove the triple $\angletriple{p}{\code{x := [v]}}{t}$ because $p$ is equivalent to 
\[
(v \mapsto z \andsep z \mapsto - \andsep (z = z \lor z \dealloc) \andsep \true):
\]

\noindent $z \mapsto - \andsep z \dealloc$ is not satisfiable, so we can remove that disjunct.

The same example was used in~\citet{DBLP:journals/corr/ZilbersteinDS23} to illustrate the effectiveness of outcome-based separation logic for bug-finding. Even though the OL derivation shown in~\citet[Fig.~6]{DBLP:journals/corr/ZilbersteinDS23} proves essentially the same triple as the SIL one in Fig.~\ref{fig:ssil-derivation}, the deduction processes are quite different.
In fact, OL reasoning is forward oriented, as witnessed by the presence of the implication that concludes the proof and by the triple for the $\code{skip}$ branch, whereas SIL is naturally backward oriented, to infer the preconditions that lead to the error.

\subsection{Observations on SIL Principles}\label{sec:separation-sil-principles}
In Section~\ref{sec:separation-sil-derivation}, we use \lrule{cons} to drop the disjunct $q$ from $\angletriple{p}{\code{x := [v]}}{t \lor q}$. Similarly, we could have used \lrule{cons} to drop the disjunct $x \dealloc$ in the precondition for $y := \text{alloc}()$ in $\regr_b$. This is analogous to the IL ability to drop disjuncts in the post, but with respect to the backward direction.

Furthermore, we use the postcondition $x \dealloc \andsep \true$. The reader might be wondering why we had to include the $(\andsep\ \true)$: is it not possible to just frame it in when we plug the proof in a larger program?
The issue is that in final reachable states $x$ is not the only variable allocated (there are also $v$ and $y$), so the final heap should talk about them as well. Adding $(\andsep\ \true)$ is just a convenient way to focus only on the part of the heap that describes the error, that is $x \dealloc$, and just leave everything else unspecified since we do not care about it.

\subsection{Relative completeness of Separation SIL}\label{sec:separation-sil-completeness}
The proof system in Section~\ref{sec:sil-proof-system} is not complete. To move towards completeness, we first limit the assertion language to the existential fragment of first-order logic:
\begin{align*}
	\Asl \ni p, q, t ::= &\; \false \mid \true \mid p \land q \mid p \lor q \mid \exists x . p \mid a \asymp a \\
	&\mid \emp \mid x \mapsto \code{a} \mid x \dealloc \mid p \andsep q
\end{align*}
We remove negation, so we don't include universal quantifiers and heap assertions must be positive. However, we argue that this is sufficient to find bugs: for instance, in the example in Section~\ref{sec:separation-sil-derivation}, we only used assertions from this subset.

With this limited assertion language, the proof system in Section~\ref{sec:sil-proof-system} is complete for all atomic commands except \code{alloc}.
To deal with \code{alloc}, we need the ability to refer to the specific memory location that was allocated. However, the naive solution to add a constraint $x = \alpha$ in the post of \lrule{alloc} makes the frame rule unsound: for instance, the following triple is not valid:
\[
\angletriple{\emp \andsep \alpha \mapsto -}{\code{x := alloc()}}{(x \mapsto - \land x = \alpha) \andsep \alpha \mapsto -} .
\]
To recover the frame rule, just like ISL needs the deallocated assertion in the post \citep[§3]{DBLP:conf/cav/RaadBDDOV20}, we need a "will be allocated" assertion in the pre. To this end we use the $\dealloc$ assertion, and change the semantic model to only allocate a memory location that is explicitly $\bot$ instead of one not in the domain of the heap.
We formalize this by letting $\mathit{avail}(l) \eqdef h(l) = \bot$ in Figure~\ref{fig:ssil-model:commands}, and replacing the axiom \lrule{alloc} with 
\[
\infer[\lrule{alloc}]
{\angletriple{\beta \dealloc{}}{\code{x := alloc()}}{x = \beta \land x \mapsto v}}
{}
\]
Soundness still holds for this different semantics. Moreover, we can prove relative completeness \citep[§4.3]{DBLP:journals/fac/AptO19} for loop-free programs:
\begin{theorem}[Relative completeness for loop-free programs]\label{th:separation-sil-sequential-complete}
	Suppose to have an oracle to prove implications between formulas in $\Asl$. Let $\regr \in \Regh$ be a regular command without $\kstar$ and $p, q \in \Asl$ such that $\bwsem{\regr} \asldenot{q} \supseteq \asldenot{p}$. Then the triple $\angletriple{p}{\regr}{q}$ is provable.
\end{theorem}

The proof relies on the possibility to rewrite any $q$ in an equivalent assertion of the form $\exists x_1.\cdots\exists x_n.~\bigvee_{1\leq i\leq k} q_i$ where all $q_i$ are assertions involving atoms composed with $\wedge$ and $\andsep$ only.
This way, completeness is proved for such $q_i$ first and then extended to the entire $q$ thanks to rules \lrule{disj} and \lrule{exists}. 
Notably, we show that the weakest (possible) precondition $\bwsem{\regr}\asldenot{q}$ of loop-free programs is always expressible as an assertion $t \in \Asl$, namely $\asldenot{t} = \bwsem{\regr} \asldenot{q}$, and prove that the triple $\angletriple{t}{\regr}{q}$ can be derived. Then, by \lrule{cons}, the theorem follows for any $p$ that implies $t$.

\section{Conclusion and Future Work}\label{sec:conc}

We have introduced SIL as a correct and complete program logic aimed to locate the causes of errors.
Furthermore, we instantiated SIL to a sound and (relatively) complete proof system for handling memory errors and discussed its advantages over ISL and OL.

Unlike IL, which was designed to expose erroneous outputs, SIL provides sufficient conditions that explain why such errors can occur.
SIL can be characterized as a logic based on backward under-approximation, which helped us to compare it against HL, IL and NC.
This is captured in the taxonomy of Figure~\ref{fig:square-full} and in the rule-by-rule comparison of Figure~\ref{fig:HLvsILvsSIL}, which we used to clarify the analogies and differences between the possible approaches.
We obtained some surprising connections: although NC and IL share the same consequence rules, they are not comparable; NC triples are isomorphic to HL ones, but such a correspondence cannot be extended to relate IL and SIL; and we pointed out the main reasons why duality arguments cannot apply in this case.
The following list addresses the overall connections between the different logics:
\begin{description}
\item[HL vs NC:] there is an isomorphism given by $\overtriple{P}{\regr}{Q}$ iff $\necctriple{\neg P}{\regr}{\neg Q}$.
\item[HL vs IL:] in general there is no relation between forward over- and under\hyp{}approximation triples: the only triples common to HL and IL are the exact ones.
\item[HL vs SIL:] for deterministic and terminating programs HL and SIL judgements do coincide. More precisely, we have that for terminating programs $\overtriple{P}{\regr}{Q}$ implies $\angletriple{P}{\regr}{Q}$ and for deterministic programs $\angletriple{P}{\regr}{Q}$ implies $\overtriple{P}{\regr}{Q}$. OL~\citep{DBLP:journals/corr/ZilbersteinDS23}  is correct for both HL and SIL and the revised OL in the preprint~\citep{zilberstein2024relatively} is also complete.
\item[NC vs IL:] although NC and IL share the same consequence rule, they are not comparable unless the program is reversible (in which case IL implies NC).
\item[NC vs SIL:] in general there is no relation between backward over- and under\hyp{}approximation triples: the only triples common to NC and SIL are the exact ones.
\item[IL vs SIL:] there is no relation. The independently developed preprint \citep{unter2024} proposes a correct and complete proof system for IL and SIL.
\end{description}

We conclude that each logic focuses on different aspects: none of them is ``better'' than the others; instead, each one has its merits.

\subsection{Future Work}

It seems interesting to explore a synergic use of different logics for practical issues. For instance, a forward, IL-based analysis determines, at every program point, a set of truly reachable states, which can then be used to narrow down a subsequent backward, SIL-based analysis to find input states which truly generate the pointed errors.
To this aim, it could be useful to extend the taxonomy of HL, NC, IL and SIL by incorporating some other approaches~\citep{DBLP:conf/ecoop/MaksimovicCLSG23,DBLP:journals/corr/ZilbersteinDS23,unter2024,DBLP:journals/jacm/Hoare78}, as discussed in Section~\ref{sec:relatedwork}.

Lastly, we plan to explore the expressiveness of other assertion languages in order to extend the relative completeness proof of Separation SIL to deal with loops.

\begin{acks}
This work is supported by the \grantsponsor{1}{Italian Ministero dell'Università e della Ricerca}{https://prin.mur.gov.it/} under Grant No. \grantnum{1}{P2022HXNSC}, PRIN 2022 PNRR -- \emph{Resource Awareness in Programming: Algebra, Rewriting, and Analysis}.
\end{acks}

\bibliographystyle{ACM-Reference-Format}
\bibliography{bibfile}

\appendix

\newpage
\section{Proofs}\label{sec:proofs}

\subsection{Proofs of Section~\ref{sec:background}}

\begin{proof}[Proof of Lemma~\ref{lmm:bwsem-calculus}]
	In the proof, we assume $Q$ to be any set of states, and $\sigma' \in Q$ to be any of its elements.
	
	\proofcase{$\bwsem{\regr_1; \regr_2}$}
	By~(\ref{eq:bwsem-sigma-sigma'}), $\sigma \in \bwsem{\regr_1; \regr_2} \sigma'$ if and only if $\sigma' \in \fwsem{\regr_1; \regr_2} \sigma$.
	\[
	\fwsem{\regr_1; \regr_2} \sigma = \fwsem{\regr_2} (\fwsem{\regr_1} \sigma) = \bigcup\limits_{\sigma'' \in \fwsem{\regr_1} \sigma} \fwsem{\regr_2} \sigma''
	\]
	so $\sigma' \in \fwsem{\regr_1; \regr_2} \sigma$ if and only if there exists a $\sigma'' \in \fwsem{\regr_1} \sigma$ such that $\sigma' \in \fwsem{\regr_2} \sigma''$. Again by~(\ref{eq:bwsem-sigma-sigma'}), these are equivalent to $\sigma \in \bwsem{\regr_1} \sigma''$ and $\sigma'' \in \bwsem{\regr_2} \sigma'$, respectively. Hence
	\[
	\sigma' \in \fwsem{\regr_1; \regr_2} \sigma \iff \exists \sigma'' \in \bwsem{\regr_2} \sigma' \sdot \sigma \in \bwsem{\regr_1} \sigma''
	\]
	Since $\bwsem{\cdot}$ is defined on sets by union
	\[
	\bwsem{\regr_1} (\bwsem{\regr_2} \sigma') = \bigcup\limits_{\sigma'' \in \bwsem{\regr_2} \sigma'} \bwsem{\regr_1} \sigma''
	\]
	which means $\exists \sigma'' \in \bwsem{\regr_2} \sigma' \sdot \sigma \in \bwsem{\regr_1} \sigma''$ if and only if $\sigma \in \bwsem{\regr_1} (\bwsem{\regr_2} \sigma')$.
	Putting everything together, we get $\sigma \in \bwsem{\regr_1; \regr_2} \sigma'$ if and only if $\sigma \in \bwsem{\regr_1} (\bwsem{\regr_2} \sigma')$, so the two are the same set. The thesis follows easily lifting the equality by union on $\sigma' \in Q$ and by the arbitrariness of $Q$.
	
	\proofcase{$\bwsem{\regr_1 \regplus \regr_2}$}
	By~(\ref{eq:bwsem-sigma-sigma'}), $\sigma \in \bwsem{\regr_1 \regplus \regr_2} \sigma'$ if and only if $\sigma' \in \fwsem{\regr_1 \regplus \regr_2} \sigma$.
	\[
	\fwsem{\regr_1 \regplus \regr_2} \sigma = \fwsem{\regr_1} \sigma \cup \fwsem{\regr_2} \sigma
	\]
	so $\sigma' \in \fwsem{\regr_1 \regplus \regr_2} \sigma$ if and only if $\exists i \in \{ 1, 2 \}$ such that $\sigma' \in \fwsem{\regr_i} \sigma$. This is again equivalent to $\sigma \in \bwsem{\regr_i} \sigma'$, and
	\[
	\exists i \in \{ 1, 2 \} \sdot \sigma \in \bwsem{\regr_i} \sigma' \iff \sigma \in \bwsem{\regr_1} \sigma' \cup \bwsem{\regr_2} \sigma'
	\]
	Putting everything together, we get $\sigma \in \bwsem{\regr_1 \regplus \regr_2} \sigma'$ if and only if $\sigma \in \bwsem{\regr_1} \sigma' \cup \bwsem{\regr_2} \sigma'$, which implies the thesis as in point 1.
	
	\proofcase{$\bwsem{{\regr^\kstar}}$}
	To prove this last equality, we define $\regr^n$ inductively as the sequential composition of $\regr$ with itself n times: $\regr^1 = \regr$ and $\regr^{n+1} = \regr^n; \regr$. Clearly $\fwsem{\regr^n} =\fwsem{\regr}^n$. For simplicity, we also define $\fwsem{\regr^0} = \bwsem{\regr^0} = \fwsem{\regr}^0$. We prove by induction on $n$ that $\bwsem{\regr^n} = \bwsem{\regr}^n$.
	For $n = 1$ we have $\bwsem{\regr^1} = \bwsem{\regr}^1$. If we assume it holds for $n$ we have
	
	\begin{align*}
		\bwsem{\regr^{n+1}} &= \bwsem{\regr^n; \regr} &[\text{def. of }\regr^n] \\
		&= \bwsem{\regr^n} \circ \bwsem{\regr} &[\text{pt. 1 of this lemma}] \\
		&= \bwsem{\regr}^n \circ \bwsem{\regr} &[\text{inductive hp}] \\
		&= \bwsem{\regr}^{n+1}
	\end{align*}
	
	We then observe that
	\begin{align*}
		\bwsem{\regr^\kstar} \sigma' &= \{ \sigma \svert \sigma' \in \fwsem{\regr^\kstar} \sigma \} &[\text{def. of } \bwsem{\cdot}] \\
		&= \{ \sigma \svert \sigma' \in \bigcup\limits_{n \ge 0} \fwsem{\regr}^n \sigma \} &[\text{def. of } \fwsem{\regr^\kstar}] \\
		&= \bigcup\limits_{n \ge 0} \{ \sigma \svert \sigma' \in \fwsem{\regr}^n \sigma \} & \\
		&= \bigcup\limits_{n \ge 0} \{ \sigma \svert \sigma' \in \fwsem{\regr^n} \sigma \} &[\text{observed above}]\\
		&= \bigcup\limits_{n \ge 0} \{ \sigma \svert \sigma \in \bwsem{\regr^n} \sigma' \} &[(\ref{eq:bwsem-sigma-sigma'})]\\
		&= \bigcup\limits_{n \ge 0} \bwsem{\regr^n} \sigma' \\
		&= \bigcup\limits_{n \ge 0} \bwsem{\regr}^n \sigma' &[\text{shown above}]
	\end{align*}
	
	As in the cases above, the thesis follows.
\end{proof}

\subsection{Proofs of Section~\ref{sec:comparison}}

\begin{proof}[Proof of Proposition~\ref{prop:fw-inclusion-negation-bw}]
	We prove the left-to-right implication, so assume $\fwsem{\regr}P \subseteq Q$.
	Take a state $\sigma' \in \lnot Q$. This means $\sigma' \notin Q$, that implies $\sigma' \notin \fwsem{\regr} P$. So, for any state $\sigma \in P$, we have $\sigma' \notin \fwsem{\regr} \sigma$, which is equivalent to $\sigma \notin \bwsem{\regr} \sigma'$ by~(\ref{eq:bwsem-sigma-sigma'}). This being true for all $\sigma \in P$ means $P \cap \bwsem{\regr} \sigma' = \emptyset$, that is equivalent to $\bwsem{\regr} \sigma' \subseteq \lnot P$.
	Since this holds for all states $\sigma' \in \lnot Q$, we have $\bwsem{\regr} (\lnot Q) \subseteq \lnot P$.
	
	The other implication is analogous.
\end{proof}

\begin{proof}[Proof of Proposition~\ref{prop:sil-hl-deterministic-terminating}]
	To prove the first point, assume $\bwsem{\regr} Q \supseteq P$ and take $\sigma' \in \fwsem{\regr} P$. Then there exists $\sigma \in P$ such that $\sigma' \in \fwsem{\regr} \sigma$. Since $\regr$ is deterministic, $\fwsem{\regr} \sigma$ can contain at most one element, hence $\fwsem{\regr} \sigma = \{ \sigma' \}$. Moreover, since $\sigma \in P \subseteq \bwsem{\regr} Q$ there must exists a $\sigma'' \in Q$ such that $\sigma'' \in \fwsem{\regr} \sigma = \{ \sigma' \}$, which means $\sigma' \in Q$. Again, by arbitrariness of $\sigma' \in \fwsem{\regr} P$, this implies $\fwsem{\regr} P \subseteq Q$.

	To prove the second point, assume $\fwsem{\regr} P \subseteq Q$ and take a state $\sigma \in P$. Since $\regr$ is terminating, $\fwsem{\regr} \sigma$ is not empty, hence we can take $\sigma' \in \fwsem{\regr} \sigma$. The hypothesis $\fwsem{\regr} P \subseteq Q$ implies that $\sigma' \in Q$. Then, by~(\ref{eq:bwsem-sigma-sigma'}), $\sigma \in \bwsem{\regr} \sigma' \subseteq \bwsem{\regr} Q$. By arbitrariness of $\sigma \in P$, this implies $P \subseteq \bwsem{\regr} Q$.
\end{proof}

\begin{proof}[Proof of Lemma~\ref{lmm:CC-1-monotone}]
	We first prove that $\bwsem{\regr} \fwsem{\regr} P \supseteq P \setminus D_{\regr}$.
	Take a $\sigma \in P \setminus D_{\regr}$. Because $\sigma \notin D_{\regr}$, $\fwsem{\regr} \sigma \neq \emptyset$, so take $\sigma' \in \fwsem{\regr} \sigma$. Since $\sigma \in P$ we have $\sigma' \in \fwsem{\regr} P$. Moreover, by~(\ref{eq:bwsem-sigma-sigma'}), we get $\sigma \in \bwsem{\regr} \sigma' \subseteq \fwsem{\regr} P$. By arbitrariness of $\sigma \in P \setminus D_{\regr}$ we have the thesis.
	
	The proof for $\fwsem{\regr} \bwsem{\regr} Q \supseteq Q \setminus U_{\regr}$ is analogous.
\end{proof}

\begin{proof}[Proof of Proposition~\ref{prop:weakest-pre-existence1}]
	By definition, $\fwsem{\regr}$ is additive, that is $\fwsem{\regr}(P_1 \cup P_2) = \fwsem{\regr} P_1 \cup \fwsem{\regr} P_2$. Take all $P$ such that $\fwsem{\regr} P \subseteq Q$. By additivity of $\fwsem{\regr}$, their union satisfies the same inequality, hence it is the weakest such $P$.

	By definition, $\bwsem{\regr}$ is additive. Analogously, take all $Q$ such that $\bwsem{\regr} Q \subseteq P$. By additivity of $\bwsem{\regr}$, their union is the weakest $Q$ satisfying that inequality.
\end{proof}

\begin{proof}[Proof of Proposition~\ref{prop:weakest-pre-existence2}]
	The proof is given by the counterexamples in Example~\ref{ex:il-no-strongest-pre}.
	For IL, the example shows that for $Q_{1} \eqdef (x = 1)$ there is no strongest $P$ such that $\fwsem{\mathsf{r1}} P \supseteq Q$: $x = 0$ and $x = 10$ are incomparable and are both minimal, as $\emptyset$ is not a valid precondition.
	The argument for SIL is analogous using $\mathsf{rnd}$ and precondition $P_{1} \eqdef (x = 1)$.
\end{proof}

\subsection{SIL Soundness and Completeness}

We split the proof between soundness and completeness.
\begin{proposition}[SIL is sound]\label{prop:sil-correct}
	Any provable SIL triple is valid.
\end{proposition}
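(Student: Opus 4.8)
The plan is to proceed by structural induction on the derivation tree of the provable triple $\angletriple{P}{\regr}{Q}$, showing that each of the five rules in Figure~\ref{fig:sil} preserves validity: assuming the premises satisfy the validity condition (SIL), I establish it for the conclusion. The two ingredients I would rely on throughout are the compositional characterization of backward semantics (Lemma~\ref{lmm:bwsem-calculus}) and the fact that $\bwsem{\regr}$ is monotone, which follows immediately since it is defined by union on sets (additivity implies monotonicity).

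The base case \lrule{atom} is immediate: its conclusion $\angletriple{\bwsem{\regc}Q}{\regc}{Q}$ requires $\bwsem{\regc} Q \supseteq \bwsem{\regc} Q$, which holds as an equality. For \lrule{cons}, the premise gives $\bwsem{\regr} Q' \supseteq P'$; since $Q' \subseteq Q$, monotonicity yields $\bwsem{\regr} Q \supseteq \bwsem{\regr} Q' \supseteq P' \supseteq P$, so the conclusion is valid. For \lrule{seq}, the premises give $\bwsem{\regr_1} R \supseteq P$ and $\bwsem{\regr_2} Q \supseteq R$; using $\bwsem{\regr_1; \regr_2} Q = \bwsem{\regr_1}(\bwsem{\regr_2} Q)$ from Lemma~\ref{lmm:bwsem-calculus} together with monotonicity, I get $\bwsem{\regr_1; \regr_2} Q \supseteq \bwsem{\regr_1} R \supseteq P$. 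For \lrule{choice}, the premises give $\bwsem{\regr_i} Q \supseteq P_i$ for $i \in \{1,2\}$, and since $\bwsem{\regr_1 \regplus \regr_2} Q = \bwsem{\regr_1} Q \cup \bwsem{\regr_2} Q$, the conclusion $\bwsem{\regr_1 \regplus \regr_2} Q \supseteq P_1 \cup P_2$ follows.

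I expect the main obstacle to be the rule \lrule{iter}, which is infinitary and thus requires a nested induction. Its premises assert $\bwsem{\regr} Q_n \supseteq Q_{n+1}$ for every $n \ge 0$, and the goal is $\bwsem{\regr^\kstar} Q_0 \supseteq \bigcup_{n \ge 0} Q_n$. Using $\bwsem{\regr^\kstar} Q_0 = \bigcup_{n \ge 0} \bwsem{\regr}^n Q_0$ from Lemma~\ref{lmm:bwsem-calculus}, it suffices to show $Q_n \subseteq \bwsem{\regr}^n Q_0$ for each $n$, which I would prove by induction on $n$: the case $n = 0$ is the trivial inclusion $Q_0 \subseteq \bwsem{\regr}^0 Q_0 = Q_0$, and in the inductive step the premise gives $Q_{n+1} \subseteq \bwsem{\regr} Q_n$, while monotonicity applied to the inner induction hypothesis $Q_n \subseteq \bwsem{\regr}^n Q_0$ gives $\bwsem{\regr} Q_n \subseteq \bwsem{\regr}^{n+1} Q_0$. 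Chaining these two inclusions and taking the union over all $n$ yields $\bigcup_{n \ge 0} Q_n \subseteq \bigcup_{n \ge 0} \bwsem{\regr}^n Q_0 = \bwsem{\regr^\kstar} Q_0$, which is exactly the validity of the conclusion.
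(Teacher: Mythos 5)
Your proof is correct and follows exactly the route the paper takes: the paper's own proof is a one-line appeal to structural induction on the derivation tree, and your case analysis (including the nested induction on $n$ for \lrule{iter} using Lemma~\ref{lmm:bwsem-calculus} and monotonicity of $\bwsem{\regr}$) is precisely the argument that induction unfolds to.
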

\begin{proof}
	The proof is by structural induction on the derivation tree, using the characterization of Proposition~\ref{prop:sil-validity-characterization}.
\end{proof}

\begin{proposition}[SIL is complete]\label{prop:sil-complete}
	Any valid SIL triple is provable.
\end{proposition}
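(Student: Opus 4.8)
The plan is to reduce completeness to a single ``canonical'' derivation, exploiting the consequence rule \lrule{cons}. The crux is the auxiliary claim that, for every command $\regr \in \Reg$ and every postcondition $Q \subseteq \Sigma$, the triple $\angletriple{\bwsem{\regr}Q}{\regr}{Q}$ is provable. Granting this claim, completeness follows immediately: if $\angletriple{P}{\regr}{Q}$ is valid, then by Definition~\ref{def:validity} we have $P \subseteq \bwsem{\regr}Q$, so applying \lrule{cons} with $P' \eqdef \bwsem{\regr}Q$ and $Q' \eqdef Q$ (the two side conditions $P \subseteq P'$ and $Q' \subseteq Q$ being immediate) to the canonical derivation of $\angletriple{\bwsem{\regr}Q}{\regr}{Q}$ yields a proof of $\angletriple{P}{\regr}{Q}$.

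I would prove the auxiliary claim by structural induction on $\regr$, keeping the postcondition $Q$ universally quantified so that the induction hypothesis is available for every choice of postcondition on the subcommands. The base case $\regr = \regc$ is exactly the axiom \lrule{atom}. For sequencing, Lemma~\ref{lmm:bwsem-calculus} gives $\bwsem{\regr_1;\regr_2}Q = \bwsem{\regr_1}(\bwsem{\regr_2}Q)$; the induction hypothesis supplies the two triples $\angletriple{\bwsem{\regr_2}Q}{\regr_2}{Q}$ and $\angletriple{\bwsem{\regr_1}(\bwsem{\regr_2}Q)}{\regr_1}{\bwsem{\regr_2}Q}$, which compose via \lrule{seq} with intermediate assertion $R \eqdef \bwsem{\regr_2}Q$. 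For nondeterministic choice, Lemma~\ref{lmm:bwsem-calculus} gives $\bwsem{\regr_1 \regplus \regr_2}Q = \bwsem{\regr_1}Q \cup \bwsem{\regr_2}Q$, and applying \lrule{choice} to the two triples $\angletriple{\bwsem{\regr_i}Q}{\regr_i}{Q}$ (for $i \in \{1,2\}$) provided by the induction hypothesis closes the case.

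The delicate case is the Kleene star, where I expect the infinitary rule \lrule{iter} to be the main obstacle. Here I would set $Q_n \eqdef \bwsem{\regr_0}^{\,n}Q$, so that $Q_0 = Q$ and, by Lemma~\ref{lmm:bwsem-calculus}, $\bwsem{\regr_0^\kstar}Q = \bigcup_{n \ge 0} Q_n$. For each $n$, the equality $Q_{n+1} = \bwsem{\regr_0}(Q_n)$ lets me instantiate the induction hypothesis for $\regr_0$ at postcondition $Q_n$, obtaining exactly the premise $\angletriple{Q_{n+1}}{\regr_0}{Q_n}$. Feeding this infinite family of premises into \lrule{iter} produces $\angletriple{\bigcup_{n\ge0}Q_n}{\regr_0^\kstar}{Q_0}$, which is the desired $\angletriple{\bwsem{\regr_0^\kstar}Q}{\regr_0^\kstar}{Q}$.

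The subtlety to watch is that the induction hypothesis must be phrased for \emph{all} postconditions simultaneously rather than for a fixed $Q$, since the Kleene case applies it to the moving target $Q_n$; this universal quantification over $Q$ is precisely what keeps the induction going. It is also what makes \lrule{iter} indispensable: the chain $\{Q_n\}_{n\ge0}$ may be strictly increasing, so no finite unrolling (and in particular no combination of \lrule{iter0} and \lrule{unroll}) would in general capture the full union $\bwsem{\regr_0^\kstar}Q$, which is exactly the point noted after Example~\ref{ex:sil-derivation-2}.
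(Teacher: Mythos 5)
Your proposal is correct and follows essentially the same route as the paper's proof: the canonical triple $\angletriple{\bwsem{\regr}Q}{\regr}{Q}$ is established by structural induction on $\regr$ (with $Q$ universally quantified), using Lemma~\ref{lmm:bwsem-calculus} in each composite case and setting $Q_n \eqdef \bwsem{\regr}^n Q$ for the Kleene star, and completeness then follows by one application of \lrule{cons}. Your closing remark on why the infinitary \lrule{iter} is indispensable is a correct observation that the paper also makes, just outside the proof itself.
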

\begin{proof}
	First we show that, for any $Q$, the triple $\angletriple{\bwsem{\regr}Q}{\regr}{Q}$ is provable by induction on the structure of $\regr$.

	\proofcase{$\regr = \regc$}
	We can prove $\angletriple{\bwsem{\regc}Q}{\regc}{Q}$ using \lrule{atom}.

	\proofcase{$\regr = \regr_1; \regr_2$}
	We can prove $\angletriple{\bwsem{\regr}Q}{\regr_1; \regr_2}{Q}$ with
	\[
		\infer[\lrule{seq}]
		{\angletriple{\bwsem{\regr_1}\bwsem{\regr_2}Q}{\regr_1; \regr_2}{Q}}
		{\angletriple{\bwsem{\regr_1}\bwsem{\regr_2}Q}{\regr_1}{\bwsem{\regr_2}Q} & \angletriple{\bwsem{\regr_2}Q}{\regr_2}{Q}}
	\]
	where the two premises can be proved by inductive hypothesis, and $\bwsem{\regr_1; \regr_2}Q = \bwsem{\regr_1}\bwsem{\regr_2}Q$ by Lemma~\ref{lmm:bwsem-calculus}.
	
	\proofcase{$\regr = \regr_1 \regplus \regr_2$}
	We can prove $\angletriple{\bwsem{\regr}Q}{\regr_1 \regplus \regr_2}{Q}$ with
	\[
		\infer[\lrule{choice}]
		{\angletriple{\bwsem{\regr_1} Q \cup \bwsem{\regr_2}Q}{\regr_1 \regplus \regr_2}{Q}}
		{\forall i \in \{ 1, 2 \} & \angletriple{\bwsem{\regr_i}Q}{\regr_i}{Q}}
	\]
	where the two premises can be proved by inductive hypothesis, and $\bwsem{\regr_1 \regplus \regr_2} Q = \bwsem{\regr_1} Q \cup \bwsem{\regr_2}Q$ by Lemma~\ref{lmm:bwsem-calculus}.
	
	\proofcase{$\regr = \regr^\kstar$}
	We can prove $\angletriple{\bwsem{\regr^\kstar}Q}{\regr^\kstar}{Q}$ with
	\[
		\infer[\lrule{iter}]
		{\angletriple{\bigcup\limits_{n \ge 0} \bwsem{\regr}^n Q}{\regr}{Q}}
		{\forall n \ge 0 \sdot \angletriple{\bwsem{\regr}^{n+1} Q}{\regr}{\bwsem{\regr}^{n} Q}}
	\]
	where the premises can be proved by inductive hypothesis since $\bwsem{\regr}^{n+1} Q = \bwsem{\regr} \bwsem{\regr}^{n} Q$, and $\bwsem{\regr^{\kstar}}Q = \bigcup\limits_{n \ge 0} \bwsem{\regr}^n Q$ by Lemma~\ref{lmm:bwsem-calculus}.
	
	To conclude the proof, take a triple $\angletriple{P}{\regr}{Q}$ such that $\bwsem{\regr} Q \supseteq P$. Then we can first prove the triple $\angletriple{\bwsem{\regr}Q}{\regr}{Q}$, and then using rule \lrule{cons} we derive $\angletriple{P}{\regr}{Q}$.
\end{proof}

The proof of Theorem~\ref{thm:sil-sound-complete} is a corollary of Proposition~\ref{prop:sil-correct}--\ref{prop:sil-complete}.

\subsection{Other Proofs about SIL}
\begin{proof}[Proof of Proposition~\ref{prop:sil-validity-characterization}]
	By definition of $\bwsem{\cdot}$ we have
	\[
	\bwsem{\regr} Q = \bigcup_{\sigma' \in Q} \{ \sigma \svert \sigma \in \bwsem{\regr} \sigma' \} = \{ \sigma \svert \exists \sigma' \in Q \sdot \sigma' \in \fwsem{\regr} \sigma \}
	\]

	\noindent Using this,
	\[
	P \subseteq \bwsem{\regr} Q \iff \forall \sigma \in P \sdot \sigma \in \{ \sigma \svert \exists \sigma' \in Q \sdot \sigma' \in \fwsem{\regr} \sigma \} \iff \forall \sigma \in P \sdot \exists \sigma' \in Q \sdot \sigma' \in \fwsem{\regr} \sigma
	\]
\end{proof}

\begin{proposition}[Validity of additional SIL rules]
	The rules in Figure~\ref{fig:sil-extra} are correct, that is triples provable in SIL extended with those rules are valid.
\end{proposition}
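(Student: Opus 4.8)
The plan is to prove this by structural induction on the derivation tree. Since the five additional rules of Figure~\ref{fig:sil-extra} are layered on top of the core proof system, and correctness of the core rules is already established in Proposition~\ref{prop:sil-correct} (which supplies the induction hypothesis for every subderivation not ending in an additional rule), the only new obligation is to check that each additional rule \emph{preserves validity}: assuming all premise triples satisfy the SIL validity condition $\bwsem{\regr} Q \supseteq P$, the conclusion triple satisfies it as well.

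First I would dispatch the two non-iteration rules. For \lrule{empty} the conclusion $\angletriple{\emptyset}{\regr}{Q}$ is valid because $\emptyset \subseteq \bwsem{\regr} Q$ holds trivially. For \lrule{disj}, from the valid premises $P_1 \subseteq \bwsem{\regr} Q_1$ and $P_2 \subseteq \bwsem{\regr} Q_2$ I would invoke the additivity of the backward semantics (it is lifted to sets by union) to obtain $\bwsem{\regr}(Q_1 \cup Q_2) = \bwsem{\regr} Q_1 \cup \bwsem{\regr} Q_2 \supseteq P_1 \cup P_2$, which is exactly validity of the conclusion.

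Next come the three iteration rules, all of which rest on the compositional characterization $\bwsem{\regr^\kstar} = \bigcup_{n \ge 0} \bwsem{\regr}^n$ from Lemma~\ref{lmm:bwsem-calculus}. For \lrule{iter0}, isolating the $n=0$ summand gives $\bwsem{\regr^\kstar} Q \supseteq \bwsem{\regr}^0 Q = Q$, i.e.\ validity of $\angletriple{Q}{\regr^\kstar}{Q}$. For \lrule{unroll} I would combine the sequential law $\bwsem{\regr^\kstar; \regr} = \bwsem{\regr^\kstar} \circ \bwsem{\regr}$ (again Lemma~\ref{lmm:bwsem-calculus}) with an index shift:
\[
\bwsem{\regr^\kstar; \regr} Q = \bigcup_{n \ge 0} \bwsem{\regr}^{n+1} Q = \bigcup_{m \ge 1} \bwsem{\regr}^m Q \subseteq \bigcup_{m \ge 0} \bwsem{\regr}^m Q = \bwsem{\regr^\kstar} Q ,
\]
so that a valid premise $P \subseteq \bwsem{\regr^\kstar; \regr} Q$ forces $P \subseteq \bwsem{\regr^\kstar} Q$. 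Finally \lrule{unroll\mbox{-}split} follows by merging the \lrule{unroll} inclusion with the \lrule{iter0} observation through additivity; more economically, I would instead appeal to the remark following Figure~\ref{fig:sil-extra} that \lrule{unroll\mbox{-}split} is derivable from \lrule{unroll} and \lrule{disj}, and \lrule{iter0} from \lrule{iter} and \lrule{empty}, so their validity is inherited once the primitive additions \lrule{empty}, \lrule{disj} and \lrule{unroll} are verified.

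There is no deep obstacle here: each case reduces to a one-line set inclusion. The only genuinely non-routine point is the index shift for \lrule{unroll} (and hence \lrule{unroll\mbox{-}split}), where one must notice that forgetting the $m=0$ iteration yields a proper inclusion $\bwsem{\regr^\kstar; \regr} Q \subseteq \bwsem{\regr^\kstar} Q$ rather than an equality; this is exactly the under-approximating direction, and it is the step where the argument could be inverted by carelessness, so I would take care to orient the inclusion correctly.
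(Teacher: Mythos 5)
Your proposal is correct and takes essentially the same route as the paper: the paper's proof is simply ``by structural induction on the derivation tree, extending the soundness proof of the core rules with inductive cases for the new rules,'' which is exactly your plan, and your case-by-case verifications (triviality of \lrule{empty}, additivity of $\bwsem{\cdot}$ for \lrule{disj}, the $n=0$ summand for \lrule{iter0}, the index shift for \lrule{unroll}, and derivability of \lrule{unroll\mbox{-}split}) correctly fill in the details the paper leaves implicit.
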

\begin{proof}
	The proof is by structural induction on the derivation tree, extends that of Proposition~\ref{prop:sil-correct} with inductive cases for the new rules and relies on the same characterization from Proposition~\ref{prop:sil-validity-characterization}.
\end{proof}

\subsection{Proofs about Separation SIL}\label{sec:proofs-ssil}
Given two stores $s, s' \in \Stores$ and a heap command $\regr \in \Regh$, we use the notation $s \dotsim_{\regr} s'$ to indicate that they coincide on all variables not modified by $\regr$: $\forall x \notin \modified(\regr) \sdot s(x) = s'(x)$. Please note that $\dotsim_{\regr}$ is an equivalence relation.

\begin{lemma}\label{lmm:store-only-change-mod}
	Let $(s, h) \in \Sigma$, $\regr \in \Regh$. If $(s', h') \in \fwsem{\regr}(s, h)$ then $s \dotsim_{\regr} s'$.
\end{lemma}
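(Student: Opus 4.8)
The plan is to proceed by structural induction on $\regr$, showing in each case that any non-error output store can differ from the input store only on variables in $\modified(\regr)$. Since the statement concerns only outputs $(s',h') \in \Sigma$, I may safely ignore the error state $\errstate$ wherever it occurs as an output of some sub-command.

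For the base cases, where $\regr = \regc \in \Cmdh$ is atomic, I would simply inspect the semantics in Figure~\ref{fig:ssil-model:commands}. For $\code{skip}$, $\code{b?}$, $\code{free(x)}$ and $\code{[x] := y}$ the output store is left unchanged ($s' = s$), matching $\modified(\regc) = \emptyset$. For $\code{x := a}$, $\code{x := alloc()}$ and $\code{x := [y]}$ the output store has the form $s[x \mapsto v]$ for some value $v$, so $s$ and $s'$ agree on every variable other than $\code{x}$, matching $\modified(\regc) = \{\code{x}\}$. In each atomic case the claim is then immediate.

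For the inductive cases I would use the compositional structure of $\fwsem{\cdot}$ together with the inductive definition of $\modified$. For $\regr_1;\regr_2$, any $(s',h') \in \fwsem{\regr_1;\regr_2}(s,h) = \fwsem{\regr_2}\fwsem{\regr_1}(s,h)$ arises through an intermediate state; since $\fwsem{\regr_2}\errstate = \{\errstate\}$ does not contain $(s',h') \in \Sigma$, that intermediate state must itself be a proper state $(s'',h'') \in \Sigma$. Applying the two inductive hypotheses yields $s \dotsim_{\regr_1} s''$ and $s'' \dotsim_{\regr_2} s'$, so for any $x \notin \modified(\regr_1;\regr_2) = \modified(\regr_1) \cup \modified(\regr_2)$ we obtain $s(x) = s''(x) = s'(x)$. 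The choice case $\regr_1 \regplus \regr_2$ is easier: the output lies in $\fwsem{\regr_i}(s,h)$ for some $i \in \{1,2\}$, the inductive hypothesis gives $s \dotsim_{\regr_i} s'$, and $\modified(\regr_i) \subseteq \modified(\regr_1 \regplus \regr_2)$ closes the case.

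The Kleene star $\regr^\kstar$ is where I expect the only real care to be needed, since $\fwsem{\regr^\kstar} = \bigcup_{n \ge 0} \fwsem{\regr}^n$ forces a nested induction. Using $\modified(\regr^\kstar) = \modified(\regr)$, I would prove by an inner induction on $n$ that $(s',h') \in \fwsem{\regr}^n(s,h)$ implies $s \dotsim_{\regr} s'$: the case $n = 0$ is trivial because $\fwsem{\regr}^0$ is the identity, and the inductive step reuses the sequencing argument above (propagating non-errorness through the intermediate state) combined with the outer inductive hypothesis for $\regr$. The main obstacle is thus purely one of bookkeeping, namely tracking that intermediate states remain in $\Sigma$ and that $\dotsim$ chains correctly across the differing $\modified$ sets in the sequential and iterated cases; there is no conceptual difficulty beyond the direct inspection of the atomic semantics.
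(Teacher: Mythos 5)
Your proposal is correct and follows essentially the same route as the paper: structural induction on $\regr$, with direct inspection of the atomic semantics in Figure~\ref{fig:ssil-model:commands} and chaining of $\dotsim$ through intermediate states for sequencing. The paper's proof only writes out three representative cases, whereas you explicitly handle choice and the Kleene star (with the inner induction on $n$ and the observation that $\errstate$ propagates, so intermediate states stay in $\Sigma$), which is exactly the bookkeeping the paper leaves implicit.
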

\begin{proof}
	The proof is by induction on the syntax of $\regr$. We prove here only some relevant cases.
	
	\proofcase{\code{x := a}}
	$(s', h') \in \fwsem{\code{x := a}}(s, h)$ means that $s' = s[x \mapsto \edenot{\code{a}} s]$. Particularly, this means that for all variables $y \neq x$, $s'(y) = s(y)$, which is the thesis because $\modified(\code{x := a}) = \{ x \}$.
	
	\proofcase{\code{free(x)}}
	$(s', h') \in \fwsem{\code{free(x)}}(s, h)$ means that $s' = s$, which is the thesis because $\modified(\code{free(x)}) = \emptyset$.
	
	\proofcase{$\regr_1; \regr_2$}
	$(s', h') \in \fwsem{\regr_1; \regr_2}(s, h)$ means that there exists $(s'', h'') \in \fwsem{\regr_1} (s, h)$ such that $(s', h') \in \fwsem{\regr_2}(s'', h'')$. By inductive hypothesis, since $\modified(\regr_1) \subseteq \modified(\regr_1; \regr_2)$, we have $s'' \dotsim_{\regr_1; \regr_2} s$. Analogously, $\modified(\regr_2) \subseteq \modified(\regr_1; \regr_2)$ implies $s' \dotsim_{\regr_1; \regr_2} s''$. From these, we get $s' \dotsim_{\regr_1; \regr_2} s$.
\end{proof}

The next technical proposition states some semantic properties of the assertion language to be exploited in the proof of Lemma~\ref{lmm:separation-sil-stronger-sound}.

\begin{proposition}\label{prop:sl-sat}
	Let $p \in \Asl$, $s, s' \in \Stores$, $h \in \Heaps$ and $a \in \text{AExp}$.
	\begin{enumerate}
		\item If $\forall x \in \fv(p) \sdot s(x) = s'(x)$ and $(s, h) \in \asldenot{p}$ then $(s', h) \in \asldenot{p}$.
		\item If $(s, h) \in \asldenot{p[a / x]}$ then $(s[x \mapsto \edenot{a} s], h) \in \asldenot{p}$.
	\end{enumerate}
\end{proposition}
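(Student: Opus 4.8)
The plan is to prove both statements by structural induction on the formula $p \in \Asl$, after first establishing two routine facts about the evaluation of arithmetic expressions, each by an easy induction on the structure of $a' \in \AExp$: (a)~a \emph{coincidence} property, namely that $\edenot{a'}s = \edenot{a'}s'$ whenever $s$ and $s'$ agree on $\fv(a')$; and (b)~a \emph{substitution} property, namely $\edenot{a'[a/x]}s = \edenot{a'}(s[x \mapsto \edenot{a}s])$. Before starting the main induction I would record two observations about the inductive statements themselves. First, the hypothesis of part~1 is symmetric in $s$ and $s'$, so the single implication stated is really a biconditional; this makes the case for $\lnot p$ free. Second, substitution in part~2 is \emph{not} symmetric, so to push the induction through negation I would strengthen the claim to the biconditional $(s,h) \in \asldenot{p[a/x]} \iff (s[x \mapsto \edenot{a}s], h) \in \asldenot{p}$ and prove that instead.

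For part~1, the atomic cases $a_1 \asymp a_2$, $x \mapsto a$, $x \dealloc$ and $\emp$ all follow immediately: $\emp$ does not depend on the store, while the others depend on it only through values $s(x)$ and evaluations $\edenot{a}s$ of the expressions they mention, and every variable involved lies in $\fv(p)$, so the coincidence lemma~(a) together with the assumption $s(x) = s'(x)$ for $x \in \fv(p)$ gives the result. The propositional cases $\lnot p$, $p \land q$ and the separating conjunction $p \andsep q$ are direct from the induction hypothesis, using that $\fv$ distributes over the connectives and that in $p \andsep q$ the \emph{store} is shared by both conjuncts while only the heap is split. The one case needing care is $\exists x.\,p$: given a witness $v$ for $(s,h)$, I would observe that $s[x \mapsto v]$ and $s'[x \mapsto v]$ agree on all of $\fv(p)$ --- they agree on $x$ trivially and on $\fv(p) \setminus \{x\} = \fv(\exists x.\,p)$ by hypothesis --- and then apply the induction hypothesis to transport the witness.

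For part~2 the skeleton is identical. The atomic cases reduce to the arithmetic substitution lemma~(b), applied to each expression occurring in the atom, including the base address of a points-to (evaluated through the store); the propositional and separating cases are immediate from the strengthened (biconditional) induction hypothesis. The main obstacle, and the step I would treat most carefully, is the quantifier case $(\exists y.\,p)[a/x]$. Here capture-avoidance of the substitution guarantees $y \neq x$ and $y \notin \fv(a)$, so that $(\exists y.\,p)[a/x] = \exists y.\,(p[a/x])$. Starting from a witness $v$ for the left-hand side, the induction hypothesis yields $\bigl((s[y \mapsto v])[x \mapsto \edenot{a}(s[y \mapsto v])],\,h\bigr) \in \asldenot{p}$; since $y \notin \fv(a)$ the inner evaluation collapses to $\edenot{a}s$ by~(a), and since $y \neq x$ the two store updates commute, giving $\bigl((s[x \mapsto \edenot{a}s])[y \mapsto v],\,h\bigr) \in \asldenot{p}$, which is exactly $(s[x \mapsto \edenot{a}s], h) \in \asldenot{\exists y.\,p}$. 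Everything else is mechanical, so the crux of the whole proposition is this commutation of store updates under the capture-avoidance side conditions.
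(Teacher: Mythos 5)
Your proof is correct and follows the same route as the paper, which simply states that the result is proved by structural induction on the syntax of assertions and omits all details. The details you supply — the two auxiliary lemmas on arithmetic expression evaluation, the strengthening of part~2 to a biconditional so that negation goes through, and the commutation of store updates under the capture-avoidance side conditions in the existential case — are exactly the points a full write-up of that induction would need.
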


\begin{proof}
	By structural induction on the syntax of assertions.
\end{proof}

\begin{proof}[Proof of Lemma~\ref{lmm:separation-sil-stronger-sound}]
	First, we observe that Proposition~\ref{prop:sil-validity-characterization} does not depend on the specific definition of $\fwsem{\cdot}$, thus it holds for separation SIL as well. Thanks to this, we prove the thesis through the equivalent condition
	\[
	\forall (s, h) \in \asldenot{p \andsep t} \sdot \exists (s', h') \in \asldenot{q \andsep t} \sdot (s', h') \in \fwsem{\regr} (s, h)
	\]

	The proof is by induction on the derivation tree of the provable triple $\angletriple{p}{\regr}{q}$. We prove here only some relevant cases.

	\proofcase{\lrule{assign}}
	Take $(s, h) \in \asldenot{q[a / x] \andsep t}$. Then we can split $h = h_p \bullet h_t$ such that $(s, h_p) \in \asldenot{q[a / x]}$ and $(s, h_t) \in \asldenot{t}$. Let $s' = s[x \mapsto \edenot{\code{a}} s]$, so that $(s', h) \in \fwsem{\code{x := a}} \asldenot{q[a / x] \andsep t}$. Since $\fv(t) \cap \modified(\regr) = \emptyset$, $x \notin \fv(t)$. Thus, by Proposition~\ref{prop:sl-sat}.1, $(s', h_t) \in \asldenot{t}$.
	Moreover, $(s', h_p) \in \asldenot{q}$ by Proposition~\ref{prop:sl-sat}.2.
	Hence, $(s', h_p \bullet h_t) = (s', h) \in \asldenot{q \andsep t}$.

	\proofcase{\lrule{alloc}}
	Take $(s, h) \in \asldenot{\emp \andsep t} = \asldenot{t}$. Take a location $l \notin \dom(h)$, and let $s' = s[x \mapsto l]$, $h' = h[l \mapsto s(v)]$, so that $(s', h') \in \fwsem{\code{x := alloc()}} (s, h)$.
	We can split $h' = [l \mapsto s(v)] \bullet h$ because $l \notin \dom(h)$. Since $\fv(t) \cap \modified(\regr) = \emptyset$, $x \notin \fv(t)$. Thus, by Proposition~\ref{prop:sl-sat}.1, $(s', h) \in \asldenot{t}$.
	Moreover, $(s', [l \mapsto s(v)]) = (s', [s'(x) \mapsto s'(v)])$, which satisfies $(s', [s'(x) \mapsto s'(v)]) \in \asldenot{x \mapsto v}$.
	Hence $(s', h') \in \asldenot{x \mapsto v \andsep t}$.

	\proofcase{\lrule{load}}
	Take $(s, h) \in \asldenot{y \mapsto a \andsep q[a /x] \andsep t}$. Then we know $x \notin \fv(t)$ and $h = [s(y) \mapsto \edenot{a} s] \bullet h_p \bullet h_t$, $(s, h_p) \in \asldenot{q[a / x]}$, $(s, h_t) \in \asldenot{t}$.
	Let $s' = s[x \mapsto h(s(y))] = s[x \mapsto \edenot{a} s]$.
	By Proposition~\ref{prop:sl-sat}.1, $(s', h_t) \in \asldenot{t}$.
	By Proposition~\ref{prop:sl-sat}.2, $(s', h_p) \in \asldenot{q}$.
	Lastly, since $x \notin \fv(a)$, $\edenot{a} s' = \edenot{a} s$ and $s'(y) = s(y)$, so we have $(s', [s'(y) \mapsto \edenot{a} s']) \in \asldenot{y \mapsto a}$.
	Combining these, $(s', h) = (s', [s(y) \mapsto \edenot{a} s] \bullet h_p \bullet h_t) \in \asldenot{y \mapsto a \andsep q \andsep t}$.
	The thesis follows observing that $(s', h) \in \fwsem{\code{x := [y]}} (s, h)$.
	
	\proofcase{\lrule{store}}
	Take $(s, h) \in \asldenot{x \mapsto - \andsep t}$. Then $x \notin \fv(t)$ and exists $v \in \Val$ such that $h = [s(x) \mapsto v] \bullet h_t$, $(s, h_t) \in \asldenot{t}$.
	Let $h' = h[s(x) \mapsto s(y)]$. Clearly $h' = [s(x) \mapsto s(y)] \bullet h_t$ and $(s, [s(x) \mapsto s(y)]) \in \asldenot{x \mapsto y}$.
	Hence $(s, h') \in \asldenot{x \mapsto y \andsep t}$ and $(s, h') \in \fwsem{[x] := y} (s, h)$, which is the thesis.
	
	\proofcase{\lrule{exists}}
	Take $(s, h) \in \asldenot{(\exists x . p) \andsep t}$. Then there exists a value $v \in \Val$ and decomposition $h = h_p \bullet h_t$ such that $(s[x \mapsto v], h_p) \in \asldenot{p}$ and $(s, h_t) \in \asldenot{t}$. Without loss of generality, we can assume $x \notin \fv(t)$; otherwise, we just rename it using a fresh name neither in $t$ nor in $\regr$. Hence, by Proposition~\ref{prop:sl-sat}.1, $(s[x \mapsto v], h_t) \in \asldenot{t}$. So $(s[x \mapsto v], h) \in \asldenot{p \andsep t}$.
	By inductive hypothesis on the provable triple $\angletriple{p}{\regr}{q}$ and formula $t$, there is $(s', h') \in \asldenot{q \andsep t}$ such that $(s', h') \in \fwsem{\regr} (s[x \mapsto v], h)$. Because $x \notin \fv(\regr)$, we also have $(s', h') \in \fwsem{\regr} (s, h)$, and clearly $(s', h') \in \asldenot{(\exists x . q) \andsep t}$, that is the thesis.
	
	\proofcase{\lrule{frame}}
	Take $(s, h) \in \asldenot{p \andsep t' \andsep t}$. By hypothesis, $(\fv(t' \andsep t)) \cap \modified(\regr) = (\fv(t') \cup \fv(t)) \cap \modified(\regr) = \emptyset$. Then, applying the inductive hypothesis on the provable triple $\angletriple{p}{\regr}{q}$ and the formula $t' \andsep t$ (which satisfies the hypothesis of the theorem) we get exactly the thesis.
	
	\proofcase{\lrule{seq}}
	Because of name clashes, here we assume the hypotheses of rule \lrule{seq} to be $\angletriple{p}{\regr_1}{p'}$ and $\angletriple{p'}{\regr_2}{q}$.
	Since $\modified(r_1) \cup \modified(r_2) = \modified(r_1; r_2)$, we know that $\fv(t) \cap \modified(r_1) = \fv(t) \cap \modified(r_2) = \emptyset$.
	Take $(s, h) \in \asldenot{p \andsep t}$. By inductive hypothesis on provable triple $\angletriple{p}{\regr_1}{p'}$ and formula $t$ we get that there exists $(s'', h'') \in \asldenot{p' \andsep t}$ such that $(s'', h'') \in \fwsem{\regr_1} (s, h)$. Then, by inductive hypothesis on the provable triple $\angletriple{p'}{\regr_2}{q}$ and formula $t$ again, we get $(s', h') \in \asldenot{q \andsep t}$ such that $(s', h') \in \fwsem{\regr_2} (s'', h'')$.
	The thesis follows since $(s', h') \in \fwsem{\regr_2} (s'', h'') \subseteq \fwsem{\regr_2} (\fwsem{\regr_1} (s, h)) = \fwsem{\regr_1; \regr_2} (s, h)$.
\end{proof}

\subsection{Completeness of Separation SIL}\label{sec:proofs-ssil-completeness}
Some of the following equivalences are standard, but we collect them all here for convenience.
\begin{lemma}\label{lmm:asl-equivalences}
	For all assertions $p_1, p_2, q$, variables $x, x' \in \Var$ and arithmetic expressions $a_1, a_2 \in \AExp$, the following equivalences hold:
	\begin{enumerate}
		\item $(p_1 \lor p_2) \land q \equiv (p_1 \land q) \lor (p_2 \land q)$
		\item $(p_1 \lor p_2) \andsep q \equiv (p_1 \andsep q) \lor (p_2 \andsep q)$
		\item $\exists x. (p \lor q) \equiv (\exists x. p) \lor (\exists x. q)$
		\item $\exists x. (p \land q) \equiv (\exists x. p) \land q$ \quad if $x \notin \fv(q)$
		\item $\exists x. (p \andsep q) \equiv (\exists x. p) \andsep q$ \quad if $x \notin \fv(q)$
		\item $a_1 \asymp a_2 \land (p_1 \andsep p_2) \equiv (a_1 \asymp a_2 \land p_1) \andsep p_2$
		\item $(p_1 \andsep x \mapsto x') \land (p_2 \andsep x \mapsto x') \equiv (p_1 \land p_2) \andsep x \mapsto x'$
		\item $(p_1 \andsep x \dealloc) \land (p_2 \andsep x \dealloc) \equiv (p_1 \land p_2) \andsep x \dealloc$
	\end{enumerate}
\end{lemma}
\begin{proof}
	Point (1), (3) and (4) are standard in first-order logic. Point (2), (5) and (6) are standard in separation logic \citep{DBLP:conf/lics/Reynolds02}.

	For point (7) we observe
	\begin{align*}
		&\asldenot{(p_1 \andsep x \mapsto x') \land (p_2 \andsep x \mapsto x')} \\
		&=\{ (s, h) \svert (s, h) \in \asldenot{p_1 \andsep x \mapsto x'}, (s, h) \in \asldenot{p_2 \andsep x \mapsto x'} \} \\
		&=\lbrace (s, h) \svert h = h_1 \bullet [s(x) \mapsto s(x')], (s, h_1) \in \asldenot{p_1} h = h_2 \bullet [s(x) \mapsto s(x')], (s, h_2) \in \asldenot{p_2} \rbrace \\
		&=\{ (s, h) \svert h = h' \bullet [s(x) \mapsto s(x')], (s, h') \in \asldenot{p_1}, (s, h') \in \asldenot{p_2}  \} \\
		&=\asldenot{x \mapsto x' \andsep (p_1 \land p_2)}
	\end{align*}

	Point (8) is analogous.
\end{proof}

\begin{lemma}\label{lmm:separation-assertion-rewrite}
	Let $q \in \Asl$ be a formula without $\lor$ and $\exists$, and let $x'$ be a fresh variable. Then,
	\begin{enumerate}
		\item for some $p$, $q \land (x \mapsto x' \andsep \true) \equiv x \mapsto x' \andsep p$
		\item for some $p$, $q \land (x \dealloc \andsep \true) \equiv x \dealloc \andsep p$
	\end{enumerate}
\end{lemma}
\begin{proof}
	The proof is by induction on the structure of $q$.

	\proofcase{$q = \false$}
	Take $p = \false$.

	\proofcase{$q = \true$}
	Take $p = \true$.

	\proofcase{$q = q_1 \land q_2$}
	We consider point (1) first. By inductive hypothesis, there exists $p_1$ and $p_2$ such that
	\begin{align*}
		q_1 \land (x \mapsto x' \andsep \true) &\equiv x \mapsto x' \andsep p_1 \\
		q_2 \land (x \mapsto x' \andsep \true) &\equiv x \mapsto x' \andsep p_2
	\end{align*}
	so that
	\begin{align*}
		q \land (x \mapsto x' \andsep \true) &\equiv q_1 \land q_2 \land (x \mapsto x' \andsep \true) \\
		&\equiv q_1 \land (x \mapsto x' \andsep \true) \land q_2 \land (x \mapsto x' \andsep \true) \\
		&\equiv (x \mapsto x' \andsep p_1) \land (x \mapsto x' \andsep p_2) \\
		&\equiv x \mapsto x' \andsep (p_1 \land p_2)
	\end{align*}
	where we used Lemma~\ref{lmm:asl-equivalences}.7 for the last equivalence.
	The case for point (2) is analogous using Lemma~\ref{lmm:asl-equivalences}.8 instead.

	\proofcase{$q = a_1 \asymp a_2$}
	Both points follow from Lemma~\ref{lmm:asl-equivalences}.6 by taking $p_1 = \true$ and $p_2 = x \mapsto x'$ (resp. $p_2 = x \dealloc$).

	\proofcase{$q = \emp$}
	Both formulas $\emp \land (x \mapsto x' \andsep \true)$ and $\emp \land (x \dealloc \andsep \true)$ are not satisfiable. Therefore we get the thesis with $p = \false$.

	\proofcase{$q = z \mapsto z'$}
	For point (1):
	\begin{align*}
		&\asldenot{z \mapsto z' \land (x \mapsto x' \andsep \true)} \\
		&=\{ (s, h) \svert (s, h) \in \asldenot{z \mapsto z'}, (s, h) \in \asldenot{x \mapsto x' \andsep \true} \} \\
		&=\{ (s, h) \svert h = [s(z) \mapsto s(z')], h = [s(x) \mapsto s(x')] \bullet h_t \} \\
		&=\{ (s, h) \svert h = [s(x) \mapsto s(x')], s(z) = s(x), s(z') = s(x') \} \\
		&=\asldenot{z' = x' \land z = x \land x \mapsto x'}
	\end{align*}

	For point (2), we observe that $z \mapsto z' \land (x \dealloc \andsep \true)$ is not satisfiable, so we get the thesis with $p = \false$.

	\proofcase{$q = z \dealloc$}
	For point (1), we observe that $z \dealloc{} \land (x \mapsto x' \andsep \true)$ is not satisfiable, so we get the thesis with $p = \false$.

	For point (2):
	\begin{align*}
		&\asldenot{z \dealloc \land (x \dealloc \andsep \true)} \\
		&=\{ (s, h) \svert (s, h) \in \asldenot{z \dealloc}, (s, h) \in \asldenot{x \dealloc \andsep \true} \} \\
		&=\{ (s, h) \svert h = [s(z) \mapsto \bot], h = [s(x) \mapsto \bot] \bullet h_t \} \\
		&=\{ (s, h) \svert h = [s(x) \mapsto \bot], s(z) = s(x) \} \\
		&=\asldenot{z = x \land x \dealloc}
	\end{align*}

	\proofcase{$q = q_1 \andsep q_2$}
	We consider point (1) first. By inductive hypothesis, there exists $p_1$ and $p_2$ such that
	\begin{align*}
		q_1 \land (x \mapsto x' \andsep \true) &\equiv x \mapsto x' \andsep p_1 \\
		q_2 \land (x \mapsto x' \andsep \true) &\equiv x \mapsto x' \andsep p_2
	\end{align*}
	Take $p = p_1 \andsep q_2 \lor q_1 \andsep p_2$. We have
	\begin{align*}
		\asldenot{x \mapsto x' \andsep p} = \asldenot{x \mapsto x' \andsep p_1 \andsep q_2} \cup \asldenot{x \mapsto x' \andsep q_1 \andsep p_2}
	\end{align*}
	Now consider
	\begin{align*}
		&\asldenot{q \land (x \mapsto x' \andsep \true)} \\
		&=\asldenot{(q_1 \andsep q_2) \land (x \mapsto x' \andsep \true)} \\
		&=\lbrace (s, h) \svert (s, h) \in \asldenot{x \mapsto x' \andsep \true}, h = h_1 \bullet h_2, (s, h_1) \in \asldenot{q_1}, (s, h_2) \in \asldenot{q_2} \rbrace \\
		&=\lbrace (s, h) \svert h(s(x)) = s(x'), h = h_1 \bullet h_2, (s, h_1) \in \asldenot{q_1}, (s, h_2) \in \asldenot{q_2} \rbrace
	\end{align*}
	For every state $(s, h)$ in this set, either $s(x) \in \dom(h_1)$ or $s(x) \in \dom(h_2)$: it can't be in neither because $h(s(x)) = s(x')$. Consider the former case: then $(s, h_1) \in \asldenot{x \mapsto x' \andsep \true}$, so that $(s, h_1) \in \asldenot{q_1 \land (x \mapsto x' \andsep \true)} = \asldenot{x \mapsto x' \andsep p_1}$, so that $(s, h_1 \bullet h_2) \in \asldenot{x \mapsto x' \andsep p_1 \andsep q_2}$. Analogously, in the latter case $(s, h_1 \bullet h_2) \in \asldenot{x \mapsto x' \andsep q_1 \andsep p_2}$.
	Therefore, $\asldenot{q \land (x \mapsto x' \andsep \true)} \subseteq \asldenot{x \mapsto x' \andsep p}$.

	For the other inclusion, consider
	\begin{align*}
		&\asldenot{x \mapsto x' \andsep p_1 \andsep q_2} \\
		&= \lbrace (s, h) \svert h = h_1 \bullet h_2, (s, h_1) \in \asldenot{x \mapsto x' \andsep p_1}, (s, h_2) \in \asldenot{q_2} \rbrace \\
		&= \lbrace (s, h) \svert h = h_1 \bullet h_2, (s, h_1) \in \asldenot{q_1 \land (x \mapsto x' \andsep \true)}, (s, h_2) \in \asldenot{q_2} \rbrace \\
		&= \lbrace (s, h) \svert h = h_1 \bullet h_2, h_1(s(x)) = s(x'), (s, h_1) \in \asldenot{q_1}, (s, h_2) \in \asldenot{q_2} \rbrace \\
		&\subseteq \lbrace (s, h) \svert h = h_1 \bullet h_2, h(s(x)) = s(x'), (s, h_1) \in \asldenot{q_1}, (s, h_2) \in \asldenot{q_2} \rbrace \\
		&= \asldenot{q \land (x \mapsto x' \andsep \true)}
	\end{align*}
	and analogously for $\asldenot{x \mapsto x' \andsep q_1 \andsep p_2} \subseteq \asldenot{q \land (x \mapsto x' \andsep \true)}$.

	The case for point (2) is analogous.
\end{proof}

\begin{lemma}\label{lmm:separation-sil-atom-base-completeness}
	Let $q \in \Asl$ be an assertion without $\lor$ and $\exists$, and let $\regc \in \Cmdh$. Then there exists $p \in \Asl$ such that $\asldenot{p} = \bwsem{\regc} \asldenot{q}$, and $\angletriple{p}{\regc}{q}$ is provable.
\end{lemma}
\begin{proof}
	We recall that
	\[
	\bwsem{\regc} \asldenot{q} = \{ (s, h) \svert \edenot{\regc} (s, h) \cap \asldenot{q} \neq \emptyset \}
	\]
	and that $\errstate \notin \asldenot{q}$ for any $q$.
	In the proof below, we will use the following equivalence: given a state $(s, h)$ such that $s(h(x)) \in \Val$, $(s, h) \in \asldenot{x \mapsto - \andsep \true}$. Therefore, $(s, h) \in \asldenot{q}$ if and only if $(s, h) \in \asldenot{q \land (x \mapsto - \andsep \true)}$. Using Lemma~\ref{lmm:separation-assertion-rewrite}.1, there exists a $q'$ (which depends on $q$ and $x$ but not on $(s, h)$) such that this is true if and only if $(s, h) \in \asldenot{\exists x'. (x \mapsto x' \andsep q')}$.
	Analogously (using Lemma~\ref{lmm:separation-assertion-rewrite}.2), if $s(h(x)) = \bot$, $(s, h) \in \asldenot{q}$ if and only if $(s, h) \in \asldenot{x \dealloc \andsep q'}$ for some $q'$.

	We now proceed by cases on the heap atomic command $\regc$.

	\proofcase{\code{skip}}
	We have
	\begin{align*}
		\bwsem{\regc} \asldenot{q} = \{ (s, h) \svert \edenot{\code{skip}} (s, h) \cap \asldenot{q} \neq \emptyset \} = \{ (s, h) \svert (s, h) \in \asldenot{q} \}
	\end{align*}
	So we have the thesis taking $p = q$, and we prove it by using \lrule{skip} and \lrule{frame}.

	\proofcase{\code{x := a}}
	We have
	\begin{align*}
		\bwsem{\regc} \asldenot{q} &= \{ (s, h) \svert \edenot{\code{x := a}} (s, h) \cap \asldenot{q} \neq \emptyset \} \\
		&= \{ (s, h) \svert (s[x \mapsto \edenot{\code{a}} s], h) \in \asldenot{q} \} \\
		&= \{ (s, h) \svert (s, h) \in \asldenot{q[a / x]} \}
	\end{align*}
	So we have the thesis taking $p = q[a / x]$, and we prove it by using \lrule{assign}.

	\proofcase{\code{b?}}
	We have
	\begin{align*}
		\bwsem{\regc} \asldenot{q} &= \{ (s, h) \svert \edenot{\code{b?}} (s, h) \cap \asldenot{q} \neq \emptyset \} \\
		&= \{ (s, h) \svert \edenot{\code{b}} s = \code{tt}, (s, h) \in \asldenot{q} \} \\
		&= \asldenot{q \land b}
	\end{align*}
	So we have the thesis taking $p = q \land b$, and we prove it by using \lrule{assume}.

	\proofcase{\code{x := alloc()}}
	We have
	\begin{align*}
		\bwsem{\regc} \asldenot{q} &= \{ (s, h) \svert \edenot{\code{x := alloc()}} (s, h) \cap \asldenot{q} \neq \emptyset \} \\
		&= \{ (s, h) \svert \exists l, v. h(l) = \bot, (s[x \mapsto l], h[l \mapsto v]) \in \asldenot{q} \} \\
		&= \lbrace (s, h) \svert \exists l, v. h(l) = \bot, (s[x \mapsto l], h[l \mapsto v]) \in \asldenot{\exists x'. x \mapsto x' \andsep q'} \rbrace \\
		&= \lbrace (s, h) \svert \exists l, v. h(l) = \bot, \exists v'. (s[x \mapsto l][x' \mapsto v'], h[l \mapsto v]) \in \asldenot{x \mapsto x' \andsep q'} \rbrace \\
		&= \lbrace (s, h) \svert \exists l, v. h = [l \mapsto \bot] \bullet h_q, \exists v'. \\
		&\rlap{$ \qquad\qquad\quad (s[x \mapsto l][x' \mapsto v'], [l \mapsto v]) \in \asldenot{x \mapsto x'}, $} \\
		&\rlap{$ \qquad\qquad\quad (s[x \mapsto l][x' \mapsto v'], h_q) \in \asldenot{q'} \rbrace$} \\
		&= \lbrace (s, h) \svert \exists l, v. h = [l \mapsto \bot] \bullet h_q, (s[x \mapsto l][x' \mapsto v], h_q) \in \asldenot{q'} \rbrace \\
		&= \asldenot{\exists i. \exists x'. i \dealloc \andsep q'[i / x]}
	\end{align*}
	for fresh variables $i$. So we have the thesis taking $p = \exists i. \exists x'. i \dealloc \andsep q'[i / x]$. To prove the triple $\angletriple{p}{\code{x := alloc()}}{q}$ we first observe the following chain of implications:
	\begin{align*}
		q &\impliedby \exists x'. x \mapsto x' \andsep q' &[\text{Lemma~\ref{lmm:separation-assertion-rewrite}.1}]\\
		&\quad\,\equiv\; \exists i. \exists x'. x \mapsto x' \andsep q' &[i \text{ fresh}] \\
		&\impliedby \exists i. \exists x'. x = i \land (x \mapsto x' \andsep q') & \\
		&\impliedby \exists i. \exists x'. x = i \land (x \mapsto x' \andsep q'[i / x]) &[\text{replacing } i = x] \\
		&\impliedby \exists i. \exists x'. (x = i \land x \mapsto x') \andsep q'[i / x] &[\text{Lemma~\ref{lmm:asl-equivalences}.6}]
	\end{align*}
	Then we prove the triple with the following derivation tree:
	\[
	\infer[\lrule{cons}]{\angletriple{p}{\regc}{q}}{
		\infer[\lrule{exists} \text{ x2}]{\angletriple{p}{\regc}{\exists i. \exists x'. (x = i \land x \mapsto x') \andsep q'[i / x]}}{
			\infer[\lrule{frame}]{\angletriple{i \dealloc \andsep q'[i / x]}{\regc}{(x = i \land x \mapsto x') \andsep q'[i / x]}}{
				x \notin \fv(q'[i / x])
				&\infer[\lrule{alloc}]{ \angletriple{i \dealloc}{\regc}{x = i \land x \mapsto x'} }{}
			}
		}
	}
	\]

	\proofcase{\code{free(x)}}
	We have
	\begin{align*}
		\bwsem{\regc} \asldenot{q} &= \{ (s, h) \svert \edenot{\code{free(x)}} (s, h) \cap \asldenot{q} \neq \emptyset \} \\
		&= \{ (s, h) \svert h(s(x)) \in \Val, (s, h[s(x) \mapsto \bot]) \in \asldenot{q} \} \\
		&= \{ (s, h) \svert h(s(x)) \in \Val, (s, h[s(x) \mapsto \bot]) \in \asldenot{x \dealloc \andsep q'} \} \\
		&= \lbrace (s, h) \svert h(s(x)) \in \Val, h = [s(x) \mapsto h(s(x))] \bullet h_q, (s, h_q) \in \asldenot{q'} \rbrace \\
		&= \asldenot{x \mapsto - \andsep q'}
	\end{align*}
	So we have the thesis taking $p = x \mapsto - \andsep q'$, and we prove it by using \lrule{free} and \lrule{frame} with frame $q'$ (this is always possible because $\modified(\code{free(x)}) = \emptyset$).

	\proofcase{\code{x := [y]}}
	We have
	\begin{align*}
		\bwsem{\regc} \asldenot{q} &= \{ (s, h) \svert \edenot{\code{x := [y]}} (s, h) \cap \asldenot{q} \neq \emptyset \} \\
		&= \{ (s, h) \svert h(s(y)) \in \Val, (s[x \mapsto h(s(y))], h) \in \asldenot{q} \} \\
		&= \lbrace (s, h) \svert h(s(y)) \in \Val, (s[x \mapsto h(s(y))], h) \in \asldenot{\exists y'. y \mapsto y' \andsep q'} \rbrace \\
		&= \lbrace (s, h) \svert h(s(y)) \in \Val, h = [s(y) \mapsto h(s(y))] \bullet h_q, (s[x \mapsto h(s(y))], h_q) \in \asldenot{q'} \rbrace \\
		&= \asldenot{\exists y'. (y \mapsto y' \andsep q'[y' / x])}
	\end{align*}
	So we have the thesis taking $p = \exists y'. (y \mapsto y' \andsep q'[y' / x])$, and we prove it by using \lrule{load} with $a = y'$ and \lrule{exists} because $y'$ is fresh.

	\proofcase{\code{[x] := y}} We have
	\begin{align*}
		\bwsem{\regc} \asldenot{q} &= \{ (s, h) \svert \edenot{\code{[x] := y}} (s, h) \cap \asldenot{q} \neq \emptyset \} \\
		&= \{ (s, h) \svert h(s(x)) \in \Val, (s, h[s(x) \mapsto s(y)]) \in \asldenot{q} \} \\
		&= \lbrace (s, h) \svert h(s(x)) \in \Val, (s, h[s(x) \mapsto s(y)]) \in \asldenot{\exists x'. x \mapsto x' \andsep q'} \rbrace \\
		&= \lbrace (s, h) \svert h(s(x)) \in \Val, h = [s(x) \mapsto h(s(x))] \bullet h_q, (s, h_q) \in \asldenot{q'} \rbrace \\
		&= \asldenot{x \mapsto - \andsep q'}
	\end{align*}
	So we have the thesis taking $p = x \mapsto - \andsep q'$. To prove the triple $\angletriple{p}{\code{[x] := y}}{q}$, we first prove $\angletriple{p}{\code{[x] := y}}{x \mapsto y \andsep q'}$ by using \lrule{store} and \lrule{frame} with frame $q'$ (this is always possible because $\modified(\code{[x] := y}) = \emptyset$). Then we observe that $x \mapsto y \implies \exists x'. x \mapsto x'$, so that we get $\angletriple{p}{\code{[x] := y}}{q}$ by using \lrule{cons}.
\end{proof}

\begin{proof}[Proof of Theorem~\ref{th:separation-sil-sequential-complete}]
	First we fix $q$ and prove, by induction on the structure of $\regr$, that there exists $p \in \Asl$ such that $\asldenot{p} = \bwsem{\regr} \asldenot{q}$, and $\angletriple{p}{\regr}{q}$ is provable.

	\proofcase{$\regr = \regc$}
	First, we transform $q$ in a normal form: we rename all quantified variables to fresh names, and use Lemma~\ref{lmm:asl-equivalences} (points 1-5) to lift disjunctions to the top, then existential quantifiers. Thus, without loss of generality, we can assume that $q$ is a disjunction of existentially quantified formulas that don't contain $\lor$ and $\exists$. Moreover, if we have a proof for each one of these formulas without $\lor$ and $\exists$, we can combine them using rules \lrule{disj} and \lrule{exists} to get a proof for the original $q$. Therefore, again without loss of generality, we can consider only the case in which $q$ does not contain $\lor$ and $\exists$.
	This case is exactly Lemma~\ref{lmm:separation-sil-atom-base-completeness}, so we conclude the inductive step.

	\proofcase{$\regr = \regr_1; \regr_2$}
	By inductive hypothesis on $\regr_2$, we know that there exists an assertion $t \in \Asl$ such that $\asldenot{t} = \bwsem{\regr_2} \asldenot{q}$ and $\angletriple{t}{\regr_2}{q}$ is provable. By inductive hypothesis on $\regr_1$, we know that there exists an assertion $p \in \Asl$ such that $\asldenot{p} = \bwsem{\regr_1} \asldenot{t}$ and $\angletriple{p}{\regr_1}{t}$ is provable. Now $\asldenot{p} = \bwsem{\regr_1} \asldenot{t} = \bwsem{\regr_1} \bwsem{\regr_2} \asldenot{q} = \bwsem{\regr_1; \regr_2} \asldenot{q}$ and we can prove $\angletriple{p}{\regr_1; \regr_2}{q}$ using \lrule{seq} and the two proofs given by the inductive hypothesis.

	\proofcase{$\regr = \regr_1 \regplus \regr_2$}
	For $i= 1, 2$, by inductive hypothesis on $\regr_i$ we know that there exists an assertion $p_i \in \Asl$ such that $\asldenot{p_i} = \bwsem{\regr_i} \asldenot{q}$ and $\angletriple{p_i}{\regr_i}{q}$ is provable. Therefore $\asldenot{p_1 \lor p_2} = \asldenot{p_1} \cup \asldenot{p_2} = \bwsem{\regr_1} \asldenot{q} \cup \bwsem{\regr_2} \asldenot{q} = \bwsem{\regr_1 \regplus \regr_2} \asldenot{q}$ and we can prove $\angletriple{p_1 \lor p_2}{\regr_1 \regplus \regr_2}{q}$ using \lrule{choice} and the two proofs given by the inductive hypothesis.

	Now take any $p, q \in \Asl$ such that $\bwsem{\regr} \asldenot{q} \supseteq \asldenot{p}$. By the proof above we know that there exists $p'$ such that $\bwsem{\regr} \asldenot{q} = \asldenot{p'}$ and $\angletriple{p'}{\regr}{q}$ is provable. Since $\asldenot{p} \subseteq \asldenot{p'}$, the implication $p \implies p'$ holds. Using the oracle for this implication we can prove the triple $\angletriple{p}{\regr}{q}$ using \lrule{cons} and the proof of $\angletriple{p'}{\regr}{q}$.
\end{proof}

\end{document}